\documentclass[11pt,letterpaper]{article}
\usepackage[T1]{fontenc}

\title{Complexity of Strong Popularity in Additively Separable Hedonic Games}

\date{}

\author{
Matan Gilboa\\
\\
University of Oxford
}

\usepackage{amsthm} 
\usepackage{amsmath}
\usepackage{amssymb}
\usepackage{graphicx}
\usepackage{dsfont}
\usepackage{appendix}

\usepackage{algorithm}
\usepackage{algpseudocode}
\usepackage[margin=1in,paper=letterpaper]{geometry}
\usepackage{multirow}
\usepackage{natbib}
\usepackage{setspace}
\usepackage{tikz}
\usetikzlibrary{positioning, shapes, fit, backgrounds, calc, arrows.meta}

\usepackage[unicode, hidelinks]{hyperref}
\usepackage[nameinlink]{cleveref}

\usepackage[most]{tcolorbox}

\newtcolorbox[auto counter, number within=section]{wbox}[1][]{%
  colback=white,
  colframe=black,
  fonttitle=\bfseries,
  title=Problem~\thetcbcounter: #1,
  boxrule=0.5pt,
  boxsep=1pt,        
  left=2pt,          
  right=2pt,
  top=2pt,
  bottom=2pt,
  before skip=4pt,   
  after skip=4pt,    
  arc=0pt            
}

\newtheorem{theorem}{Theorem}[section]
\newtheorem{lemma}[theorem]{Lemma}

\newtheorem{definition}{Definition}

\crefname{wbox}{Problem}{Problems}
\Crefname{wbox}{Problem}{Problems}
\crefname{prop}{proposition}{propositions}   
\Crefname{prop}{Proposition}{Propositions}  
\crefname{lemma}{lemma}{lemmas}   
\Crefname{lemma}{Lemma}{Lemmas}
\crefname{observation}{observation}{observations}   
\Crefname{observation}{Observation}{Observations}

\usepackage{xcolor}
\definecolor{mygray}{RGB}{218,215,203}

\usepackage{todonotes}
\presetkeys{todonotes}{inline}{}

\newcommand{\RR}{\ensuremath{\mathbb{R}}}
\DeclareMathOperator{\sgn}{sgn} 

\newcommand{\fCC}{\ensuremath{\mathfrak{C}}} 
\newcommand{\CC}{\ensuremath{\mathcal{C}}} 
\newcommand{\GG}{\ensuremath{\mathcal{G}}} 
\newcommand{\XX}{\ensuremath{\mathtt{x}}} 
\newcommand{\YY}{\ensuremath{\mathtt{y}}} 
\newcommand{\OO}{\ensuremath{\mathcal{O}}} 

\newcommand{\bs}{\setminus}
\newcommand{\reduce}{\ensuremath{\leq_P}}
\newcommand{\bits}[1]{\ensuremath{\{0,1\}^{#1}}}

\newcommand{\complexityclass}[1]{\ensuremath{\bf{#1}}}
\newcommand{\NP}{\complexityclass{NP}}
\newcommand{\coNP}{\complexityclass{coNP}}
\newcommand{\Pclass}{\complexityclass{P}}

\newcommand{\PNP}{\complexityclass{\Pclass^{NP}}}
\newcommand{\PCW}{\complexityclass{PCW}}
\newcommand{\PTW}{\complexityclass{PTW}}
\newcommand{\PMA}{\complexityclass{PMA}}

\newcommand{\SymP}{\complexityclass{S_2^P}}

\newcommand{\ZPPNP}{\complexityclass{ZPP^{NP}}}

\newcommand{\Piclass}[1]{\complexityclass{\Pi_{#1}^P}}

\newcommand{\Sig}[1]{\complexityclass{\Sigma_{#1}^P}}
\newcommand{\SigPi}[1]{\complexityclass{\Sig{#1}\cap\Piclass{#1}}}

\newcommand{\CKTCondorcet}{\textsc{Ckt-Condorcet}}

\newcommand{\ASHG}{\textsc{Ashg-Strong-Popularity}}

\newcommand{\gt}{\ensuremath{g^1}}
\newcommand{\gf}{\ensuremath{g^0}}
\newcommand{\ggt}{\ensuremath{\hat{g}^1}}
\newcommand{\ggf}{\ensuremath{\hat{g}^0}}
\newcommand{\asst}{\ensuremath{a^1}}
\newcommand{\assf}{\ensuremath{a^0}}
\newcommand{\ot}{\ensuremath{o^1}}
\newcommand{\of}{\ensuremath{o^0}}

\newcommand{\ls}{\ensuremath{\ell^a}}
\newcommand{\lss}{\ensuremath{\hat{\ell}^a}}
\renewcommand{\lg}{\ensuremath{\ell^g}}
\newcommand{\lw}{\ensuremath{\ell^{w}}}
\newcommand{\lww}{\ensuremath{\hat{\ell}^{w}}}
\newcommand{\lz}{\ensuremath{\ell^{z}}}
\newcommand{\lzz}{\ensuremath{\hat{\ell}^{z}}}
\newcommand{\Lw}{\ensuremath{L^{w}}}
\newcommand{\Lz}{\ensuremath{L^{z}}}
\newcommand{\lgg}{\ensuremath{\hat{\ell}^g}}
\newcommand{\La}{\ensuremath{L^a}}
\newcommand{\Lg}{\ensuremath{L^g}}

\newcommand{\Fa}{\ensuremath{F^a}}
\newcommand{\Ta}{\ensuremath{T^a}}
\newcommand{\Fg}{\ensuremath{F^g}}
\newcommand{\Tg}{\ensuremath{T^g}}

\newcommand{\wand}{\ensuremath{w}}
\newcommand{\zand}{\ensuremath{z}}

\newcommand{\wwand}{\ensuremath{\hat{w}}}
\newcommand{\zzand}{\ensuremath{\hat{z}}}

\newcommand{\Wand}{\ensuremath{W}}
\newcommand{\Zand}{\ensuremath{Z}}

\newcommand{\GD}{\ensuremath{GD}}
\newcommand{\GGD}{\ensuremath{GG}}
\newcommand{\AGD}{\ensuremath{AG}}

\newcommand{\MC}{\ensuremath{MC}}
\newcommand{\ga}{\ensuremath{p}}

\begin{document}

\maketitle

\begin{abstract}
In a hedonic game, agents need to be partitioned into coalitions, and have a preference order over partitions.
A partition is called strongly popular if it beats any other partition in a majority vote among the agents.
We focus on the fundamental class of additively separable hedonic games (ASHGs), where agents have additive valuations that induce their preferences.
We prove that determining the existence of strongly popular partitions in ASHGs is complete for \PCW{}, a recently introduced complexity class which lies in between \PNP{} and \SymP{} \cite{GGKN2025unambiguity}.
This settles an open problem by \cite{BrBu20a,BG24popularity}.
\end{abstract}
\newpage

\section{Introduction}
\label{sec:introduction}
We study scenarios in which a set of agents needs to be partitioned into \emph{coalitions}, where each agent expresses a preference order over the coalitions she may be a part of. 
Such settings are well-motivated by various real-life scenarios such as forming working groups or political parties, establishing research collaboration, international agreements, and custom unions \cite{Ray07a}.
Further adjacent examples are community detection and clustering, central to social science and machine learning \cite{Newm04a,CLMP22a}.
These are typically studied under the framework of hedonic games---coalition formation games in which agents care only about the coalition they are in, studied in, e.g. \cite{DrGr80a,BKS01a,CeRo01a}.

Since the number of coalitions induced from a set of $n$ agents is exponential in $n$, one needs to address the question of how to concisely represent the agents' preferences.
A common approach is to focus on \emph{cardinal} hedonic games, in which the agents assign an individual value to each other agent, and the utility they obtain in a coalition $S$ is some aggregation of the values they assign to the members of $S$.
One prominent such class is \emph{additively separable hedonic games} (ASHGs), on which we focus in this work, where the aggregation of each agent is the sum of the values she assigns to the members of her coalition \cite{BoJa02a}.

An important question is: What defines a desirable partition?
As is typically the case in game-theoretic settings there are various ways to define desirable outcomes, and we often seek stable outcomes which are unlikely to fall apart.
The literature offers a variety of stability notions aiming to capture this idea, such as Nash stability, individual stability, contractual stability, core stability and more.
In this work, we focus on the stability notion called \emph{strong popularity}.
Originally introduced by \cite{Gard75a} in the context of bipartite matching (which is a subclass of hedonic games), strong popularity aims to capture the principle of decision making under majority votes, which arguably has a very natural appeal.
In more detail, a partition is called popular if it weakly beats any other partition in a majority vote among the agents, and it is called strongly popular if it strictly beats any other partition.
In the more general context of voting theory, strongly popular partitions can be seen as the Condorcet winners of hedonic games.
Introduced by \cite{Cond85a}, Condorcet winners are candidates in a voting scheme which win a majority vote among the voters against any other candidate.

A somewhat surprising property of Condorcet winners and, indeed, strongly popular partitions, is that they need not exist: majority votes may yield winning-cycles between candidates, and thus there does not necessarily exist a candidate that is the ``most popular''. 
This is known as the Condorcet paradox  \cite{Cond85a}.
Since additionally the number of possible partitions in a given hedonic game is exponential in its description size, it is also computationally non-trivial to determine if such a partition exists.
In this work, we are interested precisely in this decision problem in the context of ASHGs, denoted \ASHG{}.
This problem was shown to be \coNP{}-hard in \cite{BrBu20a}, and its precise complexity was left as an open question by \cite{BrBu20a,BG24popularity}. 
In this work, we resolve this question.

In \cite{BG24popularity}, it is shown that determining the existence of (weakly) popular partitions is \Sig{2}-complete in ASHGs as well as fractional hedonic games (FHGs) (see \Cref{sec:Related_Literature}).
This highlights an important property which is also apparent in strong popularity, namely that it is not clear how to even verify whether a given partition is popular or strongly popular, as the naive approach would require comparing it against all other partitions, which is a \coNP{} task.
This stands in contrast with numerous other stability notions such as Nash, individual, or contractual stability, which can all be immediately included in \NP{}.
The complexity of strong popularity is left as an open question in both \cite{BG24popularity} and \cite{BrBu20a}, the latter work also showing that it is \coNP{}-hard.

A second and arguably even more intriguing property of strongly popular partitions, this time not shared with its weak-popularity cousin, is that they must be unique whenever they exist.
Indeed, no two strongly popular partitions may co-exist as they cannot both beat each other in a majority vote.
The property of having none or one witness for a decision problem is called \emph{unambiguity}.
In the recent work of \cite{GGKN2025unambiguity}, it is noted that unambiguity may have affect on a problem's computational complexity, as a reduction from an ambiguous problem could not be \emph{parsimonious}, namely it could not hope to have a mapping that preserves the number of solutions, as is often the case in reductions \cite{hartmanis1976isomorphisms,simon1977difference,valiant1974reduction}.
Notably, this may explain why the complexity of weak popularity was successfully characterized while that of strong popularity has not yet been.

Consequently, several unambiguous complexity classes in-between the first and second levels of the polynomial hierarchy are introduced by \cite{GGKN2025unambiguity}.
One of those classes is denoted \PCW{}, for Polynomial Condorcet Winner.
In comparison with more well-known complexity classes, they show that \PCW{} sits above \PNP{}---the class of decision problems solvable in polynomial time given access to an \NP{} oracle (which contains both \NP{} and \coNP{}); and below \SymP{}---the class of symmetric alternations, which roughly captures problems in $\SigPi{2}$ whose yes- and no-witnesses are defined symmetrically; \SymP{} is shown in \cite{cai2007s2p} to be contained in \ZPPNP{}, the class of problems solvable in probabilistic polynomial time given access to an \NP{} oracle, which in turn is a subclass of \SigPi{2}.
In the current work, we show that \PCW{} is the correct class to capture strong popularity in ASHGs.

\begin{theorem}
\label{thm:ashg_pop_pcw}
It is \PCW{}-complete to determine whether a given additively separable hedonic game admits a strongly popular partition.
\end{theorem}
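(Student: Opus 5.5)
The proof has two parts: membership in \PCW{} and \PCW{}-hardness. I take \PCW{} as defined in \cite{GGKN2025unambiguity}: its canonical complete problem is \CKTCondorcet{}. An instance is a circuit that, given two candidate strings $x,y$ and a voter index $i$, outputs the preference of voter $i$ between $x$ and $y$. The question is whether some candidate beats every other candidate in a strict majority vote. By definition, a problem is in \PCW{} exactly when it reduces in polynomial time to \CKTCondorcet{}.

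\textbf{Membership.} Encode each partition of the $n$ agents as a string of polynomial length. Strings that are not valid partitions get a canonical decoding, for instance map every invalid string to the singleton partition, or make such strings universally dispreferred by a dummy voter block, so that they can never be Condorcet winners and never affect which partition is the winner.

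The voters are the agents. Voter $i$ compares two partitions by computing her additive utility in each, which is clearly a polynomial-size circuit. A strongly popular partition is then exactly a Condorcet winner of this circuit instance. Some care is needed with ties and with the encoding redundancy: if several strings encode the same partition, each such string must still beat every other string. I would handle this with a normal form: decode every string to a canonical encoding, and let a tie-breaking pseudo-voter, or a small weighting, make the canonical string win against its own duplicates. This should be routine.

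\textbf{Hardness.} This is the main obstacle. I would reduce from \CKTCondorcet{}, possibly first passing to a restricted normal form from \cite{GGKN2025unambiguity}. One natural choice is Condorcet winners among pairs of bitstrings with margins amplified, so that a winner beats every competitor by a large margin and every non-winner is beaten by a large margin.

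Given a circuit instance, the ASHG would have three kinds of gadgets.
\begin{itemize}
\item \emph{Assignment gadgets.} For each bit of a candidate there are agents $\ot,\of$. Their coalition choices encode the bit, with strong mutual values forcing every plausibly popular partition to be ``well-formed''.
\item \emph{Gate gadgets.} Agents simulate the evaluation of each gate, for example \wand, \zand{} for AND gates. Here the difficulty is that there is a single partition, not a pair $(x,y)$. The comparison between two partitions $\pi,\pi'$ must itself simulate a comparison of candidates $x(\pi)$ and $x(\pi')$.
\item \emph{Voter blocks.} Large groups of agents whose utility difference between $\pi$ and $\pi'$ reflects the circuit's verdict on voter $i$.
\end{itemize}
The key steps, in order, are the following.
\begin{enumerate}
\item Show that any partition that is not well-formed is beaten by some well-formed repair, either a local fix or a correct gate evaluation. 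Hence a strongly popular partition must be well-formed.
\item For well-formed partitions $\pi,\pi'$ encoding $x,y$, show that the majority vote is determined, up to the amplified margin, by the circuit's majority between $x$ and $y$.
\item Handle the fact that the circuit's two inputs are not both present in one partition. I expect this to be the crux. I would try to make the partition encode a candidate together with a ``challenger-response'' structure, so that agents' relative utilities between two encodings reproduce the pairwise votes. This could be done via additive valuations that are linear in the bits, after first transforming the circuit so that each voter's preference is expressed as a comparison of sums of bit-dependent weights (rewriting the gates with auxiliary agents, in a Tseitin-like style).
\end{enumerate}

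Finally, I would verify both directions: a Condorcet winner $x$ yields a strongly popular partition, and a strongly popular partition decodes to a Condorcet winner. The argument must maintain unambiguity, so that duplicate encodings of the same $x$ do not prevent strict wins. Tie-breaking weights on the gadget agents would ensure this.
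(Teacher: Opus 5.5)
Your membership argument matches the paper's: each agent becomes a circuit computing its additive utility in the decoded partition. The single tie-breaking voter you describe correctly handles duplicate encodings, a detail the paper passes over.

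The hardness half has a genuine gap, and it starts from a misreading of \CKTCondorcet{}. An instance is a list of $m$ circuits $\CC_1,\dots,\CC_m$, one per voter, each mapping a \emph{single} candidate $\XX\in\{0,1\}^n$ to a number. The vote between $\XX$ and $\YY$ is $\sum_j\sgn(\CC_j(\XX)-\CC_j(\YY))$. No circuit takes a pair of candidates and a voter index. Your variant, which permits intransitive and possibly exponentially many voters, is not the problem defining \PCW{}, and it is not clear it even lies in \PCW{}.

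Consequently the step you call the crux, making one partition simulate a comparison of two candidates via a ``challenger-response'' structure, is a non-problem. Meanwhile the idea that actually drives the reduction is missing:
a partition encodes one string only;
each gate gadget puts either its true or its false representative into a main coalition, held together by a block of more than $|N|/2$ $X$-agents;
the other representative sits with a private alternative agent that gives it exactly the same utility;
a single voter agent per circuit values the true representative $o^1_j$ of the $j$-th output bit at $2^j$.
Between any two canonically built, correctly evaluating partitions, every non-voter agent is then exactly indifferent. Voter $v_j$'s utility is a strictly increasing function of $\CC_j(\XX)$, so the agents' vote coincides with the circuits' vote. Exactness is essential because a Condorcet string may win by a margin of only $1$. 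A vote determined only ``up to an amplified margin'' is therefore the wrong target, and no amplified normal form is needed.

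Second, even granting your Steps 1 and 2, the direction ``Condorcet string $\Rightarrow$ strongly popular partition'' is not covered. Step 1 shows that a malformed $\pi$ is beaten by \emph{some} repair. That suffices to show a strongly popular partition is well formed. It does not show that the partition $\pi^*$ built from the Condorcet string beats $\pi$, because the majority relation is not transitive.

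The paper instead compares $\pi^*$ directly with an arbitrary $\pi\neq\pi^*$ satisfying $\phi(\pi^*,\pi)\le 0$. It assumes without loss of generality that $\pi$ is Pareto-optimal, since a Pareto improvement cannot increase $\phi(\pi^*,\cdot)$. It then derives $X\cup V\subseteq\MC$ and $L\cap\MC=\emptyset$, and proves three margin bounds:
$\phi\ge 0$ on every gadget;
$\phi\ge m+1$ on any invalid assignment gadget, which is what the $m$ replicas are for;
$\phi_{\GGD_i}(\pi^*,\pi)\ge 2$ for the first gate not complying with the string read off $\pi$, which outweighs that circuit's single voter.

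Your unambiguity concern, that partitions encoding the same string as $\pi^*$ must still lose strictly, is settled not by tie-breaking weights. The paper shows that every gadget agent has margin exactly $0$, which pins down the replicas and the $W$/$Z$ agents and forces $\pi=\pi^*$. None of this accounting appears in your proposal. It is precisely where the concrete valuations do their work: the $-\infty$ edges, the weights $1,2,3$ in the And-gadget, and the one-way replicas.
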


\PCW{} is defined as the set of all problems reducible to a problem called \CKTCondorcet{}, in which candidates are represented by length-$n$ bit strings and voters by Boolean circuits. 
Each Boolean circuit gets a candidate as input, and outputs a numerical value (another length-$n$ bit string) indicating the merit of the candidate in the eyes of that voter.
This induces a (weak) total preference order for the voter over the different candidates.
It is immediate that this class contains \ASHG{}, and in fact it contains the decision problem of strong popularity corresponding to \emph{any} cardinal hedonic game, since the output of a circuit corresponds precisely to the cardinal value an agent assigns to a candidate.
Thus, for instance, strong popularity in additively separable, fractional, and modified fractional hedonic games are all contained in \PCW{}.

Showing \PCW{}-hardness is a much more intricate task.
It requires converting the Boolean gates inside the circuits comprising an instance of \CKTCondorcet{} into agents in a hedonic game in a way that transfers the strong popularity of an $n$-bit string to the strong popularity of some corresponding partition of the agents in the constructed game.
We set out to do precisely this, for the case of additively separable preferences.

The rest of the paper is organized as follows.
In \Cref{sec:Related_Literature} we give an account of related literature, and in \Cref{sec:prelims} we establish necessary notation and definitions. 
In \Cref{sec:ASHG_reduction_setup} we describe the reduction, and then prove its correction in \Cref{sec:condorcet_implies_popular,sec:popular_implies_condorcet}, each section proving one direction of the required if and only if.

\section{Related Literature}
\label{sec:Related_Literature}
The first works to consider the framework of hedonic games were \cite{DrGr80a}, and later\cite{BoJa02a}, \cite{BKS01a}, and \cite{CeRo01a}.
The book chapters by \cite{AzSa15a} and \cite{BER24a} provide an overview of the topic.

In a general model of hedonic games, agents need to express preferences over exponentially many coalitions, causing a difficulty to concisely represent the game.
To overcome this, succinct classes of hedonic games have been proposed, many of which are based on cardinal valuations, as in ASHGs.
Other examples include fractional hedonic games (FHGs) \cite{ABB+17a} and modified fractional hedonic games \cite{Olse12a}.
In both of the above, an agent's utility is the average value she assigns to other members of her coalition, where in the latter an additional value of $0$ to herself is included in the average.
Other simpler classes of hedonic games are roommates and flatmates games where the size of the coalitions is bounded by $2$ and $3$ respectively \cite{aziz2013pareto,GMSZ21a,BIM10a,BrBu20a}, and friends-and-enemies games where valuations are in $\{-1,1\}$ (or in $\{-1,0,1\}$ when neutrality is allowed) \cite{chen2023hedonic,ohta2017core,dimitrov2006simple}.

As briefly mentioned in \Cref{sec:introduction}, various solution concepts aiming to describe stable partitions have been proposed in the literature.
Interestingly, there is a typical dichotomy based on whether we define stability with respect to single-agent deviations, or group deviations.
Stability notions based on single-agent deviations often lead to \NP{}-completeness. 
For instance, in ASHGs the existence problems corresponding to Nash-stability, individual stability, and contractual stability are \NP{}-complete \cite{SuDi10a,BBT23a}, and in FHGs the same is shown for Nash stability and individual stability \cite{BBS14a}.
In contrast, in ASHGs the existence problems corresponding to the core, strict core, and weak popularity are \Sig{2}-complete \cite{SuDi10a,ABS11c,Woeg13a,BG24popularity}, and in FHGs the same is shown for the core and weak popularity \cite{ABB+17a,BBS14a,BG24popularity}.

While we presently show that strong popularity yields a complexity characterization different than the above solution concepts, it is interesting to note that in ASHGs and FHGs, \emph{verifying} that a given partition is stable is \coNP{}-complete with respect to both weak and strong popularity \cite{BrBu20a}.

The existence of strong popularity has been previously studied.
In ASHGs and FHGs it is shown in \cite{BrBu20a} that this problem is \coNP{}-hard, though completeness was out of reach without the proper framework to handle its unambiguity---the property of having at most one witness.
\cite{GGKN2025unambiguity} study unambiguous problems within \Sig{2}. 
They introduce several complexity classes based on different combinatorial criteria that induce unambiguity.
Apart from the class \PCW{}, they define a more general class termed \PTW{} (Polynomial Tournament Winner), based on the problem of determining the existence of a source in a tournament; and a separate class \PMA{} (Polynomial Majority Argument) based on the principle that at most one entity can claim a majority of some shared commodity.
All three classes are upper bounded by \SymP{}, implying they are significantly easier than \Sig{2}.

\section{Preliminaries}
\label{sec:prelims}
Let $N$ be a set of agents. 
A \emph{coalition} is a non-empty subset of $N$. A \emph{singleton} coalition is a coalition of size one, and the \emph{grand} coalition is simply $N$.
Let $\mathcal{N}_i=\{S\subseteq N\colon i\in S\}$ denote the set of all coalitions to which agent $i$ belongs. 
A \textit{coalition structure}, or a \textit{partition}, is a partition $\pi$ of $N$ into coalitions. 
For an agent $i\in N$, we denote by $\pi(i)$ the coalition $i$ belongs to in $\pi$. 

A \textit{hedonic game} is a pair $\langle N,\succsim \rangle$, where ${\succsim}={(\succsim_i)}_{i\in N}$ is a preference profile specifying for each agent $i$ their preferences as a complete and transitive preference order $\succsim_i$ over $\mathcal{N}_i$. 
In hedonic games, agents' preferences are only affected by the members of their own coalition, and therefore $\succsim_i$ induces an underlying preference order over partitions as well, given by $\pi\succsim_i\pi'$ if and only if $\pi(i)\succsim_i\pi'(i)$.
For coalitions $S,S'\in \mathcal{N}_i$, we say that agent $i$ \textit{weakly prefers} $S$ over $S'$ if $S\succsim_i S'$.
Moreover, we say that $i$ \textit{prefers} $S$ over $S'$ if $S\succ_i S'$.
We use the same terminology for preferences over partitions.

While generally agents' preferences may be arbitrary, in this paper we focus on \emph{additively separable} hedonic games (ASHG), following \cite{BoJa02a}. 
An ASHG is specified by a pair $\langle N,\mathbf{v} \rangle$ where $N$ is a set of agents, and $\mathbf{v} = (\mathbf{v}_i\colon N\to \RR)_{i\in N}$ is a collection of \textit{valuation functions}. 
The quantity $\mathbf{v}_i(j)$ denotes the value agent $i$ assigns to agent $j$. 
We define the utility of agent $i$ in coalition $S$ by $u_i(S)=\sum_{j\in S\bs\{i\}}\mathbf{v}_i(j)$, namely the sum of values she assigns to the other members of her coalition. 
We then define the preference orders $\succsim$ of the agents by $S \succ_i S'$ if and only if $u_i(S)>u_i(S')$, for any two coalitions $S$ and $S'$.
We extend the utilities for partitions by setting $u_i(\pi)=u_i(\pi(i))$, for any agent $i$ and partition $\pi$.

Given two partitions $\pi$ and $\pi'$ and a coalition $S\subseteq N$, we define the \emph{popularity margin} of $S$ on the ordered pair $(\pi,\pi')$ by
\[\phi_S(\pi,\pi')=|\{a\in S\colon u_a(\pi)>u_a(\pi')\}|-|\{a\in S\colon u_a(\pi')>u_a(\pi)\}|.\]
Note that in this definition, agents who are indifferent between the two partitions do not contribute to any of the two terms.
For the grand coalition we write $\phi(\pi,\pi')=\phi_N(\pi,\pi')$, and for a singleton coalition containing only one agent $\ga$ we write $\phi_{\ga}(\pi,\pi')=\phi_{\{\ga\}}(\pi,\pi')$.
The definition of popularity margins is useful as it is often convenient to consider restricted subsets of agents, and then aggregate their margins (since, if $S$ and $S'$ are disjoint coalitions, then their popularity margins are additive, namely $\phi_S(\pi,\pi')+\phi_{S'}(\pi,\pi')=\phi_{S\cup S'}(\pi,\pi')$).
We say partition $\pi$ is \emph{more popular} than $\pi'$ if $\phi(\pi,\pi')>0$. 
We say $\pi$ is \emph{strongly popular} if $\pi$ is more popular than any partition $\pi'\ne\pi$.

\begin{wbox}
\label{def:ASHG}
\ASHG{}\\
\textbf{Input:} Additively separable hedonic game $\langle N,\mathbf{v} \rangle$.\\
\textbf{Question:} Does $\langle N,\mathbf{v} \rangle$ admit a strongly popular partition? Namely, does the following hold:
\[\exists\, \text{partition }\pi^* \;\; \forall\, \text{partition }\pi \text{ with } \pi\neq \pi^*\;\; \phi(\pi^*,\pi)>0\]
\end{wbox}

In the context of popularity it is useful to discuss Pareto optimality. 
We say that $\pi'$ is a \textit{Pareto improvement} from $\pi$ if all agents weakly prefer $\pi'$ over $\pi$, and at least one agent strictly prefers $\pi'$ over $\pi$. 
If there exists no Pareto improvement from $\pi$, we say $\pi$ is \textit{Pareto-optimal}. 
Clearly, strongly popular partitions are Pareto-optimal.
Indeed, every Pareto improvement is a more popular partition.
By contrast, Pareto-optimal partitions need not be strongly popular.

Next, we include a formal definition of the class \PCW{}.

\begin{wbox}
    \CKTCondorcet\\
    \textbf{Input:} A sequence of Boolean circuits $\fCC=\langle\CC_1,\dots,\CC_m\rangle$, each with $n$ inputs and $n$ outputs.\\
    \textbf{Question:}
    $\exists \XX\in\bits{n} \;\; \forall \YY\in\bits{n} \text{ with } \YY\neq \XX\;\; \sum_{i=1}^m\big(\sgn(\CC_i(\XX)-\CC_i(\YY))\big)>0$?\\
    A solution is called a \emph{Condorcet string}.
\end{wbox}

\begin{definition}
(\cite{GGKN2025unambiguity}).
The class Polynomial Condorcet Winner (\PCW{}) is defined by
\[\PCW{}=\{L\subseteq\Sigma^*\colon L\reduce \CKTCondorcet{}\}.\]
\end{definition}

We briefly introduce some notation concerning Boolean circuits. Let $\CC$ be a Boolean circuit with $n$ input bits, let $\GG$ be a gate in $\CC$, and let $\XX = (\XX_i)_{i\in [n]} \in\{0,1\}^n$. We denote by $\XX(\GG)$ the value ($0$ or $1$) that the gate $\GG$ outputs when the circuit $\CC$ is given the string $\XX$ as input. Moreover, we denote by $|\CC|$ the number of gates in $\CC$.
A solution for \CKTCondorcet{} is denoted a \emph{Condorcet string}.

Lastly, we introduce a useful definition to help us construct an ASHG for the reduction.
\begin{definition}
\label{def:replica-family}
Let $\langle N,\mathbf{v}\rangle$ be an ASHG, let $\ga\in N$ and let
$R\subseteq N\bs\{\ga\}$ be a set of distinct agents.
We say that $R$ is a set of \emph{one-way replicas} (or simply \emph{replicas}) of the \emph{origin} $\ga$, if the following hold:
\begin{enumerate}
  \item For every distinct $\ga',\ga''\in R\cup\{\ga\}$, we have $\mathbf{v}_{\ga'}(\ga'')=\mathbf{v}_{\ga''}(\ga')=10$ (that is, the origin and all replicas assign value $10$ to one another).
  \item For every $b\in N\bs(R\cup \{\ga\})$ and every $\ga'\in R$, we have:
  \begin{enumerate}
      \item $\mathbf{v}_{\ga}(b)=\mathbf{v}_{\ga'}(b)$, and
      \item if $\mathbf{v}_b(\ga)\ge 0$ then $\mathbf{v}_b(\ga')=0$, and if $\mathbf{v}_b(\ga)<0$ then $\mathbf{v}_b(\ga')=\mathbf{v}_b(\ga)$.
  \end{enumerate}
\end{enumerate}
\end{definition}
Notice that in this definition $\ga$ and its replicas have the same valuation towards other agents, and, intuitively, they strongly wish to belong to the same coalition. The mutual value of~$10$ is specifically tailored to our reduction and designed to be greater than the sum of all other positive values the replicated agents in the constructed game assign. Furthermore, any coalition whose members like the presence of $\ga$ (i.e. have non-negative valuation to $\ga$) will not object to including its replicas as well, though including the replicas will not affect the utilities of other agents in this coalition (apart from $\ga$ and its replicas). 
Thus, one may interpret a replica as a ``weight multiplier'', i.e., a tool that enables us to add more weight to the opinion of agent $\ga$.

Containment of \ASHG{} in \PCW{} is straightforward, as each agent can be easily encoded as a Boolean circuit calculating the agent's utility for a given partition.
Thus, to prove \Cref{thm:ashg_pop_pcw} we only prove \PCW{}-hardness.

\section{Setup of the Reduction}
\label{sec:ASHG_reduction_setup}
Suppose that we are given a \CKTCondorcet{} instance  $\fCC=\langle\CC_1,\dots,\CC_m\rangle$, where each circuit $\CC_j$ is composed of gates $\GG_1^{j},\dots,\GG_{|\CC_j|}^{j}$ (recall that $|\CC_j|$ denotes the number of gates in $\CC_j$). 
Since we usually focus on an individual circuit in our analysis, we often omit the superscripts of the gates, writing, e.g., $\GG_i$ for $\GG_i^j$.
We assume without loss of generality that the outputs of all circuit are interpreted as non-negative binary numbers, as otherwise we can add a sufficiently large constant to all outputs.
Furthermore, we assume that the gates are topologically ordered, so that, in a specific circuit, if $\GG_i$ is an input to $\GG_j$ then $i<j$.
Moreover, we assume that the first $n$ gates of each circuit $\GG_1,\dots,\GG_n$ simply copy the inputs $\XX_1,\dots,\XX_n$ respectively, and these are the only gates where $\XX_1,\dots,\XX_n$ appear; we call those gates \emph{Copy-gates}. Apart from those, we assume circuits are only composed of And- and Not-gates. 
The output gates of each circuit are simply some subset of size $n$ of its gates, but we will denote them also by $\OO_1,\dots,\OO_n$,
in reverse order of significance (where $\OO_n$ is the most significant bit).

We construct the following ASHG $\langle N,\mathbf{v}\rangle$. 
We let $N=X\cup A\cup G\cup L\cup W\cup Z\cup V$, where those disjoint sets induce different \emph{types} of agents denoted \emph{$T$-agents} for $T\in\{X,A,G,L,W,Z,V\}$, as detailed below. 
For $L$-agents, we make a further distinction between three sub-types, so that $L=\La\cup\Lg\cup\Lw\cup\Lz$.
Note that all one-way replicas in the construction are assumed to belong to the same type as their origin agent.
\begin{itemize}
    \item We create $X$-agents $x_1,\dots,x_n$, representing the input bits $\XX_1,\dots,\XX_n$ respectively. Furthermore, we create additional $X$-agents (not corresponding to any input bit) to increase the overall number of $X$-agents to $1+9mn+3n+\sum_{i=1}^m\big(10|\CC_i|+1\big)$. 
    \item For each input bit $\XX_i$ we create two \emph{assignment agents} (or $A$-agents) $\asst_i$ and $\assf_i$, and an \emph{assignment-alternative agent} (or $\La$-agent) $\ls_i$.
    Furthermore, for each of those three agents we create $m$ one-way replicas. We denote by $\Ta_i$ the set that includes $\asst_i$ and all its replicas, by $\Fa_i$ the set that includes $\assf_i$ and all its replicas, and by $\La_i$ the set that includes $\ls_i$ and all its replicas. 
    We will use the coalitions of the $\asst_i$ and $\assf_i$ agents to derive an assignment to the input string, and vice versa.
    \item For each gate $\GG_i$ (Copy-, Not-, or And-gate), we create two \emph{gate agents} (or $G$-agents) $\gt_i$ and $\gf_i$, and a \emph{gate-alternative agent} (or \emph{$\Lg$-agent}) $\lg_i$.
    Furthermore, we create one-way replicas $\ggt_i$, $\ggf_i$, and $\lgg_i$, of each of those respective agents.
    We denote $\Tg_i=\{\gt_i,\ggt_i\}$, $\Fg_i=\{\gf_i,\ggf_i\}$, and $\Lg_i=\{\lg_i,\lgg_i\}$.
    We will use the coalitions of the $\gt_i$ and $\gf_i$ agents to derive the outputs of the gates, and vice versa.
    \item For each And-gate $\GG_i$ we additionally construct a \emph{$\Wand$-agent} $\wand_i$ and a \emph{$\Zand$-agent} $\zand_i$. 
    We further construct an $\Lw$-agent $\lw_i$ and an $\Lz$-agent $\lz_i$.
    Moreover, we construct one-way replicas $\wwand_i$, $\zzand_i$, $\lww_i$, and $\lzz_i$ for each of $\{\wand_i,\zand_i,\lw_i,\lz_i\}$ respectively. We denote $\Wand_i=\{\wand_i,\wwand_i,\lw_i,\lww_i\}$ and $\Zand_i=\{\zand_i,\zzand_i,\lz_i,\lzz_i\}$.
    \item For each circuit $\CC_j$, we create a \emph{voter agent} (or \emph{$V$-agent}), denoted $v_j$. When considering a specific circuit we may omit the subscript of this agent. $v$ should not be confused with the valuation function $\mathbf{v}$.
\end{itemize}

For each input bit $\XX_i$, the set $\Ta_i\cup\Fa_i\cup\La_i$ is called the \emph{assignment gadget}, or \emph{$A$-gadget}, of $\XX_i$, denoted $\AGD_i$.
For each gate $\GG_i$, the set $\Tg_i\cup\Fg_i\cup\Lg_i$ (or, if $\GG_i$ is an And-gate, then $\Tg_i\cup\Fg_i\cup\Lg_i\cup\Wand_i\cup\Zand_i$) is called the \emph{gate gadget} of $\GG_i$, denoted $\GGD_i$.
We sometimes refer to them as Copy-, Not-, or And-gadgets when referring to a specific type of gate. 
The $X$-, $A$-, $V$-, and $G$-agents are referred to as \emph{core agents}. 

Given a circuit $\CC_j$, we denote by $C_j$ its corresponding set of agents (which includes the gadgets of all gates of $\CC_j$, and its voter agent).
The number of non-$X$-agents is at most $9mn+3n+\sum_{i=1}^m\big(10|\CC_i|+1\big)$, where we bound with the case that all gates are Copy- and And-gates.
Hence, we have that $|X|>\frac{|N|}{2}$, which will be helpful during the proof.

We now describe the valuations $\mathbf{v} = (\mathbf{v}_i\colon N\to \RR)_{i\in N}$ of the agents. 
Any value between two agents not described below implies a valuation of $0$ between those agents. 
We use $-\infty$ to indicate a large negative number, satisfying the property that an agent who assigns $-\infty$ to an agent in her coalition will have a negative utility regardless of the composition of the rest of her coalition. For instance, we can regard $\infty$ as denoting $|N|$ times the largest (positive) valuation in the game.
In the following description we do not include the valuations of all one-way replicas, since those can be derived from their definition, using the valuations of their origin agents.
\begin{itemize}
    \item Each $X$-agent $x$ assigns $1$ to any other $X$-agent.
    \item For each input bit $\XX_i$:
    \begin{itemize}
        \item agents $\asst_i$ and $\assf_i$ assign value $-\infty$ to each other, and value $1$ to $x_i$ and to $\ls_i$;
        \item agent $\ls_i$ assigns value $1$ to agents $\asst_i$ and $\assf_i$.
    \end{itemize}
    \item For every gate $\GG_i$:
    \begin{itemize}
        \item agents $\gt_i$ and $\gf_i$ assign value $-\infty$ to each other;
        \item agent $\lg_i$ assigns value $1$ to agents $\gt_i$ and $\gf_i$.
    \end{itemize}
    \item For every Copy-gate $\GG_i=\XX_i$: 
    \begin{itemize}
        \item agent $\gt_i$ assigns value $1$ to agents $\asst_i$ and $\lg_i$, and $-\infty$ to $\assf_i$;
        \item agent $\gf_i$ assigns value $1$ to agents $\assf_i$ and $\lg_i$, and $-\infty$ to $\asst_i$;
        \item agent $\asst_i$ assigns $-\infty$ to $\gf_i$.
        \item agent $\assf_i$ assigns $-\infty$ to $\gt_i$.
    \end{itemize}
    \item For every Not-gate $\GG_i=\neg \GG_j$:
    \begin{itemize}
        \item agent $\gt_i$ assigns value $1$ to agents $\gf_j$ and $\lg_i$, and $-\infty$ to $\gt_j$;
        \item agent $\gf_i$ assigns value $1$ to agents $\gt_j$ and $\lg_i$, and $-\infty$ to $\gt_j$;
        \item agent $\gt_j$ assigns $-\infty$ to $\gt_i$;
        \item agent $\gf_j$ assigns $-\infty$ to $\gf_i$.
    \end{itemize}
    \item For every And-gate $\GG_i=\GG_j\land \GG_k$:
    \begin{itemize}
        \item agent $\gt_i$ assigns $1$ to $\gt_j$ and to $\gt_k$, $2$ to $\lg_i$, and $-\infty$ to $\gf_j$, $\gf_k$, $\wand_i$, and $\zand_i$.
        \item agent $\gf_i$ assigns $1$ to $\gf_j$ and to $\wand_i$, $2$ to $\gf_k$ and to $\zand_i$, and $3$ to $\lg_i$; 
        \item agent $\wand_i$ assigns $-\infty$ to $\gt_i$, $\gt_k$, $\gf_j$, and $\zand_i$, and $1$ to $\lw_i$ and to $\gf_0$;
        \item agent $\zand_i$ assigns $-\infty$ to $\gt_i$, $\gt_j$, $\gf_k$, and $\wand_i$, and $1$ to $\lw_i$ and to $\gf_0$.
        \item agent $\gt_j$ assigns $-\infty$ to $\zand_i$;
        \item agent $\gt_k$ assigns $-\infty$ to $\wand_i$;
        \item agent $\gf_j$ assigns $-\infty$ to $\gt_i$, and $\wand_i$;
        \item agent $\gf_k$ assigns $-\infty$ to $\gt_i$, and $\zand_i$;
        \item agent $\lw_i$ assigns $0$ to $\wand_i$;
        \item agent $\lz_i$ assigns $0$ to $\zand_i$.
    \end{itemize}
    \item Any $L$-agent assigns $-\infty$ to all agents and vice versa, apart from the cases where other values were specified above.
    \item Every voter agent $v$ assigns value $2^{n+1}$ to any other voter agent, and to $x_1$. Furthermore, if the outputs of the circuit corresponding to $v$ are $\OO_1,...,\OO_n$ (where $\OO_n$ is the most significant bit), then for every $\OO_j,\; j\in[n]$, with corresponding $G$-agents $\ot_j$ and $\of_j$, agent $v$ assigns value $2^{j}$ to $\ot_j$.
\end{itemize}

Note that the $A$-agents are not associated with any specific circuit, but can be interpreted as a barrier between the inputs and the circuits. This helps guarantee that circuits follow the same assignment to the input string.
The role of the alternative agents is to provide an alternative coalition for each gate agent or assignment agent representing the negation of the assignment of that bit or gate (whereas, as we will see, agents who correspond to their gate's/bit's truth assignment will all be grouped in a single main coalition). 
Notice that the value that a $V$-agent assigns to a corresponding output-gate agent $\ot_j$ dominates the sum of all values she assigns to agents associated with less significant bits of the output. This will be crucial to establish that the strong popularity property of an input string is preserved in the corresponding partition in the constructed ASHG, and vice versa.
In the following two sections, we show that a Condorcet string exists in $\fCC$ if and only if a strongly popular partition exists in the constructed ASHG.
Illustrations of the gadgets and overview of the reduction can be seen in \Cref{fig:ASHG_gadgets,fig:ASHG_overview} respectively.
Before proceeding with the proof, we introduce two important definitions.
The first formalizes the correspondence between partitions of gadgets and the values the gate obtains; this is derived from the coalitions of the core agents of the gadget).
The second formalizes the idea of having ``well-behaved'' gate gadgets, in the sense that the value that they obtain (according to the former definition) is the correct one with respect to the values of their input gadgets.

\begin{definition}
\label{def:valid_partition}
{\bf ($\pi$-correspondence, $\pi$-validity).}
Let $\XX_i$ be an input bit with corresponding $A$-gadget $\AGD_i$, let $\pi$ be a partition of $N$, and let $MC:=\pi(x_1)$. 
If $\asst_i\in\MC$ and $\assf_i\in\pi(\ls_i)$ then we say that $\AGD_i$ \emph{$\pi$-corresponds to assignment $\XX_i=1$}, 
and if $\assf_i\in\MC$ and $\asst_i\in \pi(\ls_i)$ then we say that $\AGD_i$ \emph{$\pi$-corresponds to assignment $\XX_i=0$}.
If $\AGD_i$ $\pi$-corresponds to either $0$ or $1$, we say it is a \emph{$\pi$-valid gadget}.

Let $\GG_i$ be a gate. If $\gt_i\in\MC$ and $\gf_i\in\pi(\lg_i)$, we say that $\GGD_i$ \emph{$\pi$-corresponds to value $\GG_i=1$}, and if $\gf_i\in\MC$ and $\gt_i\in\pi(\lg_i)$, we say that $\GGD_i$ \emph{$\pi$-corresponds to value $\GG_i=0$}.
If $\GGD_i$ $\pi$-corresponds to either $0$ or $1$, we say it is a \emph{$\pi$-valid gadget}.
\end{definition}

\begin{definition}
\label{def:compliance}
{\bf ($\pi$-compliance).}
Let $\GG_i$ be a gate with corresponding $G$-gadget $\GGD_i$, and let $\pi$ be a partition of $N$. Suppose the inputs (or input) of $\GG_i$ have $\pi$-valid corresponding gadgets.
We say that $\GGD_i$ \emph{$\pi$-complies with its inputs} if $\GGD_i$ $\pi$-corresponds to the value that $\GG_i$ obtains when evaluated on the values its input gadgets $\pi$-correspond to (e.g., if $\GG_i=\neg\GG_j$ and $\GGD_j$ $\pi$-corresponds to $\GG_j=1$, then $\GGD_i$ $\pi$-complies with its input if it $\pi$-corresponds to $\GG_i=0$).

Furthermore, given a string $\XX\in\{0,1\}^n$ we say that $\GGD_i$ \emph{$\pi$-complies with $\XX$} if $\GGD_i$ $\pi$-corresponds to the value that $\GG_i$ obtains when its circuit is evaluated on $\XX$.
\end{definition}

In both \Cref{def:compliance,def:valid_partition}, when it is clear what partition is under consideration we often omit its name from the statement (e.g., we may write \emph{valid} instead of \emph{$\pi$-valid}).

\begin{figure}[ht]
\centering
\begin{minipage}{0.45\textwidth}
    \centering
    \begin{tikzpicture}[
    >=stealth
]

\node[circle, fill, minimum size=3pt, inner sep=0pt, label=above:$x_i$] (x) {};

\node[circle, fill, minimum size=3pt, inner sep=0pt, label=below:$\assf_i$] (assf)   [below left=1.2cm and 0.6cm of x] {};
\node[circle, fill, minimum size=3pt, inner sep=0pt, label=below:$\asst_i$] (asst)   [below right=1.2cm and 0.6cm of x] {};

\node[color=blue, circle, draw, minimum size=3.5mm, inner sep=0pt, label={[text=gray]below:$m$}] (Fa)  [left=1cm of assf] {};
\node[color=blue, circle, draw, minimum size=3.5mm, inner sep=0pt, label={[text=gray]below:$m$}] (Ta)  [right=1cm of asst] {};

\node[circle, fill, minimum size=3pt, inner sep=0pt, label=below left:$\ls_i$] (la)   [below=1cm of $(assf)!0.5!(asst)$] {};

\node[color=blue, circle, draw, minimum size=3.5mm, inner sep=0pt, label={[text=gray]right:$m$}] (La)  [below=1cm of la] {};

\draw[red] (asst) -- (assf);
\draw[-, shorten >=0cm, shorten <=0.3cm] (asst) -- (x);
\draw[->, shorten >=1cm, shorten <=0cm] (asst) -- (x);
\draw[-, shorten >=0cm, shorten <=0.3cm] (assf) -- (x);
\draw[->, shorten >=1cm, shorten <=0cm] (assf) -- (x);
\draw (asst) -- (la);
\draw (assf) -- (la);
\draw[blue, thick] (asst) -- (Ta);
\draw[blue, thick] (assf) -- (Fa);
\draw[blue, thick] (la) -- (La);

\end{tikzpicture}
    \caption{Assignment gadget of bit $\XX_i$. Dots represent single agents; circles represent sets of $m$ replicas.}
    \label{fig:A-gadget}
\end{minipage}
\hfill
\begin{minipage}{0.45\textwidth}
    \centering
    \begin{tikzpicture}[
    every node/.style={inner sep=0pt},
    >=stealth
]

\node[circle, fill, minimum size=3pt, inner sep=0pt, label=left:$\assf_i$] (assf) {};
\node[circle, fill, minimum size=3pt, inner sep=0pt, label=right:$\asst_i$] (asst) [right=1.2cm of assf] {};

\node[circle, fill, minimum size=3pt, inner sep=0pt, label=below left:$\gf_i$] (gf)   [below=1.3cm of assf] {};
\node[circle, fill, minimum size=3pt, inner sep=0pt, label=below right:$\gt_i$] (gt)   [below=1.3cm of asst] {};

\node[color=blue, circle, fill, minimum size=3pt, inner sep=0pt, label=below:$\ggf_i$] (ggf)  [left=1cm of gf] {};
\node[color=blue, circle, fill, minimum size=3pt, inner sep=0pt, label=below:$\ggt_i$] (ggt)  [right=1cm of gt] {};

\node[circle, fill, minimum size=3pt, inner sep=0pt, label=below left:$\lg_i$] (lg)   [below=1cm of $(gf)!0.5!(gt)$] {};

\node[color=blue, circle, fill, minimum size=3pt, inner sep=0pt, label=left:$\lgg_i$] (lgg)  [below=1cm of lg] {};

\draw[red] (gt) -- (gf);
\draw[red] (gt) -- (assf);
\draw[red] (gf) -- (asst);
\draw[-, shorten >=0cm, shorten <=0.3cm] (gt) -- (asst);
\draw[->, shorten >=1cm, shorten <=0cm] (gt) -- (asst);
\draw (gt) -- (lg);
\draw[-, shorten >=0cm, shorten <=0.3cm] (gf) -- (assf);
\draw[->, shorten >=1cm, shorten <=0cm] (gf) -- (assf);
\draw (gf) -- (lg);
\draw[blue, thick] (gt) -- (ggt);
\draw[blue, thick] (gf) -- (ggf);
\draw[blue, thick] (lg) -- (lgg);

\end{tikzpicture}
    \caption{Copy-gadget of gate $\GG_i=\XX_i$.}
    \label{fig:Copy-gadget}
\end{minipage}

\vspace{1em} 

\begin{minipage}{0.45\textwidth}
    \centering
    \begin{tikzpicture}[
    every node/.style={inner sep=0pt},
    >=stealth
]

\node[circle, fill, minimum size=3pt, inner sep=0pt, label=left:$\gf_j$] (gfj) {};
\node[circle, fill, minimum size=3pt, inner sep=0pt, label=right:$\gt_j$] (gtj) [right=1.2cm of gfj] {};

\node[circle, fill, minimum size=3pt, inner sep=0pt, label=below left:$\gf_i$] (gf)   [below=1.3cm of gfj] {};
\node[circle, fill, minimum size=3pt, inner sep=0pt, label=below right:$\gt_i$] (gt)   [below=1.3cm of gtj] {};

\node[color=blue, circle, fill, minimum size=3pt, inner sep=0pt, label=below:$\ggf_i$] (ggf)  [left=1cm of gf] {};
\node[color=blue, circle, fill, minimum size=3pt, inner sep=0pt, label=below:$\ggt_i$] (ggt)  [right=1cm of gt] {};

\node[circle, fill, minimum size=3pt, inner sep=0pt, label=below left:$\lg_i$] (lg)   [below=1cm of $(gf)!0.5!(gt)$] {};

\node[color=blue, circle, fill, minimum size=3pt, inner sep=0pt, label=left:$\lgg_i$] (lgg)  [below=1cm of lg] {};

\draw[red] (gt) -- (gf);
\draw[red] (gt) -- (gtj);
\draw[red] (gf) -- (gfj);
\draw[-, shorten >=0cm, shorten <=0.3cm] (gt) -- (gfj);
\draw[->, shorten >=1.5cm, shorten <=0cm] (gt) -- (gfj);
\draw (gt) -- (lg);
\draw[-, shorten >=0cm, shorten <=0.3cm] (gf) -- (gtj);
\draw[->, shorten >=1.5cm, shorten <=0cm] (gf) -- (gtj);
\draw (gf) -- (lg);
\draw[blue, thick] (gt) -- (ggt);
\draw[blue, thick] (gf) -- (ggf);
\draw[blue, thick] (lg) -- (lgg);

\end{tikzpicture}
    \caption{Not-gadget of gate $\GG_i=\neg\GG_j$.}
    \label{fig:Not-gadget}
\end{minipage}
\hfill
\begin{minipage}{0.45\textwidth}
    \centering
    \begin{tikzpicture}[
    every node/.style={inner sep=0pt},
    >=stealth
]

\node[circle, fill, minimum size=3pt, inner sep=0pt, label=above:$\gf_j$] (gfj) {};
\node[circle, fill, minimum size=3pt, inner sep=0pt, label=above:$\gf_k$] (gfk) [right=0.8cm of gfj] {};
\node[circle, fill, minimum size=3pt, inner sep=0pt, label=above:$\gt_j$] (gtj) [right=1.2cm of gfk] {};
\node[circle, fill, minimum size=3pt, inner sep=0pt, label=above:$\gt_k$] (gtk) [right=0.8cm of gtj] {};

\node[circle, fill, minimum size=3pt, inner sep=0pt, label=below:$\gf_i$] (gf)   [below=2.5cm of $(gfj)!0.5!(gfk)$] {};
\node[circle, fill, minimum size=3pt, inner sep=0pt, label=below:$\gt_i$] (gt)   [below=2.5cm of $(gtj)!0.5!(gtk)$] {};

\node[color=blue, circle, fill, minimum size=3pt, inner sep=0pt, label=below:$\ggf_i$] (ggf)  [below left=0.4cm and 0.8cm of gf] {};
\node[color=blue, circle, fill, minimum size=3pt, inner sep=0pt, label=below:$\ggt_i$] (ggt)  [below right=0.4cm and 0.8cm of gt] {};

\node[circle, fill, minimum size=3pt, inner sep=0pt, label=above:$\lg_i$] (lg)   [below=0.6cm of $(gf)!0.5!(gt)$] {};
\node[color=blue, circle, fill, minimum size=3pt, inner sep=0pt, label=left:$\lgg_i$] (lgg)  [below=0.7cm of lg] {};

\node[circle, fill, minimum size=3pt, inner sep=0pt, label=below left:$\wand_i$] (w)   [left=1.2cm of $(gf)!0.65!(gfj)$] {};
\node[circle, fill, minimum size=3pt, inner sep=0pt, label=above left:$\zand_i$] (z)  [below=1cm of w] {};

\node[circle, fill, minimum size=3pt, inner sep=0pt, label=above:$\lw_i$] (lw)  [above=0.8cm of w] {};
\node[circle, fill, minimum size=3pt, inner sep=0pt, label=below:$\lz_i$] (lz)  [below=0.8cm of z] {};

\node[color=blue, circle, fill, minimum size=3pt, inner sep=0pt, label=left:$\lww_i$] (lww)   [left=0.7cm of lw] {};
\node[color=blue, circle, fill, minimum size=3pt, inner sep=0pt, label=left:$\lzz_i$] (lzz)  [left=0.7cm of lz] {};

\node[color=blue, circle, fill, minimum size=3pt, inner sep=0pt, label=left:$\wwand_i$] (ww)   [left=0.7cm of w] {};
\node[color=blue, circle, fill, minimum size=3pt, inner sep=0pt, label=left:$\zzand_i$] (zz)  [left=0.7cm of z] {};

\draw[red] (gt) -- (gf);
\draw[red] (gt) -- (gfj);
\draw[red] (gt) -- (gfk);
\draw[red] (w) -- (z);
\draw[red] (w) -- (gfj);
\draw[red] (w) -- (gtk);
\draw[red] (w) -- (gt);
\draw[red] (z) -- (gtj);
\draw[red] (z) -- (gfk);
\draw[red] (z) -- (gt);
\draw[blue, thick] (gt) -- (ggt);
\draw[blue, thick] (gf) -- (ggf);
\draw[blue, thick] (w) -- (ww);
\draw[blue, thick] (z) -- (zz);
\draw[blue, thick] (lg) -- (lgg);
\draw[blue, thick] (lw) -- (lww);
\draw[blue, thick] (lz) -- (lzz);

\draw[-, shorten >=0cm, shorten <=0.38cm] (w) -- (lw);
\draw[->, shorten >=0.4cm, shorten <=0cm] (w) -- (lw);
\draw[-, shorten >=0cm, shorten <=0.38cm] (z) -- (lz);
\draw[->, shorten >=0.4cm, shorten <=0cm] (z) -- (lz);

\draw[-, shorten >=0cm, shorten <=0.5cm] (gt) -- (gtj);
\draw[->, shorten >=1.9cm, shorten <=0cm] (gt) -- (gtj);

\draw[-, shorten >=0cm, shorten <=0.5cm] (gt) -- (gtk);
\draw[->, shorten >=1.9cm, shorten <=0cm] (gt) -- (gtk);

\draw[-, shorten >=0cm, shorten <=0.5cm] (gf) -- (gfj);
\draw[->, shorten >=1.9cm, shorten <=0cm] (gf) -- (gfj);

\draw[-, shorten >=0cm, shorten <=0.5cm] (gf) -- (gfk);
\draw[->, shorten >=1.9cm, shorten <=0cm] (gf) -- (gfk) node[pos=0.22, right=0.06cm] {$2$};

\draw[-, shorten >=0.4cm, shorten <=0.4cm] (lg) -- (gf);
\draw[->, shorten >=0.75cm, shorten <=0cm] (lg) -- (gf);
\draw[->, shorten >=0.75cm, shorten <=0cm] (gf) -- (lg) node[pos=0.23, below=0.15cm] {$3$};

\draw[-, shorten >=0.4cm, shorten <=0.4cm] (lg) -- (gt);
\draw[->, shorten >=0.75cm, shorten <=0cm] (lg) -- (gt);
\draw[->, shorten >=0.75cm, shorten <=0cm] (gt) -- (lg) node[pos=0.23, below=0.15cm] {$2$};

\draw (gf) -- (w);

\draw[-, shorten >=0.5cm, shorten <=0.5cm] (gf) -- (z);
\draw[->, shorten >=1cm, shorten <=0cm] (gf) -- (z) node[pos=0.37, below=0.11cm] {$2$};
\draw[->, shorten >=1cm, shorten <=0cm] (z) -- (gf);

\end{tikzpicture}
    \caption{And-gadget of gate $\GG_i=\GG_j\land \GG_k$.}
    \label{fig:And-gadget}
\end{minipage}

\caption{The main gadgets of the proof of \Cref{thm:ashg_pop_pcw}. 
One-way replica agents are represented as blue dots with a blue edge to their origin.
Valuations concerning them can be derived from their origin.
Red edges indicate mutual valuation of $-\infty$.
We omit the $-\infty$ edges connecting the inputs of a gate to each other, or involving $L$-agents.
Apart from those, omitted edges indicate a valuation of $0$.
An unlabeled, black, directed edge indicates valuation of $1$ in the shown direction, and an unlabeled, black, undirected edge implies mutual valuation of $1$. 
Other valuations are written explicitly.
If an edge appears only in one direction, the valuation in the opposite direction is $0$.
}
\label{fig:ASHG_gadgets}
\end{figure}
\begin{figure}[!htb]
    \centering

\begin{tikzpicture}[node distance=1.8cm]

\node[ellipse, draw=black, fill=white, minimum width=7cm, minimum height=2cm, align=center] (layer1) 
{$1+9mn+3n+\sum_{i=1}^m(10|C_i|+1)$\\$X$-agents};

\node[circle, fill=black, inner sep=1pt, label=left:$x_1$] (x1) at ($(layer1.south)+(-2,0.4)$) {};
\node[circle, fill=black, inner sep=1pt, label=left:$x_2$] (x2) at ($(layer1.south)+(0,0.4)$) {};
\node[circle, fill=black, inner sep=1pt, label=left:$x_3$] (x3) at ($(layer1.south)+(2,0.4)$) {};

\node[circle, draw=black, fill=white, minimum size=1cm] (agd1) [below=1cm of layer1, xshift=-3cm] {$\AGD_1$};
\node[circle, draw=black, fill=white, minimum size=1cm] (agd2) [below=1cm of layer1] {$\AGD_2$};
\node[circle, draw=black, fill=white, minimum size=1cm] (agd3) [below=1cm of layer1, xshift=3cm] {$\AGD_3$};

\draw[gray, thick] (agd1.north) -- (x1);
\draw[gray, thick] (agd2.north) -- (x2);
\draw[gray, thick] (agd3.north) -- (x3);

\draw[dashed, thick, gray] (0,-5) -- (0,-13) coordinate[at start] (vline);

\node[circle, draw=black, fill=white, minimum size=1cm] (gd31) [left=0.9cm of vline, yshift=-0.5cm] {$\GD_3^1$};
\node[circle, draw=black, fill=white, minimum size=1cm] (gd21) [left=1cm of gd31] {$\GD_2^1$};
\node[circle, draw=black, fill=white, minimum size=1cm] (gd11) [left=1cm of gd21] {$\GD_1^1$};
\node[circle, draw=black, fill=white, minimum size=1cm] (gd12) [right=0.9cm of vline, yshift=-0.5cm] {$\GD_1^2$};
\node[circle, draw=black, fill=white, minimum size=1cm] (gd22) [right=1cm of gd12] {$\GD_2^2$};
\node[circle, draw=black, fill=white, minimum size=1cm] (gd32) [right=1cm of gd22] {$\GD_3^2$};

\foreach \gadget in {gd11,gd12} {
\draw[gray, thick] (\gadget.north) -- (agd1.south);
}
\foreach \gadget in {gd21,gd22} {
\draw[gray, thick] (\gadget.north) -- (agd2.south);
}
\foreach \gadget in {gd31,gd32} {
\draw[gray, thick] (\gadget.north) -- (agd3.south);
}

\foreach \angle in {36,72,108,144} {
  \foreach \gadget in {gd11,gd21,gd31,gd12,gd22,gd32} {
    \draw[gray, thick] (\gadget.south) -- ++(\angle:-0.8);
  }
}

\def\LayerFourText{And- and Not-gates gadgets}
\node[draw=black, fill=white, text=gray, minimum width=6.4cm, minimum height=1.3cm] (labelRight) at (-3.7,-7.7) {\LayerFourText};
\node[draw=black, fill=white, text=gray, minimum width=6.4cm, minimum height=1.3cm] (labelLeft) at (3.7,-7.7) {\LayerFourText};

\node[circle, draw=black, fill=white, minimum size=1cm, minimum height=1.5cm] (out3L) [left=0.4cm of vline, yshift=-5cm] {};
\node[circle, fill=black, inner sep=1pt, label=above:$\of_3$] (o30L) at ($(out3L.center)+(-0.5,-0.3)$) {};
\node[circle, fill=black, inner sep=1pt, label=above:$\ot_3$] (o31L) at ($(out3L.center)+(0.5,-0.3)$) {};

\node[circle, draw=black, fill=white, minimum size=1cm, minimum height=1.5cm] (out2L) [left=1cm of out3L] {};
\node[circle, fill=black, inner sep=1pt, label=above:$\of_2$] (o20L) at ($(out2L.center)+(-0.5,-0.3)$) {};
\node[circle, fill=black, inner sep=1pt, label=above:$\ot_2$] (o21L) at ($(out2L.center)+(0.5,-0.3)$) {};

\node[circle, draw=black, fill=white, minimum size=1cm, minimum height=1.5cm] (out1L) [left=1cm of out2L] {};
\node[circle, fill=black, inner sep=1pt, label=above:$\of_1$] (o10L) at ($(out1L.center)+(-0.4,-0.3)$) {};
\node[circle, fill=black, inner sep=1pt, label=above:$\ot_1$] (o11L) at ($(out1L.center)+(0.4,-0.3)$) {};

\node[circle, draw=black, fill=white, minimum size=1cm, minimum height=1.5cm] (out1R) [right=0.4cm of vline, yshift=-5cm] {};
\node[circle, fill=black, inner sep=1pt, label=above:$\of_1$] (o10R) at ($(out1R.center)+(-0.5,-0.3)$) {};
\node[circle, fill=black, inner sep=1pt, label=above:$\ot_1$] (o11R) at ($(out1R.center)+(0.5,-0.3)$) {};

\node[circle, draw=black, fill=white, minimum size=1cm, minimum height=1.5cm] (out2R) [right=1cm of out1R] {};
\node[circle, fill=black, inner sep=1pt, label=above:$\of_2$] (o20R) at ($(out2R.center)+(-0.5,-0.3)$) {};
\node[circle, fill=black, inner sep=1pt, label=above:$\ot_2$] (o21R) at ($(out2R.center)+(0.5,-0.3)$) {};

\node[circle, draw=black, fill=white, minimum size=1cm, minimum height=1.5cm] (out3R) [right=1cm of out2R] {};
\node[circle, fill=black, inner sep=1pt, label=above:$\of_3$] (o30R) at ($(out3R.center)+(-0.5,-0.3)$) {};
\node[circle, fill=black, inner sep=1pt, label=above:$\ot_3$] (o31R) at ($(out3R.center)+(0.5,-0.3)$) {};

\foreach \gadget in {out1L,out2R,out3R} {
   \foreach \angle in {60,120} {
    \draw[gray, thick] (\gadget.north) -- ++(\angle:0.8);
  }
}
\foreach \gadget in {out2L,out3L,out1R} {
    \draw[gray, thick] (\gadget.north) -- ++(90:0.8);
  }

\node[circle, fill=black, inner sep=1.5pt, label=left:$v_1$] (v1) [below=2cm of out2L] {};
\node[circle, fill=black, inner sep=1.5pt, label=right:$v_2$] (v2) [below=2cm of out2R] {};

\def\suffixL{1L}
\foreach \i in {1,2,3} {
    \def\out{o\i\suffixL}
    \draw[-, shorten >=0cm, shorten <=0.5cm] (v1) -- (\out);
    \draw[->] (v1) -- ($(v1)!1cm!(\out)$) node[at end, right] {$2^\i$};
}

\def\suffixR{1R}
\foreach \i in {1,2,3} {
    \def\out{o\i\suffixR}
    \draw[-, shorten >=0cm, shorten <=0.5cm] (v2) -- (\out);
    \draw[->] (v2) -- ($(v2)!1cm!(\out)$) node[at end, right] {$2^\i$};
}

\node[anchor=west, gray, align=center] at ($(vline |- agd2)+(-9,0)$) {Assignment\\gadgets};
\node[anchor=west, gray, align=center] at ($(vline |- gd22)+(-9,0)$) {Copy-\\gadgets};
\node[anchor=west, gray, align=center] at ($(vline |- out2R)+(-9,0)$) {Output \\gadgets};
\node[anchor=west, gray, align=center] at ($(vline |- v2)+(-9,0)$) {Voter agents};
\end{tikzpicture}

\caption{High-level illustration of the reduction used in the proof of \Cref{thm:ashg_pop_pcw}, for $n=3$ and $m=2$. 
Gray edges indicate interactions between gadgets (rather than individual edges within the ASHG). 
The vertical dashed line separates $C_1$ and $C_2$.
Only the core agents of the output gates are shown, illustrating their connections to the voters.
Black edges represent the voters' valuations, while omitted $(v_i, \of_j)$ edges correspond to valuations of $0$.
Some output gates are arbitrarily depicted as And-gates and others as Not-gates, as can be inferred from the number of incoming gates.}
\label{fig:ASHG_overview}
\end{figure}

\subsection{Proof Sketch}
We next provide a proof sketch. 
The idea is to show that it is inefficient for assignment gadgets to be invalid, and for gate gadgets to form any partition which does not comply with their inputs (see \Cref{def:compliance,def:valid_partition}).
Roughly, we show that the gadgets are structured so that either $\gt_i$ or $\gf_i$ joins the main coalition, while the other joins $\lg_i$, and any deviation from this harms more agents than it benefits.
The same logic applies for $\asst_i$ and $\assf_i$ in the assignment gadgets.
Thus, any reasonable partition should correspond to some assignment of the input bits, in a way that is respected across all gadgets in the game, as anything else results in a less popular partition.
In addition, we show that all agents apart from the voters are completely indifferent between all such valid partitions.
Hence, the only agents who may tip the scale in favor of one partition or another are the voters, who, as we will show, must all be in the main coalition.
Since the gate-gadgets are well-behaved, the voters' utilities are completely determined by which core representatives ($\gt_i$ or $\gf_i$) of the output gates of their corresponding circuit are present in the main coalition.
The exponentially increasing valuations of the voters towards those agents are designed so that they prefer partitions corresponding to higher values obtained by the circuit (which is reflected in the values $0$ or $1$ obtained by the output gates, namely, by whether $\gt_i$ or $\gf_i$ is in the main coalition for each output gate).
Thus, as long as all gadgets indeed correspond to the same assignment of the input bits, the preference of each circuit is fully captured by its voter agent, while all other agents are indifferent.
Therefore, a strongly popular string translates to a strongly popular partition, and vice versa.

While the high-level idea of the proof is similar for both directions of the reduction, the technical details are quite different and therefore they are proved separately.

\section{Condorcet String implies Strongly Popular Partition}
\label{sec:condorcet_implies_popular}
Suppose that we have a Condorcet string $\XX^*=(\XX^*_1,\dots,\XX^*_n)$ for $\fCC=\langle\CC_1,\dots,\CC_m\rangle$. Let $\pi^*$ be the following partition:
\begin{itemize}
    \item All agents in $X\cup V$ are in a coalition denoted $\MC^*$ (for denoting a \emph{main coalition} to which we will add further agents).
    \item For every input bit $\XX^*_i$, if $\XX^*_i=1$ then $\Ta_i\subseteq \MC^*$ and $\Fa_i\cup\La_i\in \pi^*$, and if $\XX^*_i=0$ then $\Fa_i\subseteq \MC^*$ and $\Ta_i\cup\La_i\in \pi^*$.
    \item For every gate $\GG_i$, if $\XX^*(\GG_i)=1$ then $\Tg_i\subseteq \MC^*$ and $\Fg_i\cup\Lg_i\in\pi^*$, and if $\XX^*(\GG_i)=0$ then $\Fg_i\subseteq\MC^*$ and $\Tg_i\cup\Lg_i\in\pi^*$.
    \item For every And-gate $\GG_i=\GG_j\land \GG_k$:
    \begin{itemize}
        \item if $\XX^*(\GG_j)=\XX^*(\GG_k)=1$ or $\XX^*(\GG_j)=\XX^*(\GG_k)=0$, then $\{\Wand_i,\Zand_i\}\subseteq\pi^*$.
        \item if $\XX^*(\GG_j)=0$ and $\XX^*(\GG_k)=1$ then $\{\Wand_i,\{\lz_i,\lzz_i\}\}\subseteq\pi^*$, and $\{\zand_i,\zzand_i\}\subseteq\MC^*$.
        \item if $\XX^*(\GG_j)=1$ and $\XX^*(\GG_k)=0$ then $\{\Zand_i,\{\lw_i,\lww_i\}\}\subseteq\pi^*$, and $\{\wand_i,\wwand_i\}\subseteq\MC^*$.
    \end{itemize}
\end{itemize}

We want to show that $\pi^*$ is a strongly popular partition. To that end, assume towards contradiction there exists a partition $\pi\neq\pi^*$ such that $\phi(\pi^*,\pi)\leq 0$.
We may assume without loss of generality that $\pi$ is Pareto-optimal\footnote{This is a well-known argument in the context of popularity \cite{BrBu20a,BG24popularity}. Briefly, if $\pi'$ is a Pareto improvement from $\pi$, then $\pi'$ is also more popular than $\pi^*$.} (note that, however, $\pi$ need not be strongly popular).
We denote $\MC=\pi(x_1)$ (we will see that $\MC$ will be the ``main'' coalition in $\pi$, similarly to $\MC^*$ in $\pi^*$).
In the following lemmas, we establish some properties of $\pi$, and analyze the popularity margins among several groups of agents. 

\begin{lemma}
\label{lem_d1:L_separated}
Let $\ell,\ell'\in L$, where neither is a replica of the other (they may be replicas of other $L$-agents). Then $\ell'\notin\pi(\ell)$.
\end{lemma}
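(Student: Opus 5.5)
We argue by contradiction using the Pareto-optimality of $\pi$ assumed just before the lemma. Suppose $\ell,\ell'\in L$, neither a replica of the other, satisfy $\ell'\in\pi(\ell)$. By construction, two $L$-agents that are not in a replica relation assign $-\infty$ to each other, since no other value between them was specified. The key observation is that replicas inherit these valuations. Let $o$ and $o'$ be the origins of $\ell$ and $\ell'$. Then $o\neq o'$, because otherwise one of $\ell,\ell'$ would be a replica of the other, or both would be replicas of the same origin. We have $\mathbf{v}_o(o')=-\infty<0$. By item 2(b) of \Cref{def:replica-family}, $\mathbf{v}_o(\ell')=\mathbf{v}_o(o')=-\infty$, and by item 2(a), $\mathbf{v}_\ell(\ell')=\mathbf{v}_o(\ell')=-\infty$. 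The same reasoning applies in the other direction. Hence $u_\ell(\pi)<0$ and $u_{\ell'}(\pi)<0$.

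We then build a Pareto improvement $\pi'$. One natural choice is to take $S=\pi(\ell)$ and let $R$ be the set of agents of $S$ with negative utility. A cleaner choice is to take the origin class $K$ of $\ell$, that is, $\ell$'s origin together with those of its replicas lying in $S$, and split $K$ off from $S$ into its own coalition. We need to check three things.

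\emph{(i)} Each agent of $K$ gets utility at least $0$ in $\pi'$, since replicas value each other at $10$. Agent $\ell$ in fact strictly improves, from negative to nonnegative.

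\emph{(ii)} Agents in $S\setminus K$ do not lose. By the construction, every non-$L$ agent assigns to an $L$-agent either $-\infty$ or a nonnegative value only in the specified cases, and $L$-agents are valued only by $L$-agents, $A$/$G$-agents of their own gadget, and $\wand_i,\zand_i$. So removing $K$ might hurt an agent $b\in S\setminus K$ that values $\ell$'s origin positively.

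\emph{(iii)} To handle (ii), we instead choose the coalition to split off more carefully. For $b$ to value $\ell$ positively, $b$ would have to be a gadget agent of the same gadget as $\ell$. We then show that $b$ together with $\ell'$ (an $L$-agent from a different origin) gives $b$ a negative utility whenever $b$'s positive values are bounded. An alternative that avoids this is to remove $\ell'$'s whole class $K'$ instead, choosing whichever of $K$ or $K'$ has no positive-valuing agents outside it in $S$.

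The main obstacle is step (ii)/(iii): making sure that splitting off an $L$-class harms nobody. The plan is to note that any $b\notin L$ that values an $L$-agent positively gives it at most $3$, by the valuations listed. Such a $b$ sits with $\ell'$ (or $\ell$) in $S$, which carries a $-\infty$ valuation from $b$ unless $\ell'$ belongs to $b$'s own gadget. A short case check over the gadgets (for instance, $\gf_i$ values both $\lg_i$ and, only via $\lw$ edges, nothing else in $L$) shows that at most one $L$-origin class in $S$ can be positively valued by the non-$L$ members of $S$ with finite utility. Agents already at $-\infty$-negative utility can only gain from the removal of $L$-agents they dislike. Removing the other class $K'$ into its own coalition then makes every agent weakly better off and $\ell'$ strictly better off. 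This contradicts the Pareto-optimality of $\pi$.
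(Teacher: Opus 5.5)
Your first step is fine: items 2(a) and 2(b) of the replica definition show that $\ell$ and $\ell'$ assign $-\infty$ to each other. Like the paper, you read the hypothesis as ``$\ell,\ell'$ lie in different origin--replica families''. The gap is in (ii)/(iii): splitting off a single class $K$ or $K'$ is not a Pareto improvement in general.

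You justify it by saying that agents whose utility already contains a $-\infty$ term ``can only gain''. That treats $-\infty$ as absorbing. In this paper, however, $-\infty$ is a large \emph{finite} negative number ($|N|$ times the largest positive valuation). Such an agent is therefore strictly worse off as soon as it loses a positive value.

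Concretely, let $\pi(\ell)=\{\lg_i,\gt_i,\ls_j,\asst_j\}$ with $\ell=\lg_i$ and $\ell'=\ls_j$. Removing the class of $\lg_i$ lowers $u_{\gt_i}$ by $\mathbf{v}_{\gt_i}(\lg_i)>0$: it goes from $\mathbf{v}_{\gt_i}(\lg_i)-\infty+\mathbf{v}_{\gt_i}(\asst_j)$ to $-\infty+\mathbf{v}_{\gt_i}(\asst_j)$. Removing the class of $\ls_j$ lowers $u_{\asst_j}$ from $1-\infty$ to $-\infty$. So neither $K$ nor $K'$ can be split off. Your case check about members ``with finite utility'' is vacuous here, since no member of this coalition has a utility free of $-\infty$ terms. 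It says nothing about exactly the agents that get hurt.

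The missing observation is the one you half-state in (iii), applied to \emph{every} member of $S=\pi(\ell)$. Each agent assigns a value other than $-\infty$ only to $L$-agents of at most one origin--replica family:
\begin{itemize}
\item $X$- and $V$-agents to none;
\item agents of $\Ta_i\cup\Fa_i$ only to $\La_i$;
\item $G$-agents of $\GGD_i$ only to $\Lg_i$;
\item $\wand_i,\zand_i$ and their replicas only to their own $\Lw$- or $\Lz$-family;
\item $L$-agents only to their own family (using item 2(b) for replicas).
\end{itemize}
Since $\ell$ and $\ell'$ lie in different families, every member of $S$ assigns $-\infty$ to one of them and so has negative utility. Dissolving $S$ into singletons raises every member to utility $0$ and leaves everyone outside $S$ unaffected. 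This is a Pareto improvement, contradicting the choice of $\pi$. It is exactly the paper's two-line proof, and it makes your class-splitting analysis unnecessary.
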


\begin{proof}
Assume for contradiction that $\ell'\in\pi(\ell)$.
Notice that, since they are not an origin-replica-pair, every agent assigns $-\infty$ to at least one of $\ell$ and $\ell'$, and therefore every agent in $\pi(\ell)$ obtains negative utility. Hence, it is a Pareto improvement to dissolve this coalition into singletons, a contradiction to the Pareto optimality of $\pi$.
\end{proof}

\begin{lemma}
\label{lem_d1:X_margin}
Let $x\in X$. We have that $\phi_x(\pi^*,\pi)\geq 0$.
\end{lemma}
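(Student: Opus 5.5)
We need to show that every $X$-agent $x$ weakly prefers $\pi^*$ to $\pi$, i.e.\ $u_x(\pi^*)\ge u_x(\pi)$. The natural approach is to compute $u_x(\pi^*)$ exactly and then bound $u_x(\pi)$ from above by the same quantity.

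\emph{Utility in $\pi^*$.} By construction, every $X$-agent lies in $\MC^*$. An $X$-agent assigns value $1$ to every other $X$-agent and value $0$ to every non-$X$-agent. Checking the list of valuations confirms this: $X$-agents never appear as the valuing side except toward other $X$-agents. The $L$-rule gives $-\infty$ between $L$-agents and every agent not otherwise specified. However, $X$-agents are placed in $\MC^*$, which contains no $L$-agents: in $\pi^*$ all $L$-agents sit in their alternative coalitions $\Fa_i\cup\La_i$, $\Ta_i\cup\La_i$, $\Fg_i\cup\Lg_i$, $\Wand_i$, and so on. Hence $u_x(\pi^*)=|X|-1$, which is the maximum nonnegative contribution $x$ could ever collect from other $X$-agents.

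\emph{Upper bound in $\pi$.} In any partition, $x$ receives at most $1$ from each other $X$-agent in its coalition and $0$ from every non-$L$ non-$X$ agent. The only potential issue is that $x$ assigns $-\infty$ to $L$-agents, by the clause ``any $L$-agent assigns $-\infty$ to all agents and vice versa''. But that only lowers $u_x(\pi)$. So in every case $u_x(\pi)\le |\pi(x)\cap X|-1\le |X|-1=u_x(\pi^*)$, which gives $\phi_x(\pi^*,\pi)\ge 0$.

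\emph{Main obstacle.} The only real work is verifying, from the construction of $\pi^*$, that $\MC^*$ contains no $L$-agent and that no positive valuation from an $X$-agent toward a non-$X$-agent is hidden anywhere in the valuation list. I expect this to be a routine inspection of the bullet list. Replicas inherit their origin's outgoing valuations, and the origins' valuations toward $X$-agents are irrelevant, since we only need the values that $x$ itself assigns.
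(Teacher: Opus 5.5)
Your proof is correct and follows the paper's argument: $x$ values only $X$-agents positively and only $L$-agents negatively, and since $X\subseteq\MC^*$ and $L\cap\MC^*=\emptyset$, $x$ attains its maximum possible utility $|X|-1$ in $\pi^*$. Your opening claim that $x$ assigns $0$ to every non-$X$-agent should say ``every non-$X$, non-$L$ agent'', as you correct yourself in the next sentences.
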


\begin{proof}
Agent $x$ only assigns a positive value to other $X$-agents, and only assigns a negative value to $L$-agents. Recall that by definition of $\pi^*$ we have that $X\subseteq \MC^*$ and $L\cap \MC^*=\emptyset$. Thus, $x$ obtains the maximum possible utility in $\pi^*$. 
Therefore we have $\phi_x(\pi^*,\pi)\geq 0$.
\end{proof}

\begin{lemma} 
\label{lem_d1:X_coalition}
It holds that $X\subseteq \MC$.
\end{lemma}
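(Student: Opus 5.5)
We argue by contradiction: assume some $X$-agent lies outside $\MC=\pi(x_1)$. The plan is to show that then a strict majority of agents prefers $\pi^*$ to $\pi$, i.e.\ $\phi(\pi^*,\pi)>0$, contradicting the choice of $\pi$. The key quantitative fact is $|X|>\frac{|N|}{2}$, so it suffices to show that \emph{every} $X$-agent strictly prefers $\pi^*$ to $\pi$.

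First, we describe what $x\in X$ obtains in $\pi^*$. By construction $X\subseteq\MC^*$ and $L\cap\MC^*=\emptyset$. Agent $x$ values other $X$-agents at $1$, $L$-agents at $-\infty$, and everyone else at $0$. Hence $u_x(\pi^*)=|X|-1$, which is the maximum possible utility for $x$; this is exactly the observation behind \Cref{lem_d1:X_margin}.

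Second, we show that in $\pi$ every $X$-agent falls strictly short of this maximum. Suppose $X\not\subseteq\MC$.
\begin{itemize}
\item If $X$ is split among at least two coalitions of $\pi$, then each $x\in X$ misses at least one other $X$-agent. So $u_x(\pi)\le |X|-2<u_x(\pi^*)$, since $x$ gains positive value only from $X$-agents.
\item The only other way $X\not\subseteq\MC$ could occur is if $X$ lies entirely in one coalition. But $x_1\in X$ and $\MC=\pi(x_1)$, so in that case $X\subseteq\MC$. Hence this case cannot arise, and $X$ is necessarily split.
\end{itemize}
Therefore $\phi_x(\pi^*,\pi)=1$ for every $x\in X$.

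Third, we sum the margins. The remaining agents contribute at least $-(|N|-|X|)$ in total, so
\[
\phi(\pi^*,\pi)\ \ge\ |X|-(|N|-|X|)\ =\ 2|X|-|N|\ >\ 0,
\]
using $|X|>\frac{|N|}{2}$. This contradicts $\phi(\pi^*,\pi)\le 0$.

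The argument involves no real obstacle. The only point needing care is the counting $|X|>\frac{|N|}{2}$ from the setup, in particular checking that the bound on non-$X$-agents includes the replicas. Pareto optimality of $\pi$ is not needed for this lemma.
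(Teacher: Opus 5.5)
Your proof is correct and follows the paper's argument: if some $X$-agent lies outside $\MC=\pi(x_1)$, every $x\in X$ gets at most $|X|-2<|X|-1=u_x(\pi^*)$, so $\phi_X(\pi^*,\pi)=|X|>\frac{|N|}{2}$ forces $\phi(\pi^*,\pi)>0$; you simply spell out the step the paper calls ``clear''. Your caution about the count $|X|>\frac{|N|}{2}$ is warranted, since the setup's bound of $10$ agents per gate undercounts And-gadgets (each has $14$ agents), but this is harmless because $|X|$ can just be chosen larger.
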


\begin{proof}
Assume for contradiction that there exists some $x\in X\bs\{x_1\}$ such that $x\notin\pi(x_1)=\MC$. Then clearly all $X$-agents prefer $\pi^*$ over $\pi$, and so $\phi_{X}(\pi^*,\pi) = |X|$. 
By design, $|X|>\frac{|N|}{2}$, and therefore $\pi^*$ is more popular than $\pi$, a contradiction. 
\end{proof}

\begin{lemma}
\label{lem_d1:L_notin_MC}
It holds that $L\cap \MC=\emptyset$.
\end{lemma}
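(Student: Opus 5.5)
By \Cref{lem_d1:X_coalition}, we already know $X\subseteq\MC$. So the plan is to suppose some $\ell\in L$ lies in $\MC$ and then contradict the Pareto optimality of $\pi$, in the same way as in \Cref{lem_d1:L_separated}.

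The key observation comes from the valuations. Every $X$-agent assigns $-\infty$ to every $L$-agent; this follows from the rule ``any $L$-agent assigns $-\infty$ to all agents and vice versa, apart from specified cases'', and no $X$--$L$ value is ever specified. It also holds for replicas of $L$-agents. For a replica $\ell'$ of an origin $\ell$, \Cref{def:replica-family} gives $\mathbf{v}_b(\ell')=\mathbf{v}_b(\ell)$ whenever $\mathbf{v}_b(\ell)<0$, and this applies with $b\in X$. Similarly, $\ell$ (or its replica) assigns $-\infty$ to every $X$-agent. Hence, if some $\ell\in L$ were in $\MC$, every $X$-agent would obtain negative utility in $\pi$, since $-\infty$ dominates everything else.

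The cleanest finish is a popularity count rather than a Pareto argument. In $\pi^*$, every $X$-agent has nonnegative (indeed positive) utility. So if $L\cap\MC\neq\emptyset$, every $x\in X$ strictly prefers $\pi^*$, which gives $\phi_X(\pi^*,\pi)=|X|>\frac{|N|}{2}$. The agents outside $X$ number fewer than $|N|/2$, so $\phi(\pi^*,\pi)>0$, contradicting the choice of $\pi$. This mirrors \Cref{lem_d1:X_coalition} exactly.

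Alternatively, one could argue via Pareto optimality by removing $\ell$ (together with its replicas in $\MC$) into its own coalition and checking that nobody is hurt. That route is messier, because the $A$- and $G$-agents in $\MC$ may value $\ell$ positively. So I will use the counting argument.

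The only step needing care is verifying that no specified valuation gives an $X$-agent a non-$(-\infty)$ value toward an $L$-agent, including replicas. This is a direct check of the valuation list together with clause 2(b) of \Cref{def:replica-family}, and I expect no real obstacle beyond it.
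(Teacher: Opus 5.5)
Your proof is correct and is essentially the paper's argument: since $X\subseteq\MC$ by \Cref{lem_d1:X_coalition}, any $L$-agent in $\MC$ gives every $X$-agent negative utility. Hence all of $X$ strictly prefers $\pi^*$, and $|X|>\frac{|N|}{2}$ yields $\phi(\pi^*,\pi)>0$, contradicting the choice of $\pi$ (your opening mention of a Pareto-optimality contradiction is superseded by the counting argument you actually use, and your check that replicas of $L$-agents also receive $-\infty$ from $X$-agents is a detail the paper leaves implicit).
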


\begin{proof} 
Assume otherwise. Then by \Cref{lem_d1:X_coalition} we have that all $X$-agents obtain a negative utility in $\pi$, and thus prefer $\pi^*$. 
By design, $|X|>\frac{|N|}{2}$, and therefore $\pi^*$ is more popular than $\pi$, a contradiction.
\end{proof}

\begin{lemma}
\label{lem_d1:V_coalition}
It holds that $V\subseteq \MC$.
\end{lemma}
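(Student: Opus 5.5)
Assume for contradiction that some voter $v_j\notin\MC$. The aim is to show that $\pi$ cannot be Pareto-optimal. The construction is a Pareto improvement $\pi'$ obtained by moving a voter into $\MC$; this contradicts the Pareto-optimality assumed for $\pi$.

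First, record which agents care about voters at all. By the construction, the only agents assigning nonzero value to a $V$-agent are other $V$-agents, which assign $2^{n+1}$, and $L$-agents, which assign $-\infty$. Every $X$-, $A$-, $G$-, $W$- and $Z$-agent assigns $0$ to voters. By \Cref{lem_d1:L_notin_MC}, $\MC$ contains no $L$-agent. Hence adding a voter to $\MC$ changes the utility of no agent in $\MC$ except voters already there, and those only gain.

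Now consider the voters outside $\MC$ and let $S$ be the set of all of them. Let $\pi'$ be the partition obtained from $\pi$ by removing every voter in $S$ from its coalition and adding all of them to $\MC$. The check is agent by agent.

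Non-voter agents in $\MC$ are indifferent, by the observation above. Voters already in $\MC$ gain $2^{n+1}$ for each newcomer.

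Agents left behind in the old coalitions of the moved voters lose only voters. Every such agent values a voter at $0$, except $L$-agents, which value voters at $-\infty$ and so are better off without them. Other voters left behind cannot exist, since all voters outside $\MC$ were moved together.

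Each moved voter $v\in S$ previously had utility at most $\sum_{j\in[n]}2^j+2^{n+1}(m-1-|V\cap\MC|)$. Its new utility comes from all other voters, from $x_1$, and from whichever output agents $\ot_j$ lie in $\MC$. This is at least $2^{n+1}(m-1)+2^{n+1}$, because all its valuations are nonnegative except towards $L$-agents, which are absent from $\MC$. Voters with a $-\infty$ towards an $L$-agent in their old coalition only gain by leaving.

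Comparing the two bounds, the new utility exceeds the old one by at least $2^{n+1}-\sum_{j\in[n]}2^j=2$, using $\sum_{j=1}^n 2^j=2^{n+1}-2$. So every moved voter strictly gains, and $\pi'$ is a Pareto improvement, contradicting the Pareto-optimality of $\pi$.

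The main obstacle is this bound for a moved voter. The case to watch is a voter alone or with few others whose old coalition contained all the $\ot_j$ agents it values. The counting above, where $x_1$ together with the other voters is worth $2^{n+1}$ more than all output agents combined, handles exactly that case. A second point to check is that moving all outside voters simultaneously, rather than one at a time, is what avoids hurting voters left behind.
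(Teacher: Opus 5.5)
Your proof is correct and is essentially the paper's argument. You move every voter outside $\MC$ into $\MC$, and you note that only voters value voters positively, while only $L$-agents value them negatively (and none of these lie in $\MC$ by \Cref{lem_d1:L_notin_MC}). Each moved voter's utility then rises from strictly below $m\cdot 2^{n+1}$ to at least $m\cdot 2^{n+1}$, which gives a Pareto improvement; your explicit slack of $2^{n+1}-\sum_{j=1}^{n}2^j=2$ is just a more detailed form of the paper's strict bound $u_{v'}(\pi)<m\cdot 2^{n+1}$.
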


\begin{proof}
Fix some voter agent $v\in V$ and assume towards contradiction that $v\notin \MC$. 
Recall that voter agents assign a valuation of $2^{n+1}$ to other voter agents and to $x_1$, and positive valuations of $2^1,2^2,\dots, 2^n$ to agents corresponding to $n$ output gates in their circuit. All their other valuations are non-positive. 
Hence, for every voter agent $v'\in V$, we have $u_{v'}(\pi)<m\cdot 2^{n+1}$ (since either $v\notin\pi(v')$ or $x_1\notin\pi(v')$, and so even the sum of all other positive values $v'$ assigns is less than $m\cdot 2^{n+1}$).    
Hence, it is a Pareto improvement to extract all $V$-agents from their coalitions and add them to $\MC$: All $V$-agents will prefer this since after this change $\MC$ contains no $L$-agents (\Cref{lem_d1:L_notin_MC}), all $V$-agents, and $x_1$, and thus they would obtain a utility of at least $m\cdot 2^{n+1}$.
Furthermore observe that agents in $N\bs V$ never assign a positive value to $V$-agents, and agents in $N\bs L$ never assign a negative value to $V$-agents.
Therefore, no agent will be worse off by applying this change. 
Hence, we have a contradiction to Pareto-optimality of $\pi$.
\end{proof}

\begin{lemma}
\label{lem_d1:assignment_margin}
Let $\XX_i$ be an input bit, and denote the corresponding assignment gadget by $\AGD_i=\Fa_i\cup\Ta_i\cup\La_i$. Then we have that $\phi_{\AGD_i}(\pi^*,\pi)\geq 0$.
\end{lemma}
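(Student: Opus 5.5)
The plan is to bound, agent by agent, the utility each member of $\AGD_i$ can obtain in any partition against what it obtains in $\pi^*$. I will show that almost every agent of the gadget already attains its maximum achievable utility in $\pi^*$. The one exception is the $\La_i$-agents, and whenever they do strictly better in $\pi$, at least as many other gadget agents do strictly worse.

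Throughout, fix the case $\XX^*_i=1$; the case $\XX^*_i=0$ is symmetric with the roles of $\Ta_i$ and $\Fa_i$ swapped. I first compute the utilities in $\pi^*$, using \Cref{def:replica-family} for the replicas.
\begin{itemize}
\item Every agent of $\Ta_i$ lies in $\MC^*$ together with $x_i$ and the other $m$ members of $\Ta_i$. It assigns $\pm\infty$ only to $\assf_i$, $\Fa_i$, $\gf$ of the Copy-gates and $L$-agents, none of which are in $\MC^*$. So its utility is $10m+1$.
\item Every agent of $\Fa_i$ is with all of $\La_i$, so its utility is $10m+1$, coming from the $10$'s to its family and the value $1$ to $\ls_i$. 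The replicas of $\ls_i$ are worth $0$ to it by clause 2(b).
\item Every agent of $\La_i$ gets $10m$ from its family plus $1$ from $\assf_i$, since the replicas of $\assf_i$ are worth $0$ to it. So its utility is also $10m+1$.
\end{itemize}

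Next I bound what these agents can get in $\pi$.
\begin{itemize}
\item An $A$-agent's only positive values are $10$ to its own family and $1$ each to $x_i$ and $\ls_i$. By \Cref{lem_d1:X_coalition} we have $x_i\in\MC$, and by \Cref{lem_d1:L_notin_MC} we have $\ls_i\notin\MC$, so it cannot collect both $1$'s. Hence every $A$-agent of the gadget has utility at most $10m+1$ in $\pi$, and therefore $\phi_p(\pi^*,\pi)\ge 0$ for each $p\in\Ta_i\cup\Fa_i$.
\item An $\La_i$-agent has utility at most $10m+2$, and $10m+2$ is attained only if all of $\La_i$, $\asst_i$ and $\assf_i$ share a coalition. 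Any value above $10m+1$ is in fact $10m+2$, since all valuations are integers.
\end{itemize}

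The main case, and the main obstacle, is therefore when some $\La_i$-agent strictly prefers $\pi$. Then $\asst_i$ and $\assf_i$ are together, so both have utility $-\infty$; indeed they get negative utility since they value each other at $-\infty$. I then claim that every replica in $\Ta_i\cup\Fa_i$ is also strictly worse off. Such a replica is either with its origin, which it values at $-\infty$ because the negative valuation toward the opposite origin is copied, or it is separated from its origin. In the latter case its utility is at most $10(m-1)+2<10m+1$. So all $2(m+1)$ agents of $\Ta_i\cup\Fa_i$ prefer $\pi^*$, while at most $m+1$ agents (those of $\La_i$) prefer $\pi$. This gives $\phi_{\AGD_i}(\pi^*,\pi)\ge 2(m+1)-(m+1)>0$.

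In the complementary case no agent of $\AGD_i$ strictly prefers $\pi$, so the margin is nonnegative. Either way $\phi_{\AGD_i}(\pi^*,\pi)\ge 0$. The delicate points to check are the replica valuations, in particular that replicas of $A$-agents contribute $0$ to $\La_i$-agents and vice versa. These are what keep the upper bounds at exactly $10m+1$ and $10m+2$.
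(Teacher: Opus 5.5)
Your proposal is correct and follows essentially the same argument as the paper: agents of $\Ta_i\cup\Fa_i$ cannot gain because $x_i\in\MC$ and $\ls_i\notin\MC$, and any $\La_i$-agent that gains forces $\asst_i$ and $\assf_i$ into one coalition, which makes all $2(m+1)$ agents of $\Ta_i\cup\Fa_i$ strictly worse off. You additionally spell out the $\pi^*$-utilities and the replica case that the paper leaves implicit; note one wording slip there: a replica kept with its origin loses because it values the \emph{opposite} origin at $-\infty$, not its own origin.
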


\begin{proof}
Assume for contradiction that $\phi_{\AGD_i}(\pi^*,\pi)<0$. 
Note that, by \Cref{lem_d1:X_coalition,lem_d1:L_notin_MC}, we have that $\ls_i\notin\pi(x_i)$, and therefore agents in $\Ta_i$ and $\Fa_i$ cannot prefer $\pi$ over $\pi^*$. 
Hence, some agent $\ell\in\La_i$ must prefer $\pi$, and so we must have $\{\asst_i,\assf_i\}\subseteq \pi(\ell)$. 
This implies that all agents in $\Ta_i\cup\Fa_i$ prefer $\pi^*$, a contradiction to $\phi_{\AGD_i}(\pi^*,\pi)<0$.
\end{proof}


\begin{lemma}
\label{lem_d1:w_z_balance}
Let $\GGD_i$ be an And-gate. 
If an agent in $\Wand_i$ prefers $\pi$ over $\pi^*$, then $\phi_{\Fg_i\cup\Wand_i}(\pi^*,\pi)\ge 2$ and $\phi_{\Zand_i}(\pi^*,\pi)\ge 0$.
If an agent in $\Zand_i$ prefers $\pi$ over $\pi^*$, then $\phi_{\Fg_i\cup\Zand_i}(\pi^*,\pi)\ge 2$ and $\phi_{\Wand_i}(\pi^*,\pi)\ge 0$.
\end{lemma}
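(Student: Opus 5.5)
The plan is to identify exactly when an agent of $\Wand_i$ can be strictly better off in $\pi$ than in $\pi^*$, and to show that this configuration forces enough losers among $\Fg_i$ and the $\Lw$-agents. The $\Zand_i$ case is symmetric.

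I read the valuations of $\wand_i$ (resp.\ $\zand_i$) as assigning $1$ to $\lw_i$ (resp.\ $\lz_i$) and $1$ to $\gf_i$; the ``$\gf_0$'' and the second ``$\lw_i$'' in the construction appear to be typos. The replica rules of \Cref{def:replica-family} then fix the values toward replicas. For instance, $\wand_i$ assigns $0$ to $\lww_i$ and to $\ggf_i$, since $\mathbf{v}_{\wand_i}(\lw_i),\mathbf{v}_{\wand_i}(\gf_i)\ge 0$.

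\textbf{Step 1: utilities in $\pi^*$.} By construction of $\pi^*$, in every case $\wand_i$ and $\wwand_i$ obtain utility exactly $11$. Either $\Wand_i$ is a coalition, giving $10$ from each other and $1$ from $\lw_i$. Or $\wand_i,\wwand_i\in\MC^*$, which happens only when $\XX^*(\GG_j)=1$ and $\XX^*(\GG_k)=0$; then $\gf_i\in\MC^*$ and contributes $1$. Agents $\lw_i,\lww_i$ obtain $10$ in $\pi^*$, which is their maximum possible utility, since they assign $10$ to each other and $0$ or $-\infty$ to everyone else. Hence $\lw_i,\lww_i$ never prefer $\pi$, and likewise $\lz_i,\lzz_i$.

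\textbf{Step 2: the only way an agent of $\Wand_i$ can prefer $\pi$.} The positive values of $\wand_i$ (and of $\wwand_i$, by symmetry of the replica) are $10$ to its partner, $1$ to $\lw_i$ and $1$ to $\gf_i$, for a maximum of $12$. So it prefers $\pi$ only if $\{\wand_i,\wwand_i,\lw_i,\gf_i\}\subseteq\pi(\wand_i)$.

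\textbf{Step 3: counting the losers.} In that configuration $\gf_i$ shares a coalition with the $L$-agent $\lw_i$, to which it assigns $-\infty$, so $\gf_i$ has negative utility. Its utility in $\pi^*$ is at least $10$, since $\ggf_i$ is always with it. Agent $\ggf_i$ either is in the same coalition, and so also has negative utility, or is separated from $\gf_i$. In the latter case its utility is at most $1+2+3+1+2=9<10$. Thus both agents of $\Fg_i$ prefer $\pi^*$.

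Similarly, $\lw_i$ has $-\infty$ toward $\gf_i$, so it prefers $\pi^*$. Agent $\lww_i$ is either in the same coalition, giving negative utility, or apart from $\lw_i$, giving utility at most $0<10$. So it prefers $\pi^*$ too. At most $\wand_i,\wwand_i$ prefer $\pi$, so
\[\phi_{\Fg_i\cup\Wand_i}(\pi^*,\pi)\ge 4-2=2.\]

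\textbf{Step 4: $\phi_{\Zand_i}(\pi^*,\pi)\ge 0$.} By Step 1, $\lz_i,\lzz_i$ cannot prefer $\pi$. By the analogue of Step 2, $\zand_i$ or $\zzand_i$ preferring $\pi$ would require $\zand_i,\lz_i,\gf_i$ in one coalition. But $\gf_i$ is already with $\lw_i$, so $\lz_i$ and $\lw_i$ would share a coalition, contradicting \Cref{lem_d1:L_separated}. Hence no agent of $\Zand_i$ prefers $\pi$, and the margin is nonnegative.

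The $\Zand_i$ statement follows identically, swapping the roles of $\wand_i,\lw_i$ and $\zand_i,\lz_i$.

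\textbf{Main obstacle.} The delicate part is the bookkeeping with replicas in Steps 1 and 3. One must confirm that $\pi^*$ gives $\wand_i$ exactly $11$ in both cases of its definition. One must also confirm that $\ggf_i$ always loses, whether or not it follows $\gf_i$, which relies on its outside positive values summing to less than $10$.
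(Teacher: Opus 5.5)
Your proof is correct and follows the same route as the paper: the agents $\lw_i,\lww_i$ already have their maximum utility $10$ in $\pi^*$, and an agent of $\Wand_i$ can only improve on its utility of $11$ by sharing a coalition with $\lw_i$ and $\gf_i$. That configuration makes $\gf_i,\ggf_i,\lw_i,\lww_i$ all strictly worse off, which gives margin at least $4-2=2$. Your reading of the valuations ($\gf_i$ for ``$\gf_0$'', $\lz_i$ for the second $\lw_i$) is the intended one, and the paper's written proof has $\gt_i,\Tg_i$ where $\gf_i,\Fg_i$ are meant; your explicit replica bookkeeping and your justification of $\phi_{\Zand_i}(\pi^*,\pi)\ge 0$ supply details the paper leaves implicit.
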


\begin{proof}
First, observe that agents in $\{\lw_i,\lww_i,\lz_i,\lzz_i\}$ obtain maximal utility in $\pi^*$ and thus cannot prefer $\pi$.
Thus, if $\wand_i$ or $\wwand_i$ prefer $\pi$, then $\{\gt_i,\lw_i\}\subseteq\pi(\wand_i)$, implying that $\phi_{\Tg_i\cup\{\lw_i,\lww_i\}}(\pi^*,\pi)=4$ while $\phi_{\Zand_i}(\pi^*,\pi)\ge 0$, implying the required result.

If $\zand_i$ or $\zzand_i$ prefer $\pi$, the argument is symmetrical.
\end{proof}

\begin{lemma}
\label{lem_d1:w_z_gf_tradeoff}
Let $\GG_i=\GG_j\land\GG_k$ be an And-gate. If $\Fg_i\cap\pi(\lg_i)=\emptyset$ or $\Tg_i\cap\pi(\lg_i)\neq\emptyset$, then $\phi_{\Fg_i\cup\Wand_i\cup\Zand_i}(\pi^*,\pi)\geq 0$.
\end{lemma}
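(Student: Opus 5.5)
The plan is to split on whether some agent of $\Wand_i\cup\Zand_i$ prefers $\pi$ over $\pi^*$. In each case I will show that the agents of $\Fg_i$ who prefer $\pi$ are outweighed by agents of $\Wand_i\cup\Zand_i$ who prefer $\pi^*$.

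\textbf{Step 1: utilities in $\pi^*$.} In $\pi^*$, each of $\gf_i,\ggf_i$ has utility exactly $13$. The $10$ comes from the mutual replica value. The remaining $3$ comes as follows. By \Cref{def:replica-family}, $\gf_i$ assigns value $0$ to replicas of agents it values positively, so only origins count. If $\XX^*(\GG_i)=1$, the remaining $3$ comes from $\lg_i$. If $\XX^*(\GG_i)=0$, it comes from exactly one of the pairs $\{\gf_j,\gf_k\}$, $\{\gf_j,\zand_i\}$, $\{\gf_k,\wand_i\}$ in $\MC^*$, according to the three cases of the construction; each pair is worth $1+2=3$.

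Similarly, the agents $\lw_i,\lww_i,\lz_i,\lzz_i$ get their maximum utility $10$ in $\pi^*$. The agents $\wand_i,\wwand_i,\zand_i,\zzand_i$ get non-negative utility in $\pi^*$.

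\textbf{Step 2: some agent of $\Wand_i\cup\Zand_i$ prefers $\pi$.} This case is handled directly by \Cref{lem_d1:w_z_balance}. If the agent lies in $\Wand_i$, the lemma gives $\phi_{\Fg_i\cup\Wand_i}(\pi^*,\pi)\ge 2$ and $\phi_{\Zand_i}(\pi^*,\pi)\ge0$, and these sum to a non-negative margin. The case of an agent in $\Zand_i$ is symmetric.

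\textbf{Step 3: no agent of $\Wand_i\cup\Zand_i$ prefers $\pi$.} Then $\phi_{\Wand_i\cup\Zand_i}(\pi^*,\pi)\ge0$, and it suffices to show the following: whenever some agent of $\Fg_i$ prefers $\pi$, at least two agents of $\Wand_i\cup\Zand_i$ prefer $\pi^*$.

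\emph{Getting $13$ requires both copies together.} Since $10$ exceeds the sum of all other positive values of $\gf_i$, an agent of $\Fg_i$ can reach utility $13$ or more only if $\gf_i$ and $\ggf_i$ are in the same coalition.

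\emph{Using the hypothesis.} If $\Tg_i\cap\pi(\lg_i)\neq\emptyset$ and $\Fg_i$ is also in $\pi(\lg_i)$, then $\gf_i$ sits with $\gt_i$ or $\ggt_i$. It then gets $-\infty$, by the definition of replicas for $\ggt_i$. So in either branch of the hypothesis, a preferring $\gf_i$ is not with $\lg_i$. (It is also not with $\lgg_i$, which it values at $0$ at best.)

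\emph{Forcing a $-\infty$ pair.} So $\gf_i$ must collect more than $3$ from $\{\gf_j,\wand_i,\gf_k,\zand_i\}$, with values $1,1,2,2$. Every subset worth more than $3$ contains one of the pairs $\{\gf_j,\wand_i\}$, $\{\gf_k,\zand_i\}$, $\{\wand_i,\zand_i\}$. In each such pair, $\wand_i$ or $\zand_i$ assigns $-\infty$ to the other member, so that agent has negative utility and prefers $\pi^*$.

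\emph{A second agent from the replica.} Say it is $\wand_i$. Then $\wwand_i$ also prefers $\pi^*$. If $\wwand_i$ is with $\wand_i$, it inherits the same $-\infty$ valuation. If it is not, it loses the $10$, which exceeds all its other positive values. This gives the two agents of $\Wand_i\cup\Zand_i$ preferring $\pi^*$, which offset the at most two agents of $\Fg_i$ preferring $\pi$.

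\textbf{Main obstacle.} The delicate part is the bookkeeping in Step 3. I must verify that replica valuations are inherited correctly, including $-\infty$ values toward replicas such as $\ggf_j$ and $\ggt_i$. I must also confirm that $\gf_i$'s positive values to replicas are $0$, so that a replica cannot substitute for an origin.
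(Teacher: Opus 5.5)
Your proof is correct and follows the paper's argument. You use the same split on whether an agent of $\Wand_i\cup\Zand_i$ prefers $\pi$ (handled by \Cref{lem_d1:w_z_balance}); otherwise you show that a preferring agent of $\Fg_i$ cannot sit with $\lg_i$, so it must collect more than $3$ from $\{\gf_j,\wand_i,\gf_k,\zand_i\}$, which forces $\wand_i$ or $\zand_i$, together with its replica, to prefer $\pi^*$.

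Your bookkeeping is tighter than the paper's. The paper claims at least three of those four agents are needed, but the pair $\{\gf_k,\zand_i\}$ already gives $4>3$. It also claims that all of $\Wand_i$ or all of $\Zand_i$ prefer $\pi^*$. You instead enumerate the three forcing pairs and count only the origin and its replica, which is all the argument needs.
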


\begin{proof}
If any agent in $\Wand_i\cup\Zand_i$ prefers $\pi$ over $\pi^*$, then we are done by \Cref{lem_d1:w_z_balance}.
Assume, therefore, that agents of $\Wand_i\cup\Zand_i$ do not prefer $\pi$.
Thus, if no agent in $\Fg_i$ prefers $\pi$, we are again done.
So assume that some agent $\ga\in\Fg_i$ prefers $\pi$. 
Then we must have $\lg_i\notin\pi(\ga)$ (since if $\lg_i\in\pi(\ga)$ then by the statement of this lemma we have $\Tg_i\cap\pi(\lg_i)\neq\emptyset$, and so $\ga$ prefers $\pi^*$, a contradiction). 
Hence, in order for $p$ to prefer $\pi$ we must have $|\pi(\ga)\cap\{\wand_i,\zand_i,\gf_k,\gf_j\}|\ge 3$. Thus, either all agents in $\Wand_i$ or all agents in $\Zand_i$ prefer $\pi^*$. 
Since we additionally established that no agent in $\Wand_i\cup\Zand_i$ prefers $\pi$, we have 
$\phi_{\Fg_i\cup\Wand_i\cup\Zand_i}(\pi^*,\pi)\geq 0$.
\end{proof}

\begin{lemma}
\label{lem_d1:gate_margins}
Let $\GG_i$ be a gate with corresponding gadget $\GGD_i$. Then we have $\phi_{\GGD_i}(\pi^*,\pi)\geq 0$.
\end{lemma}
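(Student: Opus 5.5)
Fix a gate $\GG_i$ and argue by cases on the gate type, using the structure of $\pi^*$. In $\pi^*$, the ``true'' core pair of the gadget, say $\Tg_i$ if $\XX^*(\GG_i)=1$, lies in $\MC^*$. The ``false'' pair $\Fg_i$ sits together with $\Lg_i$ in a coalition. The first observation is that $\lg_i$ and $\lgg_i$ obtain utility $1$ (or $1$ from the core agent plus the replica value $10$) in $\pi^*$. By \Cref{lem_d1:L_separated} and \Cref{lem_d1:L_notin_MC}, the most they can get in $\pi$ is having both $\gt_i$ and $\gf_i$ with them. That configuration gives every agent of $\Tg_i\cup\Fg_i$ utility $-\infty$, so they all prefer $\pi^*$.

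So the first case is $\{\gt_i,\gf_i\}\subseteq\pi(\lg_i)$. The $4$ agents of $\Tg_i\cup\Fg_i$ vote for $\pi^*$ and at most $2$ agents of $\Lg_i$ vote for $\pi$. For an And-gate I also need the $\Wand_i\cup\Zand_i$ agents; I handle them with \Cref{lem_d1:w_z_balance}, which gives nonnegative margin on each of $\Wand_i,\Zand_i$ whenever some member prefers $\pi$. Otherwise they contribute $\ge 0$ trivially. Hence the margin is $\ge 2>0$.

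Otherwise $\Lg_i$ does not prefer $\pi$, since its utility in $\pi^*$ is already the best achievable without both core agents. It remains to bound the core agents $\Tg_i\cup\Fg_i$, and for And-gates also $\Wand_i\cup\Zand_i$.

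\textbf{Copy- and Not-gadgets.} In $\pi^*$ the core agent in $\MC^*$ receives value $1$ from its input agent, which is in $\MC^*$ by construction of $\pi^*$. The agent with $\lg_i$ receives $1$ from $\lg_i$. Each of $\gt_i,\gf_i$ values only its designated input agent and $\lg_i$ positively, so its maximum utility is $2$. Achieving $2$ requires $\lg_i$ and that input agent together, and the input agent then lies in an $L$-containing coalition. I would check that this forces the input agent, or the complementary core agent, to get $-\infty$ or to lose utility. The cleaner route is: if a core agent $\ga$ of $\GGD_i$ strictly prefers $\pi$, then $\pi(\ga)\ni\lg_i$ together with its input agent, so the other core agent is not with $\lg_i$. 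Then the two agents of the other pair either lose $\lg_i$ or end with the $-\infty$ partner, and at most one pair can prefer $\pi$. This gives margin $\ge 0$ within $\Tg_i\cup\Fg_i\cup\Lg_i$, since the replica of each pair moves with its origin for utility reasons (value $10$).

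\textbf{And-gadgets.} Here I would split on whether $\Fg_i\cap\pi(\lg_i)=\emptyset$ or $\Tg_i\cap\pi(\lg_i)\ne\emptyset$.
\begin{itemize}
\item If so, \Cref{lem_d1:w_z_gf_tradeoff} gives $\phi_{\Fg_i\cup\Wand_i\cup\Zand_i}\ge0$, and I only need $\phi_{\Tg_i}\ge 0$. The agent $\gt_i$ gets at most $\max(2,\,1+1)$-type values. I would verify that whenever $\gt_i$ strictly prefers $\pi$ it must sit with $\lg_i$ and both $\gt_j,\gt_k$, which contradicts $\lg_i$ avoiding $\gt_j$ ($-\infty$). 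So $\gt_i$ prefers $\pi$ only in special configurations where it compensates the $\Fg_i$ losses.
\item In the complementary case, $\Fg_i\subseteq\pi(\lg_i)$ and $\Tg_i$ is not with $\lg_i$. Then $\gf_i$ gets $3$, which is at least its $\pi^*$ value unless $\pi^*$ placed $\gf_i$ in $\MC^*$ with $\gf_j,\gf_k$ and possibly $\wand_i$ or $\zand_i$. I would compare these values case by case on $(\XX^*(\GG_j),\XX^*(\GG_k))$, using the weights $1,2,1,2,3$ which were chosen so that each $\pi^*$ configuration gives $\gf_i$ exactly $3$.
\end{itemize}

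The main obstacle is this And-gadget bookkeeping. The delicate part is showing that any gain for $\gt_i$ or $\gf_i$ is offset within the gadget, by $\Lg_i$, $\Wand_i$ or $\Zand_i$ agents, without relying on agents of other gadgets.
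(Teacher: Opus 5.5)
There is a genuine gap. Your treatment of the case where an agent of $\Lg_i$ prefers $\pi$ is fine and matches the paper. After that, however, you never identify what offsets a core pair that strictly prefers $\pi$, and the compensators you do name do not work.

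\textbf{Copy/Not case.} All you have is $\phi_{\Lg_i}(\pi^*,\pi)\ge 0$ together with the claim that at most one pair prefers $\pi$. That yields only $\phi_{\GGD_i}(\pi^*,\pi)\ge -2$, not $\ge 0$. Neither of your proposed compensators helps:
\begin{itemize}
\item The input agent $\asst_i$ lies outside $\GGD_i$, so its vote does not count.
\item The complementary pair can be exactly indifferent. Take $\XX^*_i=1$, and let $\pi$ have $\Tg_i\cup\{\asst_i,\lg_i\}$ in one coalition and $\Fg_i\cup\{\assf_i\}$ in another. Then $\gt_i$ gets $12>11$, while $\gf_i$ gets $11$ in both partitions.
\end{itemize}
The missing observation is that $L$-agents assign $-\infty$ to every agent not explicitly listed. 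So $\lg_i$, sitting with $\asst_i$, has negative utility. The replica $\lgg_i$ is either in that coalition too, or it is separated from $\lg_i$ and gets at most $1<11$. Hence both agents of $\Lg_i$ strictly prefer $\pi^*$, which cancels the (at most two) votes of the pair that gained. This is how the paper closes Cases 1 and 2.

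\textbf{And case.} The same idea is missing here, and it has been replaced by incorrect claims.
\begin{itemize}
\item Counting the replica, $\gt_i$ has utility $12$ in $\pi^*$. It already strictly prefers $\pi$ in a coalition such as $\Tg_i\cup\Lg_i\cup\{\gt_j\}$, where it gets $13$. It does not need both $\gt_j$ and $\gt_k$.
\item Nothing forbids $\lg_i$ from sharing a coalition with $\gt_j$ in $\pi$; it just gets negative utility. So there is no contradiction.
\item Consequently your intermediate goal $\phi_{\Tg_i}(\pi^*,\pi)\ge 0$ is false in general. What holds is $\phi_{\Tg_i\cup\Lg_i}(\pi^*,\pi)\ge 0$, by the $\Lg_i$-offset above. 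Together with \Cref{lem_d1:w_z_gf_tradeoff}, that settles this branch.
\end{itemize}

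\textbf{Complementary And branch.} $\gf_i$ has utility $13$ in $\pi^*$ in all four input configurations, and it strictly prefers $\pi$ in, e.g., $\Fg_i\cup\Lg_i\cup\{\gf_j\}$, where it gets $14$. So comparing the weights $1,2,1,2,3$ cannot settle this branch. The correct accounting is:
\begin{itemize}
\item both $\Lg_i$ agents prefer $\pi^*$, since $\lg_i$ sits with an agent it values at $-\infty$;
\item $\Tg_i$ cannot gain, since neither of its members is with $\lg_i$;
\item \Cref{lem_d1:w_z_balance} handles $\Wand_i\cup\Zand_i$.
\end{itemize}
Also, negating your split only gives $\Fg_i\cap\pi(\lg_i)\neq\emptyset$, not $\Fg_i\subseteq\pi(\lg_i)$. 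The paper bridges this by noting that a member of $\Fg_i$ can prefer $\pi$ only if $\gf_i$ and $\ggf_i$ are in the same coalition.
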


\begin{proof}
Assume for contradiction that $\phi_{\GGD_i}(\pi^*,\pi)<0$.
We make a case distinction based on the type of gate.

\emph{Case 1:} Suppose $\GG_i=\XX_i$ is a Copy-gate. 
If some $\ga\in\Tg_i$ 
prefers $\pi$ over $\pi^*$ then $\{\asst_i,\lg_i\}\subseteq\pi(\ga)$. 
As agents in $\Lg_i$ and $\asst_i$ have a mutual valuation of $-\infty$, this implies that all agents in $\Lg_i$ prefer $\pi^*$ over $\pi$.
Moreover, all agents in $\Fg_i$ obtain at most the utility they obtain in $\pi^*$.
Thus, $\phi_{\GGD_i}(\pi^*,\pi)\geq 0$, a contradiction.
If some $\ga\in\Fg_i$ prefers $\pi$ over $\pi^*$ we reach a similar contradiction.
Hence, some $\ell\in\Lg_i$ must prefer $\pi$, and so we must have $\{\gt_i,\gf_i\}\subseteq \pi(\ell)$. This implies that all agents in $\Tg_i\cup\Fg_i$ prefer $\pi^*$, which contradicts $\phi_{\AGD_i}(\pi^*,\pi)<0$.

\emph{Case 2:} If $\GG_i=\lnot \GG_j$ is a Not-gate, the proof is analogous to Case 1. 

\emph{Case 3:} Suppose $\GG_i=\GG_j\land\GG_k$ is an And-gate. 
If some $\ell\in\Lg_i$ prefers $\pi$ over $\pi^*$ then $\{\gt_i,\gf_i\}\subseteq\pi(\ell)$, implying all agents in $\Tg_i\cup\Fg_i$ prefer $\pi^*$. 
Hence, by \Cref{lem_d1:w_z_balance}, we reach a contradiction to $\phi_{\GGD_i}(\pi^*,\pi)<0$ (by making a case distinction based on whether some agent in $\Wand_i\cup\Zand_i$ prefers $\pi$ or not).
If some $\ga\in\Tg_i$ prefers $\pi$ over $\pi^*$ then $\lg_i\in\pi(\ga)$ and also $|\{\gt_j,\gt_k\}\cap\pi(\ga)|\geq 1$. This implies that any $\ell\in\Lg_i$ prefers $\pi^*$.
Moreover, since $\lg_i\in\pi(\ga)$ we may apply \Cref{lem_d1:w_z_gf_tradeoff}, and we reach a contradiction.
So agents in $\Lg_i\cup\Tg_i$ cannot prefer $\pi$, and thus we are left only with the possibility that some $\ga\in\Fg_i\cup\Wand_i\cup\Zand_i$ prefers $\pi$. 
However, if an agent in $\Wand_i\cup\Zand_i$ prefers $\pi$ we reach a contradiction by \Cref{lem_d1:w_z_balance}.

Hence, $\ga\in\Fg_i$, and so all of $\Fg_i$ are in the same coalition. Furthermore, if 
$\Fg_i\cap\pi(\lg_i)=\emptyset$ then we are done by \Cref{lem_d1:w_z_gf_tradeoff}. 
Otherwise, we have $\lg_i\in\pi(\ga)$ (since all of $\Fg_i$ are in the same coalition). Moreover, since $\ga$ prefers $\pi$, there must be another agent in $\pi(\ga)$ who $\ga$ assigns positive value to. This implies that all agents in $\Lg_i$ prefer $\pi^*$, and we again conclude that $\phi_{\GGD_i}(\pi^*,\pi)\geq 0$ (by making a case distinction based on whether some agent in $\Wand_i\cup\Zand_i$ prefers $\pi$ or not, and utilizing \Cref{lem_d1:w_z_balance} if so), a contradiction.
\end{proof}

\begin{lemma}
    \label{lem_d1:circuit_worst_case}
    Let $\CC_j$ be a circuit. Then $\phi_{\CC_j}(\pi^*,\pi)\geq -1$.
\end{lemma}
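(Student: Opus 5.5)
By definition, the agents of $C_j$ are the gadgets $\GGD_i$ of the gates of $\CC_j$ together with the voter agent $v_j$. The gadgets of distinct gates are pairwise disjoint and disjoint from $\{v_j\}$. Popularity margins are additive over disjoint sets, so
\[\phi_{C_j}(\pi^*,\pi)=\sum_{i=1}^{|\CC_j|}\phi_{\GGD_i}(\pi^*,\pi)+\phi_{v_j}(\pi^*,\pi).\]
I will bound the two kinds of terms separately.

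For the gadget terms, I invoke \Cref{lem_d1:gate_margins}. It gives $\phi_{\GGD_i}(\pi^*,\pi)\geq 0$ for every gate of $\CC_j$, and it holds for Copy-, Not- and And-gates alike. That lemma was proved with no assumption tying the gadget to a particular circuit or to validity in $\pi$, so it applies uniformly to all gates of $\CC_j$. The sum over gadgets is therefore nonnegative.

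For the voter term, the margin of a single agent always lies in $\{-1,0,1\}$, so $\phi_{v_j}(\pi^*,\pi)\geq -1$. Combining the two bounds gives $\phi_{C_j}(\pi^*,\pi)\geq 0+(-1)=-1$.

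The only point I expect to need care is confirming that $C_j$ is exactly the disjoint union of its gate gadgets and $v_j$. In particular, $C_j$ must not contain the $A$-gadgets, which are shared across circuits, or any $X$-agents. I will check this against the definition of $C_j$ given after the construction. Replicas of gadget agents belong to the same gadget by construction, so they are already counted in the corresponding $\GGD_i$. Beyond this bookkeeping the statement is a direct corollary of \Cref{lem_d1:gate_margins}, and I see no genuine obstacle. Its role in the sequel is presumably to pair the $-1$ slack with the fact that only voters can prefer $\pi$, which lets the Condorcet property of $\XX^*$ finish the argument.
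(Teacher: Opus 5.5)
Your proposal is correct and follows the paper's proof: the paper also writes $C_j$ as the disjoint union of the gate gadgets of $\CC_j$ and the voter $v_j$, and bounds each gadget's margin below by $0$ using \Cref{lem_d1:gate_margins}. The only remaining contribution is the single voter, whose margin is at least $-1$; your version just spells out the additivity and disjointness bookkeeping that the paper leaves implicit.
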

\begin{proof}
    Since $C_j$ is composed only of the gadgets of the gates of $\CC_j$, and its voter agent, the result follows from \Cref{lem_d1:gate_margins}.
\end{proof}

\begin{lemma}
\label{lem_d1:gates_dont_comply_margin}
Let $\GG_i$ be a gate with corresponding $G$-gadget $\GGD_i$. Suppose the inputs (or input) of $\GG_i$ have $\pi$-valid corresponding gadgets.
Then if $\GGD_i$ does not $\pi$-comply with its inputs, we have that $\phi_{\GGD_i}(\pi^*,\pi)\geq 2$.
\end{lemma}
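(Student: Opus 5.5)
The plan is a case analysis on the gate type, in the same style as \Cref{lem_d1:gate_margins}. In $\pi^*$, every agent of $\GGD_i$ receives its maximal attainable utility, or one that can only be exceeded in a very specific configuration. Suppose $\GGD_i$ does not $\pi$-comply with its valid inputs. We will identify a set of agents of the gadget who are strictly worse off in $\pi$, at least two more than those who are strictly better off. Throughout we use \Cref{lem_d1:X_coalition,lem_d1:L_notin_MC}: $\MC$ contains $X$ and no $L$-agent, so the input gadgets' core agents lie either in $\MC$ or with their $L$-agent. We also use \Cref{lem_d1:L_separated}, and the fact that an agent sharing a coalition with anyone it assigns $-\infty$ gets negative utility. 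Since $\pi^*$ gives every agent of $\GGD_i$ non-negative utility, any such agent prefers $\pi^*$.

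\emph{Copy- and Not-gates.} Take a Copy-gate $\GG_i=\XX_i$ with $\AGD_i$ $\pi$-corresponding to, say, $\XX_i=1$, so $\asst_i\in\MC$ and $\assf_i\in\pi(\ls_i)$; the other cases are symmetric. Non-compliance means that it is not the case that $\gt_i\in\MC$ and $\gf_i\in\pi(\lg_i)$. We split according to where $\gt_i,\ggt_i$ sit.
\begin{itemize}
\item If an agent of $\Tg_i$ is not with $\asst_i$, its utility is at most $1$ (from $\lg_i$). In $\pi^*$ (under $\pi^*$'s own assignment) it gets utility $2$ when in $\MC^*$, or $1$ when with $\lg_i$.
\end{itemize}
The point here is that $\pi^*$ corresponds to $\XX^*$, not to $\pi$'s assignment. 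I would therefore compare utilities directly: in $\pi^*$ each of $\gt_i,\gf_i$ (and the replicas) has utility $1$ plus $10$ from its replica, and $\lg_i$ has utility $1+10$. In $\pi$ we need at least two of these six agents to lose with none gaining. A $G$-agent gains only by having both $\asst_i$ (resp.\ $\assf_i$) and $\lg_i$, which forces $\Lg_i$ (two agents) to lose, and that is not enough on its own. The compliant configuration is exactly the one in which each of $\Tg_i,\Fg_i,\Lg_i$ keeps utility $11$. In any other configuration, either a replica pair is split (both members lose, giving margin $2$, since replicas value each other $10$) or some pair loses its positive neighbour. Concretely, if $\gf_i\in\MC$ it is with $\asst_i$ and gets $-\infty$, so both $\Fg_i$ agents lose. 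If $\gf_i$ is with $\ls_i$'s coalition (which contains $\assf_i$), then $\lg_i$ is not there by \Cref{lem_d1:L_separated}, so $\gf_i$ has utility at most $1$, equal to its $\pi^*$ value; in that case $\gt_i$ must lack $\lg_i$ or $\asst_i$, and we conclude the same way. The Not-gate is identical, with $\gt_j,\gf_j$ in place of $\asst_i,\assf_i$.

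\emph{And-gates.} Here the valuations are asymmetric ($2$, $3$, and the $\Wand_i,\Zand_i$ agents), so I would enumerate the four input combinations. When the output should be $1$ (both inputs $1$), non-compliance puts $\gf_i$ in $\MC$, where it sits with $\gt_j,\gt_k$, and the $W/Z$ agents and $\Lg_i$ lose. When the output should be $0$, the cases are as follows.
\begin{itemize}
\item If $\gt_i\in\MC$, it meets $\gf_j$ or $\gf_k$ at $-\infty$, so both $\Tg_i$ agents lose.
\item If $\gf_i$ is not with $\lg_i$, then $\Lg_i$ has utility at most $10$ and loses.
\end{itemize}
I would then use \Cref{lem_d1:w_z_balance} and \Cref{lem_d1:w_z_gf_tradeoff} to show that the $\Fg_i\cup\Wand_i\cup\Zand_i$ part contributes margin at least $0$, so the losses above give total margin at least $2$. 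The main obstacle is the And-gate bookkeeping: $\gf_i$ might gain by gathering $\wand_i,\zand_i,\gf_j,\gf_k$ while the $W/Z$ agents compensate. The key step is to check that every such gain is paid for by at least two $W/Z$ or $\Lg_i$ losses beyond the compliance-violating losses already counted.
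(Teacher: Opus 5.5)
Your proposal has a genuine gap. It lies mainly in the And-gate case, which carries the substance of the lemma, but there is also a missing case for every gate type. You aim for ``at least two losers and no gainers'', yet agents of $\Lg_i$ \emph{can} prefer $\pi$: this happens when $\gt_i,\gf_i\in\pi(\lg_i)$, giving utility $12>11$. The paper disposes of this case first. Then all four agents of $\Tg_i\cup\Fg_i$ lose, since the origins meet at $-\infty$ and each replica either inherits that $-\infty$ or is cut off from its origin. In an And-gadget, no agent of $\Wand_i\cup\Zand_i$ can gain either, because $\wand_i$ (resp.\ $\zand_i$) would need $\gf_i$ and $\lw_i$ (resp.\ $\lz_i$) in the coalition of $\lg_i$, contradicting \Cref{lem_d1:L_separated}. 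With this added, your Copy/Not argument can be repaired. Under valid inputs no $G$-agent can gain at all, so the issue is not that the gain is ``not enough on its own''. A gain would need $\lg_i$ together with $\asst_i$, $\assf_i$, $\gt_j$ or $\gf_j$, and each of these lies in $\MC$ or with another $L$-agent, which \Cref{lem_d1:L_notin_MC,lem_d1:L_separated} exclude. Your observation that an origin and its replica lose together then yields two losers. Also, $\gf_i\in\pi(\ls_i)$ gives $\gf_i$ utility $-\infty$, not ``at most $1$, equal to its $\pi^*$ value''; that value is $11$.

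The And-gate case is not proved, and several of your specific claims there are false.

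For inputs $(1,1)$, non-compliance means $\gt_i\notin\MC$ or $\gf_i\notin\pi(\lg_i)$. It does not put $\gf_i$ into $\MC$; for instance, $\gf_i,\ggf_i$ may sit with $\wand_i,\zand_i$ outside $\MC$ at utility $13$. Even when $\gf_i\in\MC$, if $\Tg_i\cup\Lg_i$ is a coalition then $\Tg_i$ and $\Lg_i$ keep their $\pi^*$ utilities $12$ and $11$, and agents of $\Wand_i\cup\Zand_i$ can keep theirs too. The loss is then in $\Fg_i$, which gets at most $12$ because $\gf_j,\gf_k$ sit with $\lg_j,\lg_k$. The only exception is when both $\wand_i,\zand_i$ join $\MC$, where they meet each other and $\gt_k,\gt_j$ at $-\infty$. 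So in this configuration the whole margin of $2$ must come from $\Fg_i\cup\Wand_i\cup\Zand_i$, not from $\Lg_i$.

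For output $0$, the claim that $\gf_i$ not being with $\lg_i$ forces $\Lg_i$ to lose is false. $\Tg_i\cup\Lg_i$ can be a coalition, giving $\lg_i$ utility $11$, while $\gf_i\notin\MC$. Your two bullets also miss the case $\gf_i\in\pi(\lg_i)$, $\gt_i\notin\MC$. In that case \Cref{lem_d1:w_z_gf_tradeoff} is unavailable, since it needs $\Fg_i\cap\pi(\lg_i)=\emptyset$ or $\Tg_i\cap\pi(\lg_i)\neq\emptyset$. There $\gf_i$ can gain, e.g.\ $3+1+10$ with $\wand_i\in\pi(\lg_i)$. That gain has to be charged explicitly to $\wand_i,\wwand_i$ and $\Lg_i$, who meet each other at $-\infty$.

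You name this accounting as ``the key step'' but do not carry it out. The paper does so: after removing $\Lg_i$-gains, it splits on the input values $(1,1)$, $(0,0)$, $(0,1)$, and then on which of the two compliance conditions fails. In each subcase it exhibits two net losers, using \Cref{lem_d1:w_z_balance} to absorb any gain inside $\Wand_i\cup\Zand_i$.
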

\begin{proof} 
First, notice that any agent $\ell\in\Lg_i$ can only prefer $\pi$ if $\{\gt_i,\gf_i\}\subseteq\pi(\lg_i)$, but then all agents in $\Tg_i\cup\Fg_i$ prefer $\pi$.
If $\GGD_i$ is not an And-gadget, the above suffices to imply that the claim holds.
If $\GGD_i$ is an And-gadget, we additionally observe that since $\gf_i\in\pi(\lg_i)$, all agents in $\Wand_i\cup\Zand_i$ cannot prefer $\pi$, and the result follows. 

So we may assume that agents in $\Lg_i$ do not prefer $\pi$.
We now make a case distinction based on the gadget type of $\GGD_i$.

\emph{Case 1:} Suppose $\GGD_i$ is a Copy-gadget, with input $A$-gadget $\AGD_i$. Since $\AGD_i$ is $\pi$-valid, assume without loss of generality that $\asst_i\in\MC$ and $\assf_i\in\pi(\ls_i)$. 
Then, by assumption that $\GGD_i$ does not $\pi$-comply with $\AGD_i$, we have that either $\gf_i\notin\pi(\lg_i)$ or $\gt_i\notin\MC$. 
If $\gf_i\notin\pi(\lg_i)$, it is easy to verify that all agents in $\Fg_i$ prefer $\pi^*$.
Furthermore, the utility of all agents $\ga\in\Tg_i$ is at most what they obtained in $\pi^*$ (since $\asst_i\in\MC$, and $\lg_i\notin\MC$ by \Cref{lem_d1:L_notin_MC}), and thus they cannot prefer $\pi$. Hence, the claim holds.
If $\gt_i\notin\MC$, then the utility of all $\ga\in\Tg_i$ is at most what they obtained in $\pi^*$, so none of them prefers $\pi$. 
Furthermore, if $\gt_i\in\pi(\lg_i)$, then all agents in $\Fg_i$ prefer $\pi^*$.
Otherwise, all agents in $\Tg_i$ prefer $\pi^*$ (while agents in $\Fg_i$ cannot prefer $\pi$, as established above). Hence, the claim holds.

\emph{Case 2:} If $\GGD_i$ is a Not-gadget, the proof is analogous to Case 1.

\emph{Case 3:} Suppose $\GGD_i$ is an And-gadget with $\pi$-valid input $G$-gadgets $\GGD_j$ and $\GGD_k$, and that $\GGD_i$ does not $\pi$-comply with its inputs. 
Recall that agents in $\Lg_i$ do not prefer $\pi$. 
We consider the following three sub-cases corresponding to the possible values of the input gates.

\emph{Case 3.1} Suppose $\GGD_j$ and $\GGD_k$ $\pi$-correspond to $\GG_j=1$ and $\GG_k=1$. Then, by \Cref{lem_d1:L_notin_MC}, we have that no agent in $\Tg_i$ prefers $\pi$. 
Moreover, since $\GGD_i$ does not $\pi$-comply with its input, we have that either $\gf_i\notin\pi(\lg_i)$, or $\gt_i\notin\MC$.
If $\gf_i\notin\pi(\lg_i)$, then either all agents in $\Fg_i$ prefer $\pi^*$ (and we are done by \Cref{lem_d1:w_z_balance}), or there is some $\ga\in\Fg_i$ who is indifferent ($\ga$ cannot prefer $\pi$ since $\gf_j\in\pi(\lg_j)$ and $\gf_k\in\pi(\lg_k)$).
But if $\ga$ is indifferent, then $\{\wand_i,\zand_i\}\subseteq\pi(\ga)$, and thus all agents in $\Wand_i\cup\Zand_i$ prefer $\pi^*$, and we are again done. 

So we must have $\gf_i\in\pi(\lg_i)$, and be in the case where $\gt_i\notin\MC$. But then all agents in $\Tg_i$ prefer~$\pi^*$.
Furthermore, since $\GG_j$ and $\GG_k$ are $\pi$-valid and correspond to $\GG_j=1$ and $\GG_k=1$, we have that $\gf_j\in\pi(\lg_j)$ and $\gf_k\in\pi(\lg_k)$.
Thus, if some  $\ga$ in $\Fg_i$ prefers $\pi$, we must have either $\wand_i\in\pi(\ga)$ (in which case all agents in $\Wand_i$ prefer $\pi$ while agents in $\Zand_i$ cannot prefer $\pi$), or $\zand_i\in\pi(\ga)$ (in which case all agents in $\Zand_i$ prefer $\pi$ while agents in $\Wand_i$ cannot prefer $\pi$), and either way we are done.
Otherwise, namely if no agent in $\Fg_i$ prefers $\pi$, then \Cref{lem_d1:w_z_balance} gives the required result.

\emph{Case 3.2:} Suppose $\GGD_j$ and $\GGD_k$ correspond to $\GG_j=0$ and $\GG_k=0$. 
Then $\gt_j\in\pi(\lg_j)$ and $\gt_k\in\pi(\lg_k)$, and thus no agent in $\Tg_i$ prefers $\pi$. 
Since $\GGD_i$ does not comply with its input, we have that either $\gf_i\notin\MC$ or $\gt_i\notin\pi(\lg_i)$. 
First, suppose $\gf_i\notin\MC$. 
If $\lg_i\notin\pi(\gf_i)$ then $\phi_{\Fg_i}(\pi^*,\pi)\ge 0$.
Moreover, either $\{\wand_i,\zand_i\}\subseteq\pi(\gf_i)$---and thus $\phi_{\Wand_i\cup\Zand_i}(\pi^*,\pi)\ge 4$ and we are done---or $\{\wand_i,\zand_i\}\nsubseteq\pi(\gf_i)$---and thus $\phi_{\Fg_i}(\pi^*,\pi)=2$, and, by \Cref{lem_d1:w_z_balance}, we are done.

If $\lg_i\in\pi(\gf_i)$, then $\phi_{\Tg_i}(\pi^*,\pi)=2$. 
If additionally some agent in $\Wand_i\cup\Zand_i$ prefers $\pi$ over $\pi^*$, then by \Cref{lem_d1:w_z_balance} we are done.
So assume not.
If no agent of $\Fg_i$ prefers $\pi$, we are done.
But if some agent in $\Fg_i$ prefers $\pi$, then we must have $\{\wand_i,\zand_i\}\cap\pi(\gf_i)\neq\emptyset$, and so we have $\phi_{\Fg_i\cup\Wand_i\cup\Zand_i}(\pi^*,\pi)\ge 2$, and we are done.

So we must have $\gf_i\in\MC$, and be in the case where $\gt_i\notin\pi(\lg_i)$. Thus, we have $\phi_{\Tg_i}(\pi^*,\pi)=2$. 
Moreover, since $\{\gf_j,\gf_k,\gf_i\}\subseteq\MC$, we have $\phi_{\Wand_i}(\pi^*,\pi)\ge 0$ and $\phi_{\Zand_i}(\pi^*,\pi)\ge 0$; additionally, if $\wand_i\in \MC$ then $\phi_{\Wand_i}(\pi^*,\pi)\ge 2$, if $\zand_i\in \MC$ then $\phi_{\Zand_i}(\pi^*,\pi)\ge 2$, and if $\{\wand_i,\zand_i\}\nsubseteq\MC$ then we have $\phi_{\Fg_i}(\pi^*,\pi)\ge 0$. 
Hence, we have that $\phi_{\Fg_i\cup\Wand_i\cup\Zand_i}(\pi^*,\pi)\geq 0$, and so $\phi_{\GGD_i}(\pi^*,\pi)\ge 2$.

\emph{Case 3.3:} Suppose $\GGD_j$ and $\GGD_k$ correspond to $\GG_j=0$ and $\GG_k=1$ respectively (although the gadget is not completely symmetric with respect to $j$ and $k$, the proof of the case $\GG_j=1$ and $\GG_k=0$ is analogous). 
Since $\GGD_i$ does not comply with its input, we have that $\gf_i\notin\MC$ or $\gt_i\notin\pi(\lg_i)$. 
First, suppose $\gf_i\notin\MC$. 
Since the input gadgets are valid, we have $\{\gf_j,\gt_k\}\subseteq\MC\neq\pi(\gf_i)$, $\gf_k\in\pi(\lg_k)$, and $\gt_j\in\pi(\lg_j)$.
By \Cref{lem_d1:L_notin_MC,lem_d1:L_separated} we have that $\{\gt_j,\gt_k\}\cap\pi(\lg_i)=\emptyset$ and thus $\phi_{\Tg_i}(\pi^*,\pi)\ge 0$.
If $\lg_i\notin\pi(\gf_i)$, then
$\phi_{\Lg_i}(\pi^*,\pi)\ge 0$.
Then, either we have $\{\wand_i,\zand_i\}\subseteq\pi(\gf_i)$, and then all agents in $\Wand_i\cup\Zand_i$ prefer $\pi^*$ and we are done, 
or $\{\wand_i,\zand_i\}\nsubseteq\pi(\gf_i)$, and thus all agents in $\Fg_i$ prefer $\pi^*$, and, by \Cref{lem_d1:w_z_balance}, we are done.
If $\lg_i\in\pi(\gf_i)$, then all agents in $\Tg_i$ prefer $\pi^*$, and it is easy to see that $\phi_{\Fg_i\cup\Wand_i\cup\Zand_i}(\pi^*,\pi)\geq 0$, so we are done.

Hence, we must have $\gf_i\in\MC$ and be in the case where $\gt_i\notin\pi(\lg_i)$. 
Therefore, we have that $\phi_{\Tg_i\cup\Lg_i}(\pi^*,\pi)=4$.
By \Cref{lem_d1:w_z_balance}, we are done.
\end{proof}

\begin{lemma}
\label{lem_d1:assignment_MC_intersection}
Let $i\in [n]$. Then $|\{\asst_i,\assf_i\}\cap\pi(x_i)|=1$.
\end{lemma}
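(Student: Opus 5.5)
By \Cref{lem_d1:X_coalition} we have $x_i\in\MC$, so $\pi(x_i)=\MC$ and the claim is that exactly one of $\asst_i,\assf_i$ lies in $\MC$. The upper bound is immediate. Since $\asst_i$ and $\assf_i$ assign $-\infty$ to each other, having both in $\MC$ gives both of them negative utility. Moreover, every agent in $\MC$ is then harmed only in the sense that the coalition contains a $-\infty$ pair. So I would argue via Pareto optimality of $\pi$: removing $\assf_i$ (together with its replicas in $\MC$) into a singleton coalition cannot hurt anyone. No agent assigns a positive value to $\assf_i$ except $\ls_i$ (which is not in $\MC$ by \Cref{lem_d1:L_notin_MC}), the Copy-gate agent $\gf_i$, and the replicas. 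Here one needs care, so a cleaner route is the following. Since $\asst_i$ and $\assf_i$ both have negative utility in $\pi$ but nonnegative utility in $\pi^*$, we have $\phi_{\{\asst_i,\assf_i\}}(\pi^*,\pi)=2$, and in fact all of $\Ta_i\cap\MC$ and $\Fa_i\cap\MC$ prefer $\pi^*$. I would then bound the rest of the game with the previous margin lemmas and show $\phi(\pi^*,\pi)>0$, contradicting the choice of $\pi$.

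Concretely, the plan is as follows. By \Cref{lem_d1:X_margin}, the $X$-agents contribute a margin of at least $0$. By \Cref{lem_d1:assignment_margin} each $A$-gadget contributes at least $0$, and I would sharpen this to at least $2$ for $\AGD_i$, since in the proof of that lemma an $\La_i$-agent can prefer $\pi$ only if both $\asst_i$ and $\assf_i$ lie in its coalition, which is impossible here because they are in $\MC$. By \Cref{lem_d1:gate_margins} each gate gadget contributes at least $0$. The voters are the only remaining agents, and there are exactly $m$ of them, so each circuit contributes at least $-1$ (\Cref{lem_d1:circuit_worst_case}).

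The danger is that the total gain of $2$ must beat a loss of up to $m$ voters. This is exactly what the $m$ replicas are for. Every replica in $\Ta_i\cup\Fa_i$ values its origin at $10$, so a replica separated from its origin gets utility at most the sum of its small positive values, which is below its utility in $\pi^*$. I would show that every agent of $\Ta_i\cup\Fa_i$ strictly prefers $\pi^*$. Replicas inside $\MC$ suffer the $-\infty$ just like their origins. A replica outside $\MC$ lacks its origin and $x_i$, so it gets at most $1$ (from $\ls_i$), while in $\pi^*$ it gets at least $10m$ (in $\MC^*$ with its origin group) or at least $1+10m$ (with $\La_i$). Thus $\phi_{\Ta_i\cup\Fa_i}(\pi^*,\pi)=2(m+1)$. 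Since no $\La_i$-agent prefers $\pi$ (each needs both $\asst_i,\assf_i$ in its coalition), $\phi_{\AGD_i}(\pi^*,\pi)\ge 2m+2>m$. Combined with the other bounds this gives $\phi(\pi^*,\pi)>0$, a contradiction.

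For the lower bound, suppose neither $\asst_i$ nor $\assf_i$ is in $\MC$. Then neither obtains the value $1$ from $x_i$. In $\pi^*$, one of them has $x_i$ plus its whole replica group, and the other has $\ls_i$ plus its replica group, and both groups are maximal. I expect the main obstacle to be here: showing that every agent in $\Ta_i\cup\Fa_i$ is then weakly worse off and at least $m+1$ of them strictly worse off, while no $\La_i$-agent gains enough to compensate. I would first try to show that the group whose origin corresponds to the $\MC^*$ side of $\pi^*$ loses strictly. In $\pi^*$ its members get $10m+1$ (replicas plus $x_i$; the valuations toward $\gt_i$ are zero). In $\pi$, lacking $x_i$, they can get at most $10m+1$, and only if $\ls_i$ is with them, which forces the other group away from $\ls_i$ and from $\MC$. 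That other group then gets at most $10m<10m+1$, so it loses all $m+1$ of its members while the $\La_i$-agents are at best indifferent. Either way $\phi_{\AGD_i}\ge m+1>m$. Combining this with the same global bounds as before yields the contradiction.
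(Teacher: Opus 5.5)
Your approach is the paper's: in both directions you show $\phi_{\AGD_i}(\pi^*,\pi)\ge m+1$, and the nonnegative margins of the $X$-agents, the other $A$-gadgets and the gate gadgets absorb the at most $m$ voters. In the first case the paper gets $3m+3$, because the $\La_i$-agents also strictly lose, but your $2m+2$ suffices. Two steps are not right as written.

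\textbf{The replica bound in the first case.} A replica of $\asst_i$ outside $\MC$ does not get ``at most $1$''. Replicas value one another at $10$, so it can collect up to $10(m-1)$ from fellow replicas. The correct bound is $10(m-1)+1$. This is still below its utility $10m+1$ in $\pi^*$, so your conclusion stands.

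\textbf{The case split in the lower bound.} Take the branch where the $\MC^*$-side group, say $\Ta_i$, strictly loses, so all $m+1$ of its members prefer $\pi^*$. You have not ruled out that all $m+1$ agents of $\La_i$ strictly gain, which would cancel this. Unlike in the first part, this is now possible: since $\asst_i,\assf_i\notin\MC$, both can lie in $\pi(\ls_i)$, giving $\ls_i$ and its replicas utility $10m+2$.

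The missing step is the paper's first subcase. Suppose some agent of $\Ta_i$ shares a coalition with some agent of $\Fa_i$; this is necessary for any $\La_i$-agent to prefer $\pi$. Then every agent of $\Ta_i\cup\Fa_i$ falls into one of two cases.
\begin{itemize}
\item It is together with all of its own group, hence with a member of the opposite group. It then has negative utility, since the replica rules propagate the $-\infty$ valuation between $\asst_i$ and $\assf_i$.
\item It is separated from part of its own group. It then has utility at most $10(m-1)+1$, because it cannot have both $x_i\in\MC$ and $\ls_i\notin\MC$.
\end{itemize}
So all $2m+2$ of these agents strictly prefer $\pi^*$, and $\phi_{\AGD_i}(\pi^*,\pi)\ge (2m+2)-(m+1)=m+1$ still holds. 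If no such pair exists, no agent of $\AGD_i$ prefers $\pi$, and your argument goes through.
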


\begin{proof}
First, assume for contradiction that $|\{\asst_i,\assf_i\}\cap\pi(x_i)|=2$. Then it is easy to see that all agents in $\AGD_i$ prefer $\pi^*$, i.e., $\phi_{\AGD_i}(\pi^*,\pi)=3m+3$. Hence, by \Cref{lem_d1:X_margin,lem_d1:assignment_margin,lem_d1:gate_margins},
we have that $\phi(\pi^*,\pi)>0$ (even if all $V$-agents prefer $\pi$), a contradiction.
Hence, we have that $|\{\asst_i,\assf_i\}\cap\pi(x_i)|\le 1$. 

Second, Assume for contradiction that $|\{\asst_i,\assf_i\}\cap\pi(x_i)|=0$.
If there exist $a\in\Ta_i$ and $a'\in\Fa_i$ such that $a\in\pi(a')$, it is easy to see that all agents in $\Ta_i\cup\Fa_i$ prefer $\pi^*$ over $\pi$.
Otherwise, no agent in $\AGD_i$ can prefer $\pi$, while all of $\Ta_i$ or all of $\Fa_i$ must strictly prefer $\pi^*$ (since only one of those sets can have $\ls_i$).
Thus, in both cases we get $\phi_{\AGD_i}(\pi^*,\pi)\geq m+1$.
Hence, by \Cref{lem_d1:X_margin,lem_d1:assignment_margin,lem_d1:gate_margins}, 
we have that $\phi(\pi^*,\pi)>0$ (even if all $V$-agents prefer $\pi$), a contradiction. Thus, $|\{\asst_i,\assf_i\}\cap\pi(x_i)|=1$.
\end{proof}

\begin{lemma}
\label{lem_d1:all_assignment_valid}
All assignment gadgets are $\pi$-valid.
\end{lemma}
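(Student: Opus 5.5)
We must show that for every $i\in[n]$, exactly one of $\asst_i,\assf_i$ lies in $\MC$ and the other lies in $\pi(\ls_i)$. By \Cref{lem_d1:X_coalition}, $x_i\in\MC$, so $\pi(x_i)=\MC$. Hence \Cref{lem_d1:assignment_MC_intersection} already gives exactly one of $\asst_i,\assf_i$ in $\MC$. By symmetry we may assume $\asst_i\in\MC$ and $\assf_i\notin\MC$. It remains to show $\assf_i\in\pi(\ls_i)$, and we argue by contradiction from $\assf_i\notin\pi(\ls_i)$.

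The plan is to show that in this case $\phi_{\AGD_i}(\pi^*,\pi)$ is large, namely at least $m+1$. This beats the worst case contributed by the voters exactly as in the proof of \Cref{lem_d1:assignment_MC_intersection}. First we compare utilities.

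\begin{itemize}
\item In $\pi^*$, every agent of $\Fa_i$ obtains utility either $1+10(m+1)-\dots$ or something of that form. More precisely, the relevant fact is that each agent $a\in\Fa_i$ obtains, in both possible configurations of $\pi^*$, all $10$-valued replica edges plus exactly one value-$1$ neighbour ($x_i$ or $\ls_i$). Its maximal achievable utility is $10m+2$, but that requires $\{x_i,\ls_i\}\subseteq\pi(a)$, which is impossible by \Cref{lem_d1:L_notin_MC}. So each agent of $\Fa_i$ gets utility exactly $10m+1$ in $\pi^*$.
\item In $\pi$, an agent $a\in\Fa_i$ cannot be in $\MC$. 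Here we must handle agents of $\Fa_i$ other than $\assf_i$ itself. If some replica of $\assf_i$ were in $\MC$ together with $\asst_i$, it would receive $-\infty$ and strictly prefer $\pi^*$. So either the whole of $\Fa_i$ strictly prefers $\pi^*$, or we argue per agent.
\end{itemize}

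For each $a\in\Fa_i$ not in $\MC$ and not with $\ls_i$, its positive neighbours are only its fellow replicas, $x_i$, and $\ls_i$ (and possibly $\gf$ agents, which value $\assf_i$ but to which $\assf_i$ assigns $0$). So its utility is at most $10m$, strictly less than in $\pi^*$. For an $a\in\Fa_i$ that is with $\ls_i$ while $\assf_i$ is not, $a$ loses the replica value $10$ from $\assf_i$, and since $10$ exceeds the sum of the other positives, it strictly prefers $\pi^*$. Thus every agent of $\Fa_i$ strictly prefers $\pi^*$, giving $m+1$.

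Next we check that the other agents of the gadget do not offset this. Agents of $\Ta_i$ can gain only by having both $x_i$ and $\ls_i$, which is excluded by \Cref{lem_d1:L_notin_MC}, so they contribute $\ge 0$. Agents of $\La_i$ in $\pi^*$ are together with all of $\Fa_i$ and obtain the maximum possible utility ($\ls_i$ cannot have $\asst_i$ and $\assf_i$ at once without $-\infty$). Hence they also contribute $\ge 0$. Therefore $\phi_{\AGD_i}(\pi^*,\pi)\ge m+1$.

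Finally, as in \Cref{lem_d1:assignment_MC_intersection}, we sum over all groups. \Cref{lem_d1:X_margin,lem_d1:assignment_margin,lem_d1:gate_margins} give nonnegative margins everywhere else, and the $V$-agents contribute at least $-m$. So $\phi(\pi^*,\pi)\ge 1>0$, a contradiction.

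The main obstacle is making the per-agent utility comparison rigorous for replicas that may be split from their origin. In particular we must use that the replica value $10$ dominates all other positive values, so that any separation from the origin already forces a strict loss.
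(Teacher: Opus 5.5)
Your proof is correct and follows the paper's argument: with $\asst_i\in\MC$ and $\assf_i\notin\pi(\ls_i)$ you lower-bound $\phi_{\AGD_i}(\pi^*,\pi)$, and the nonnegative margins of all other components absorb the at most $m$ voters. The paper notes that all of $\Fa_i\cup\La_i$ strictly prefer $\pi^*$ (giving $2m+2$), whereas you use only $\Fa_i$ (giving $m+1$), which already suffices; your justification for $\La_i$ needs a small fix, since $\La_i$ sits with $\Fa_i$ in $\pi^*$ only when $\XX^*_i=1$ and the real reason $\La_i$-agents cannot gain is $\asst_i\in\MC$ together with $L\cap\MC=\emptyset$ (the $-\infty$ is borne by $\asst_i,\assf_i$, not by $\ls_i$), but every $\La_i$-agent has utility $10m+1$ in $\pi^*$ in either case, so your bound $\phi_{\La_i}(\pi^*,\pi)\ge 0$ stands.
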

\begin{proof}
Let $i\in [n]$. By \Cref{lem_d1:assignment_MC_intersection} we have $|\{\asst_i,\assf_i\}\cap\MC|=1$. Without loss of generality assume $\asst_i\in\MC$. If $\assf_i\notin \pi(\ls_i)$, then all agents in $\Fa_i\cup \La_i$ prefer $\pi^*$ over $\pi$, and agents in $\Ta_i$ cannot prefer $\pi$ over $\pi^*$. Therefore, $\phi_{\AGD_i}(\pi^*,\pi)\geq 2m+2$. 
Hence, by \Cref{lem_d1:X_margin,lem_d1:assignment_margin,lem_d1:gate_margins},
we have that $\phi(\pi^*,\pi)>0$ (even if all $V$-agents prefer $\pi$), a contradiction. Hence, $\assf_i\in \pi(\ls_i)$, and the gadget is $\pi$-valid.
\end{proof}

Notice that at this point, we can elicit from $\pi$ an input string to our circuit, namely the string that the partition of assignment gadgets corresponds to. For the remainder of this section, we call this string $\XX'$.

We are ready to prove statements concerning whole circuits.
Recall that $C_j$ is the set of all agents associated with circuit $\CC_j$, which are the agents of the gates of $\CC_j$ and its voter agent.

\begin{lemma}
\label{lem_d1:gate_noncomply_loses}
Let $\CC_j$ be a circuit. If some gate of $\CC_j$ does not $\pi$-comply with $\XX'$, then there exists a gate $\GG_i$ of $\CC_j$ such that $\phi_{\GGD_i}(\pi^*,\pi)\geq 2$.
\end{lemma}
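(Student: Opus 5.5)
The plan is a minimal-counterexample argument along the topological order of the gates of $\CC_j$. Since some gate of $\CC_j$ fails to $\pi$-comply with $\XX'$, I will let $\GG_i$ be the first such gate in the topological order $\GG_1,\dots,\GG_{|\CC_j|}$. The aim is to show that $\GGD_i$ satisfies the hypotheses of \Cref{lem_d1:gates_dont_comply_margin}: its input gadgets are $\pi$-valid, and $\GGD_i$ does not $\pi$-comply with its inputs. That lemma then gives $\phi_{\GGD_i}(\pi^*,\pi)\ge 2$ directly.

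\emph{Copy-gate case.} Suppose $\GG_i=\XX_i$ is a Copy-gate. Its input is the assignment gadget $\AGD_i$, which is $\pi$-valid by \Cref{lem_d1:all_assignment_valid}. By definition of $\XX'$, $\AGD_i$ $\pi$-corresponds to $\XX'_i$. The value of $\GG_i$ on $\XX'$ is exactly $\XX'_i$. So not $\pi$-complying with $\XX'$ is the same as not $\pi$-complying with its input.

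\emph{Not- or And-gate case.} Suppose $\GG_i$ is a Not- or And-gate with input gates $\GG_{j'}$ (and $\GG_{k'}$). By the topological ordering, $j',k'<i$. By minimality of $i$, the gadgets of these input gates $\pi$-comply with $\XX'$, meaning each $\pi$-corresponds to the value its gate obtains on $\XX'$. In particular, they are $\pi$-valid, since corresponding to some value is exactly validity. The value $\GG_i$ obtains on $\XX'$ equals $\GG_i$ evaluated on the values $\XX'(\GG_{j'})$ (and $\XX'(\GG_{k'})$). These are precisely the values the input gadgets $\pi$-correspond to. Hence $\GGD_i$ $\pi$-complies with $\XX'$ if and only if it $\pi$-complies with its inputs, so it fails the latter.

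\emph{Main obstacle.} The only delicate point is this bookkeeping: checking that $\pi$-compliance with $\XX'$ for all earlier gates really yields $\pi$-validity of the inputs, together with agreement of the two notions of compliance in \Cref{def:compliance}. The Copy-gate base case depends on the first $n$ gates being exactly the Copy-gates, which read $\AGD_i$ rather than other gates. Once this is set up, \Cref{lem_d1:gates_dont_comply_margin} finishes the argument.
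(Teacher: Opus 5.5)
Your proposal is correct and follows the paper's proof: take the smallest-index gate of $\CC_j$ that fails to $\pi$-comply with $\XX'$, use the topological order (and, for Copy-gates, \Cref{lem_d1:all_assignment_valid} together with the definition of $\XX'$) to get that its inputs are $\pi$-valid and comply, then invoke \Cref{lem_d1:gates_dont_comply_margin}. The only difference is that you spell out more explicitly why complying with $\XX'$ is equivalent to complying with the inputs.
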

\begin{proof}
Consider the gate $\GG_i$ with the smallest index that does not $\pi$-comply with $\XX'$. Since the gates are topologically ordered, its inputs are $\pi$-valid and $\pi$-comply with $\XX'$ (if $\GG_i$ happens to be a Copy-gate, its inputs are $\pi$-valid by \Cref{lem_d1:all_assignment_valid}, and they $\pi$-comply with $\XX'$ by definition of $\XX'$).
Thus, by \Cref{lem_d1:gates_dont_comply_margin}, we have that $\phi_{\GGD_i}(\pi^*,\pi)\geq 2$.
\end{proof}

\begin{lemma}
\label{lem_d1:circuit_noncomply_loses}
Let $\CC_j$ be a circuit. If some gate of $\CC_j$ does not $\pi$-comply with $\XX'$, then $\phi_{C_j}(\pi^*,\pi)\geq 1$.
\end{lemma}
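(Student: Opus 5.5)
By \Cref{lem_d1:gate_noncomply_loses}, there is a gate $\GG_i$ of $\CC_j$ with $\phi_{\GGD_i}(\pi^*,\pi)\geq 2$. I will decompose the agents of $C_j$ into three disjoint parts: the gadget $\GGD_i$, the gadgets $\GGD_{i'}$ of the remaining gates $i'\neq i$ of $\CC_j$, and the voter agent $v_j$. Popularity margins are additive over disjoint sets, so
\[
\phi_{C_j}(\pi^*,\pi)=\phi_{\GGD_i}(\pi^*,\pi)+\sum_{i'\neq i}\phi_{\GGD_{i'}}(\pi^*,\pi)+\phi_{v_j}(\pi^*,\pi).
\]

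The three terms are bounded as follows:
\begin{itemize}
\item The first term is at least $2$ by the choice of $\GG_i$.
\item Each summand of the middle term is at least $0$ by \Cref{lem_d1:gate_margins}.
\item The last term is at least $-1$, since a single agent contributes $-1$, $0$, or $1$.
\end{itemize}
Summing these bounds gives $\phi_{C_j}(\pi^*,\pi)\geq 2+0-1=1$, as required.

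The argument has no real obstacle; it only needs the bookkeeping point that $C_j$ consists exactly of the gate gadgets of $\CC_j$ together with $v_j$. This matches the definition of $C_j$ and the proof of \Cref{lem_d1:circuit_worst_case}, and these sets are pairwise disjoint, so the additivity of margins applies.
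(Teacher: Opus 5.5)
Your proposal is correct and is the paper's proof: you take the gate with margin at least $2$ from \Cref{lem_d1:gate_noncomply_loses}, bound the remaining gate gadgets by $0$ via \Cref{lem_d1:gate_margins}, and allow $-1$ from the voter agent. Your explicit decomposition of $C_j$ into disjoint gate gadgets plus $v_j$ simply spells out the additivity step that the paper leaves implicit.
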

\begin{proof}
By \Cref{lem_d1:gate_noncomply_loses} there exists a gate $\GG_i$ with $\phi_{\GGD_i}(\pi^*,\pi)\geq 2$. Thus, by \Cref{lem_d1:gate_margins},
we have that $\phi_{C_j}(\pi^*,\pi)\geq 1$
(even if the voter agent of $\CC_j$ prefers $\pi$).
\end{proof}

\begin{lemma}
    \label{lem_d1:voter_opinion}
    Let $\CC_j$ be a circuit with voter agent $v$. If all gates of $\CC_j$ $\pi$-comply with $\XX'$ then $\sgn(\phi_v(\pi^*,\pi))=\sgn(\CC_j(\XX^*)-\CC_j(\XX'))$.
\end{lemma}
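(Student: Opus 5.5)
The plan is to compute the voter's utility in both partitions explicitly and compare. By \Cref{lem_d1:V_coalition} we have $V\subseteq\MC$, and by \Cref{lem_d1:X_coalition} also $x_1\in\MC$. By construction $V\cup\{x_1\}\subseteq\MC^*$. Hence in both $\pi$ and $\pi^*$ the voter $v$ collects exactly $(m-1)2^{n+1}+2^{n+1}=m\cdot 2^{n+1}$ from the other voters and $x_1$. The only other agents to which $v$ assigns a nonzero value are the output agents $\ot_1,\dots,\ot_n$ of its circuit, with values $2^1,\dots,2^n$. Note that $v$ assigns $0$ to replicas of $\ot_\ell$ and to every other agent, since $v$ is not an $L$-agent's partner. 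The $-\infty$ clause concerns only $L$-agents, and $L\cap\MC=\emptyset$ by \Cref{lem_d1:L_notin_MC} and by the construction of $\pi^*$. So the only contribution of negative valuations to $v$'s utility could come from $L$-agents in $v$'s coalition, and there are none in either partition.

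Writing $\OO_\ell$ for the $\ell$-th output gate, the utility of $v$ is therefore
\[u_v(\pi)=m2^{n+1}+\sum_{\ell=1}^n 2^\ell\,\mathds{1}[\ot_\ell\in\MC],\]
and analogously for $\pi^*$ with $\MC^*$. Next, use compliance. Since every gate of $\CC_j$ $\pi$-complies with $\XX'$, each output gadget $\pi$-corresponds to $\XX'(\OO_\ell)$. So $\ot_\ell\in\MC$ exactly when $\XX'(\OO_\ell)=1$; if the value is $0$, then $\ot_\ell\in\pi(\lg)\neq\MC$ by \Cref{lem_d1:L_notin_MC}. By the construction of $\pi^*$, likewise $\ot_\ell\in\MC^*$ exactly when $\XX^*(\OO_\ell)=1$. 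Hence
\[u_v(\pi)=m2^{n+1}+2\,\CC_j(\XX'),\qquad u_v(\pi^*)=m2^{n+1}+2\,\CC_j(\XX^*),\]
because $\OO_n$ is the most significant bit and the output is read as a non-negative binary number $\sum_\ell 2^{\ell-1}\OO_\ell$.

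Therefore $u_v(\pi^*)-u_v(\pi)=2(\CC_j(\XX^*)-\CC_j(\XX'))$. Since $\phi_v(\pi^*,\pi)$ is $1$, $-1$ or $0$ according to the sign of this difference, the claimed sign identity follows.

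The main obstacle is carefully justifying that no other agent affects $v$'s utility. This means checking that $\mathbf{v}_v$ is zero on replicas and on $\of_\ell$, and that no $L$-agent shares $v$'s coalition. The first follows from \Cref{def:replica-family}, since $v$ assigns a non-negative value to the origin and hence $0$ to its replicas. The second is handled by \Cref{lem_d1:L_notin_MC}. A second, minor point is the case where a single gate serves as two output bits; the formula still holds if the valuation to $\ot$ is taken as the sum of the corresponding powers of two.
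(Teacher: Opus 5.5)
Your proposal is correct and follows essentially the same route as the paper. In both arguments, compliance fixes exactly which output agents $\ot_\ell$ lie in $\MC$ (resp.\ $\MC^*$), and \Cref{lem_d1:X_coalition,lem_d1:L_notin_MC,lem_d1:V_coalition} make the rest of $v$'s utility identical in the two partitions, so the powers-of-two valuations give a utility difference of $2(\CC_j(\XX^*)-\CC_j(\XX'))$. Your explicit treatment of replicas via \Cref{def:replica-family} and of the $-\infty$ valuations toward $L$-agents only spells out what the paper leaves implicit, and the remark about one gate serving as two output bits is unnecessary, since the outputs are a size-$n$ subset of distinct gates.
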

\begin{proof}
Since all gates of $\CC_j$ comply with $\XX'$, the set of gates whose positive representatives are in $\MC$ are exactly the gates $\GG_i$ for which $XX'(\GG_i)=1$. 
By definition of $\pi^*$, the set of gates whose positive representatives are in $\MC^*$ are exactly the gates $\GG_i$ for which $XX^*(\GG_i)=1$.
In particular, these two observation hold fo the output gates of $\CC_j$.
Furthermore, by \Cref{lem_d1:X_coalition,lem_d1:L_notin_MC,lem_d1:V_coalition}, and by definition of $\pi^*$, in both $\MC$ and $\MC^*$ these positive representatives of the gates are the only agents that $v$ assigns a non-zero value to. 
The result then follows from the definition of the valuation function of $\mathbf{v}$.
\end{proof}

\begin{lemma}
\label{lem_d1:circuits_prefer_pi_implies_X'}
Let $\CC_j$ be a circuit. Then the following holds.
\begin{itemize}
    \item If $\phi_{C_j}(\pi^*,\pi)<0$ then $\CC_j(\XX^*)<\CC_j(\XX')$.
    \item If $\phi_{C_j}(\pi^*,\pi)=0$ then $\CC_j(\XX^*)=\CC_j(\XX')$.
\end{itemize}
\end{lemma}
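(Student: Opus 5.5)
Both items follow from the same reduction: first show that every gate of $\CC_j$ must $\pi$-comply with $\XX'$, and then read off the sign of $\phi_{C_j}(\pi^*,\pi)$ from the voter agent $v$ of $\CC_j$ via \Cref{lem_d1:voter_opinion}.

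\textbf{Step 1: all gates comply.} Assume $\phi_{C_j}(\pi^*,\pi)\le 0$, which covers both items. If some gate of $\CC_j$ did not $\pi$-comply with $\XX'$, then \Cref{lem_d1:circuit_noncomply_loses} would give $\phi_{C_j}(\pi^*,\pi)\ge 1$, contradicting the assumption. Hence every gate of $\CC_j$ $\pi$-complies with $\XX'$, and \Cref{lem_d1:voter_opinion} applies:
\[
\sgn(\phi_v(\pi^*,\pi))=\sgn\big(\CC_j(\XX^*)-\CC_j(\XX')\big).
\]

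\textbf{Step 2: only the voter can contribute.} It remains to relate $\phi_{C_j}$ to $\phi_v$. Since $C_j$ is the disjoint union of the gate gadgets of $\CC_j$ and $\{v\}$, additivity of popularity margins gives
\[
\phi_{C_j}(\pi^*,\pi)=\phi_v(\pi^*,\pi)+\sum_i \phi_{\GGD_i}(\pi^*,\pi),
\]
and by \Cref{lem_d1:gate_margins} every summand in the sum is nonnegative. Therefore $\phi_{C_j}(\pi^*,\pi)\ge \phi_v(\pi^*,\pi)$, and $\phi_v$ takes values in $\{-1,0,1\}$.

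\textbf{Step 3: conclude each item.}
\begin{itemize}
\item If $\phi_{C_j}(\pi^*,\pi)<0$, then $\phi_v(\pi^*,\pi)<0$, that is $\phi_v=-1$. By Step 1 this sign equals $\sgn(\CC_j(\XX^*)-\CC_j(\XX'))$, so $\CC_j(\XX^*)<\CC_j(\XX')$.
\item If $\phi_{C_j}(\pi^*,\pi)=0$, then $\phi_v\le 0$. Suppose $\phi_v=-1$. Then the gadget margins sum to exactly $1$, so exactly one gadget has margin $1$ and all others have margin $0$. This is not excluded by what was proved so far, so the case must be handled through the sign identity instead. Since $\phi_v=-1$, Step 1 gives $\CC_j(\XX^*)<\CC_j(\XX')$, which contradicts the desired conclusion.
\end{itemize}

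\textbf{Expected obstacle.} The hard step is to exclude $\phi_v=-1$ together with a single gadget of margin exactly $1$ when $\phi_{C_j}=0$. My first attempt would be a parity argument: every gadget consists of origin-plus-replica pairs that behave identically, because a replica has the same valuations toward outsiders as its origin. If each pair always co-locates, then every gadget margin is even. The sum of gadget margins would then be even, so $\phi_v$ would be even, forcing $\phi_v=0$ and hence $\CC_j(\XX^*)=\CC_j(\XX')$. Justifying co-location needs Pareto optimality: origins and replicas value each other at $10$, which exceeds all their other positive values, so splitting a pair should never be Pareto-efficient. If establishing co-location turns out to be delicate, I would instead check directly, case by case, that no compliant gadget has margin exactly $1$, using the case structure of \Cref{lem_d1:gate_margins}.
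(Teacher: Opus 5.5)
Your Steps 1 and 2 and the first bullet of Step 3 match the paper's proof exactly. The second bullet, however, is not proved. The case $\phi_v(\pi^*,\pi)=-1$, with gadget margins summing to $1$, is never excluded: Step 3 only observes that this case would contradict the claim, and the parity repair is left conditional on every replica sharing a coalition with its origin in $\pi$.

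That co-location is only established later (\Cref{lem_d1:replicas_with_origin}), and the Pareto argument you sketch for it is not immediate. Inserting a replica into its origin's coalition hurts every member who values the origin negatively, because replicas inherit negative valuations. The replica itself gains only if its origin's coalition is good enough. You are right that \Cref{lem_d1:gate_margins} alone only yields $\phi_v\le 0$. The paper's proof simply asserts here that $v$ must be indifferent ``by \Cref{lem_d1:gate_margins}'', which is exactly the inference you flagged.

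The missing observation is that your parity argument does not need co-location in $\pi$. Every gate gadget is a disjoint union of origin--replica pairs. In $\pi^*$ each pair sits together in a coalition containing no agent its members value negatively, so both members have utility at least $10$. There are two cases in $\pi$:
\begin{itemize}
\item If the pair is also together in $\pi$, its two members value all other agents identically. They therefore have equal utilities in $\pi$ and in $\pi^*$, and the pair contributes $-2$, $0$, or $2$.
\item If the pair is split in $\pi$, each member gets at most the sum of its positive valuations of agents other than its partner. This sum is at most $9$, the maximum being attained by $\gf_i$ of an And-gate. So both members strictly prefer $\pi^*$, and the pair contributes $2$.
\end{itemize}
Hence every $\phi_{\GGD_i}(\pi^*,\pi)$ is even. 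It follows that $\phi_v(\pi^*,\pi)=\phi_{C_j}(\pi^*,\pi)-\sum_i\phi_{\GGD_i}(\pi^*,\pi)$ is even, hence $0$, and \Cref{lem_d1:voter_opinion} gives $\CC_j(\XX^*)=\CC_j(\XX')$.

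Alternatively, only the first bullet is used later (in \Cref{lem_d1:x'=x*}). Once $\XX'=\XX^*$ is established, the conclusion of the second bullet holds trivially.
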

\begin{proof}
First, in both cases all gates of $\CC_j$ $\pi$-comply with $\XX'$, as otherwise by \Cref{lem_d1:circuit_noncomply_loses} we have $\phi_{C_j}(\pi^*,\pi)\geq 1$, a contradiction.
Let $v$ be the voter agent of $\CC_j$.
Suppose $\phi_{C_j}(\pi^*,\pi)<0$. Then, by \Cref{lem_d1:gate_margins}, $v$ must prefer $\pi$. Hence, by \Cref{lem_d1:voter_opinion} we have that $\CC_j(\XX^*)<\CC_j(\XX')$.
Suppose $\phi_{C_j}(\pi^*,\pi)=0$. Then, by \Cref{lem_d1:gate_margins}, $v$ must be indifferent between $\pi$ and $\pi^*$. Hence, by \Cref{lem_d1:voter_opinion} we have that $\CC_j(\XX^*)=\CC_j(\XX')$. 
\end{proof}

\begin{lemma}
\label{lem_d1:circuits_prefer_X*_implies_pi*}
Let $\CC_j$ be a circuit. If $\CC_j(\XX^*)>\CC_j(\XX')$ then $\phi_{C_j}(\pi^*,\pi)>0$.
\end{lemma}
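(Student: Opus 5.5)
We split into two cases according to whether all gates of $\CC_j$ $\pi$-comply with $\XX'$.

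\emph{Case 1: some gate of $\CC_j$ does not $\pi$-comply with $\XX'$.} Then \Cref{lem_d1:circuit_noncomply_loses} gives $\phi_{C_j}(\pi^*,\pi)\geq 1>0$ directly. This holds regardless of the relation between $\CC_j(\XX^*)$ and $\CC_j(\XX')$, so this case needs no further work.

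\emph{Case 2: all gates of $\CC_j$ $\pi$-comply with $\XX'$.} Let $v$ be the voter agent of $\CC_j$. By \Cref{lem_d1:voter_opinion}, $\sgn(\phi_v(\pi^*,\pi))=\sgn(\CC_j(\XX^*)-\CC_j(\XX'))=1$, so $v$ strictly prefers $\pi^*$ and $\phi_v(\pi^*,\pi)=1$. The set $C_j$ is the disjoint union of $\{v\}$ and the gadgets $\GGD_i$ of the gates of $\CC_j$. By \Cref{lem_d1:gate_margins}, each of these gadgets has nonnegative margin. Additivity of popularity margins over disjoint sets then gives
\[\phi_{C_j}(\pi^*,\pi)=\phi_v(\pi^*,\pi)+\sum_i \phi_{\GGD_i}(\pi^*,\pi)\geq 1>0.\]

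Equivalently, the statement is the contrapositive of \Cref{lem_d1:circuits_prefer_pi_implies_X'}. If $\phi_{C_j}(\pi^*,\pi)\le 0$, that lemma yields $\CC_j(\XX^*)\le\CC_j(\XX')$.

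The only point needing care is that \Cref{lem_d1:voter_opinion} indeed applies in Case 2. It relies on \Cref{lem_d1:X_coalition,lem_d1:L_notin_MC,lem_d1:V_coalition}, which hold unconditionally for $\pi$, and on \Cref{lem_d1:all_assignment_valid}, which makes $\XX'$ well defined. So there is no real obstacle; the argument is a direct combination of earlier lemmas.
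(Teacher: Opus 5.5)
Your proof is correct and matches the paper's. The paper argues by contradiction: it assumes $\phi_{C_j}(\pi^*,\pi)\le 0$, uses \Cref{lem_d1:circuit_noncomply_loses} to force every gate of $\CC_j$ to comply with $\XX'$, and then combines \Cref{lem_d1:voter_opinion} with \Cref{lem_d1:gate_margins} to get a positive margin — the same ingredients as your direct case split. Your remark that the statement is the contrapositive of \Cref{lem_d1:circuits_prefer_pi_implies_X'} is also valid.
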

\begin{proof}
Let $v$ be the $V$-agent corresponding to $\CC_j$.
By \Cref{lem_d1:X_coalition,lem_d1:L_notin_MC,lem_d1:V_coalition}, and by definition of $\pi^*$, we have that $\MC$ and $\MC^*$ both contain all $V$- and $X$-agents, and no $L$-agent.
Assume for contradiction that $\phi_{C_j}(\pi^*,\pi)\leq 0$. Then by \Cref{lem_d1:circuit_noncomply_loses} we have that all gates $\pi$-comply with $\XX'$. Hence, since $\CC_j(\XX^*)>\CC_j(\XX')$, by \Cref{lem_d1:voter_opinion} we have $\phi_v(\pi^*,\pi)>0$.
Hence, by \Cref{lem_d1:gate_margins} we have $\phi_{C_j}(\pi^*,\pi)>0$.
\end{proof}

For the following lemmas, we observe that the sets $C:=\bigcup_{j\in [m]}C_j$, $AG:=\bigcup_{i\in [n]}AG_i$, and $X$ are disjoint, and their union is $N$. Thus, we have that 
\begin{equation}
\label{eq_d1:popularity_margins_sum}
\phi(\pi^*,\pi) = \phi_C(\pi^*,\pi) + \phi_{AG}(\pi^*,\pi) + \phi_X(\pi^*,\pi)
\end{equation}

\begin{lemma}
    \label{lem_d1:x'=x*}
    It holds that $\XX'=\XX^*$.
\end{lemma}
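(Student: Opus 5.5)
We argue by contradiction: assume $\XX'\neq\XX^*$ and derive $\phi(\pi^*,\pi)>0$, contradicting the choice of $\pi$ with $\phi(\pi^*,\pi)\le 0$. The tool is the decomposition in \Cref{eq_d1:popularity_margins_sum}. By \Cref{lem_d1:X_margin} we have $\phi_X(\pi^*,\pi)\ge 0$, and by \Cref{lem_d1:assignment_margin} we have $\phi_{AG}(\pi^*,\pi)\ge 0$. It therefore suffices to show $\phi_C(\pi^*,\pi)=\sum_{j\in[m]}\phi_{C_j}(\pi^*,\pi)>0$.

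To do this, we compare circuits against the Condorcet property of $\XX^*$. Since $\XX^*$ is a Condorcet string and $\XX'\neq\XX^*$,
\[\sum_{j=1}^m\sgn\big(\CC_j(\XX^*)-\CC_j(\XX')\big)>0.\]
We partition the circuits into three classes according to the sign of $\CC_j(\XX^*)-\CC_j(\XX')$ and show that, circuit by circuit, $\sgn(\phi_{C_j}(\pi^*,\pi))\ge \sgn(\CC_j(\XX^*)-\CC_j(\XX'))$.

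\textbf{Positive sign.} If $\CC_j(\XX^*)>\CC_j(\XX')$, then \Cref{lem_d1:circuits_prefer_X*_implies_pi*} gives $\phi_{C_j}(\pi^*,\pi)\ge 1$.

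\textbf{Zero sign.} If $\CC_j(\XX^*)=\CC_j(\XX')$, then the contrapositive of the first item of \Cref{lem_d1:circuits_prefer_pi_implies_X'} gives $\phi_{C_j}(\pi^*,\pi)\ge 0$.

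\textbf{Negative sign.} If $\CC_j(\XX^*)<\CC_j(\XX')$, then \Cref{lem_d1:circuit_worst_case} gives $\phi_{C_j}(\pi^*,\pi)\ge -1$.

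Since all margins are integers, in each class $\phi_{C_j}(\pi^*,\pi)\ge \sgn(\CC_j(\XX^*)-\CC_j(\XX'))$. Summing over $j$,
\[\phi_C(\pi^*,\pi)\ \ge\ \sum_j\sgn\big(\CC_j(\XX^*)-\CC_j(\XX')\big)\ >\ 0.\]
Plugging this into \Cref{eq_d1:popularity_margins_sum} yields $\phi(\pi^*,\pi)>0$, the desired contradiction.

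The main point to check is that this bookkeeping is airtight. First, the bound $-1$ in \Cref{lem_d1:circuit_worst_case} must exactly match the worst sign contribution $-1$; it does, because only the voter can prefer $\pi$ once the gate margins are nonnegative. Second, the lemmas used per circuit hold for an arbitrary fixed Pareto-optimal $\pi$ with $\phi(\pi^*,\pi)\le 0$, which is exactly the standing assumption of this section, so no additional case analysis on gate compliance is needed beyond what those lemmas already encapsulate.
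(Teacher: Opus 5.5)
Your proof is correct and matches the paper's argument. The paper uses the same decomposition and the same three lemmas (\Cref{lem_d1:circuits_prefer_pi_implies_X',lem_d1:circuits_prefer_X*_implies_pi*,lem_d1:circuit_worst_case}) to get $\lambda_{\XX^*}\ge\kappa_{\XX^*}>\kappa_{\XX'}\ge\lambda_{\XX'}$ by counting circuits and then sums the margins; your per-circuit bound $\phi_{C_j}(\pi^*,\pi)\ge\sgn(\CC_j(\XX^*)-\CC_j(\XX'))$ is a slightly tidier packaging of that count.
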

\begin{proof}
Denote by $\kappa_{\XX^*}$ the number of circuits $\CC_j$ with $\CC_j(\XX^*)>\CC_j(\XX')$, and by $\kappa_{\XX'}$ the number of circuits $\CC_j$ with $\CC_j(\XX^*)<\CC_j(\XX')$. Similarly, denote by $\lambda_{\XX^*}$ and $\lambda_{\XX'}$ the number of circuits $\CC_j$ with $\phi_{C_j}(\pi^*,\pi)>0$ and $\phi_{C_j}(\pi^*,\pi)<0$, respectively.
By \Cref{lem_d1:circuits_prefer_pi_implies_X'}, it holds that any circuit $\CC_j$ with $\phi_{C_j}(\pi^*,\pi)<0$ has $\CC_j(\XX^*)<\CC_j(\XX')$, and therefore
\begin{equation}
    \label{eq:x'_values_atleast_margins}
    \kappa_{\XX'}\geq \lambda_{\XX'}.
\end{equation} 
By \Cref{lem_d1:circuits_prefer_X*_implies_pi*}, it holds that any circuit $\CC_j$ with $\CC_j(\XX^*)>\CC_j(\XX')$ has $\phi_{C_j}(\pi^*,\pi)>0$, and therefore 
\begin{equation}
    \label{eq:x*_marginss_atleast_values}
    \lambda_{\XX^*}\geq \kappa_{\XX^*}.
\end{equation}
Assume for contradiction that $\XX'\neq \XX^*$.
Then by strong popularity of $\XX^*$ we have that $\kappa_{\XX^*}>\kappa_{\XX'}$. 
Hence, by \Cref{eq:x'_values_atleast_margins,eq:x*_marginss_atleast_values} we have $\lambda_{\XX^*}\ge \kappa_{\XX^*} > \kappa_{\XX'}  \ge \lambda_{\XX'}$. In other words, there are more circuits whose popularity margin favors $\XX^*$ over $\XX'$ than the converse. 
Additionally, by \Cref{lem_d1:circuit_worst_case} we have that gates whose popularity margin favors $\XX'$ have exactly $\phi_{\CC_j}(\pi^*,\pi)= -1$. Therefore, counting all circuits, we have $\sum_{\CC_j}\phi_{\CC_j}(\pi^*,\pi)\geq 1$, that is $\phi_{\bigcup_{j\in [m]}C_j}(\pi^*,\pi) \ge 1$. 
Thus, by \Cref{lem_d1:X_margin,lem_d1:assignment_margin,eq_d1:popularity_margins_sum} we have that $\phi(\pi^*,\pi) = \phi_{\bigcup_{j\in [m]}C_j}(\pi^*,\pi) + \phi_{\bigcup_{i\in [n]}AG_i}(\pi^*,\pi) + \phi_X(\pi^*,\pi) \ge 1$, a contradiction to the definition of $\pi$.  
\end{proof}

\begin{lemma}
\label{lem_d1:all_gadgets_0_margin}
Let $\GD_i$ be an assignment or gate gadget. Then $\phi_{\GD_i}(\pi^*,\pi)=0$. 
\end{lemma}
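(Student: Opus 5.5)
We argue by contradiction, using the aggregate margin decomposition \Cref{eq_d1:popularity_margins_sum} together with \Cref{lem_d1:x'=x*}. Since $\XX'=\XX^*$, every circuit $\CC_j$ satisfies $\CC_j(\XX^*)=\CC_j(\XX')$.

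The first step is to determine the contribution of each circuit. Suppose some gate of $\CC_j$ does not $\pi$-comply with $\XX'$. Then \Cref{lem_d1:circuit_noncomply_loses} gives $\phi_{C_j}(\pi^*,\pi)\ge 1$. Otherwise all gates of $\CC_j$ comply, and \Cref{lem_d1:voter_opinion} yields $\phi_v(\pi^*,\pi)=0$ for its voter $v$. Combined with \Cref{lem_d1:gate_margins}, this gives $\phi_{C_j}(\pi^*,\pi)\ge 0$ in either case. Together with \Cref{lem_d1:X_margin,lem_d1:assignment_margin}, every term in the decomposition \Cref{eq_d1:popularity_margins_sum} is non-negative. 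Since $\phi(\pi^*,\pi)\le 0$ by the choice of $\pi$, each summand must equal $0$.

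Next we pin down each gadget individually. Each assignment gadget has non-negative margin by \Cref{lem_d1:assignment_margin}, and their margins sum to $\phi_{AG}(\pi^*,\pi)=0$, so each one is exactly $0$. For a gate gadget $\GD_i$ lying in circuit $\CC_j$, the previous step forces $\phi_{C_j}(\pi^*,\pi)=0$. By \Cref{lem_d1:circuit_noncomply_loses} this means every gate of $\CC_j$ complies with $\XX'$. Then \Cref{lem_d1:voter_opinion} gives $\phi_v(\pi^*,\pi)=0$ for the voter, so $\phi_{C_j\setminus\{v\}}(\pi^*,\pi)=0$. This set is the disjoint union of the gate gadgets of $\CC_j$, each with non-negative margin by \Cref{lem_d1:gate_margins}. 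Hence each gate gadget has margin exactly $0$.

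The only delicate point is the first step: confirming that a circuit cannot contribute negatively once $\XX'=\XX^*$. This relies on applying \Cref{lem_d1:voter_opinion} to show the voter is indifferent, and on handling the non-compliant case separately through \Cref{lem_d1:circuit_noncomply_loses}. Beyond that, the argument is routine bookkeeping of non-negative summands whose total is non-positive.
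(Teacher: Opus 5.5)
Your proof is correct and follows essentially the same route as the paper. Using $\XX'=\XX^*$, each circuit has non-negative margin: when all its gates comply, the voter is indifferent and the gate-gadget margins are non-negative; otherwise the circuit's margin is at least $1$. Together with the non-negative $X$- and assignment-gadget margins and $\phi(\pi^*,\pi)\le 0$, this forces every gadget to have margin exactly $0$, the one cosmetic difference being that you get compliance circuit by circuit from $\phi_{C_j}(\pi^*,\pi)=0$ rather than first deducing global compliance as the paper does.
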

\begin{proof}
Fix some circuit $\CC_j$, with $V$-agent $v$. 
If all gates of $\CC_j$ $\pi$-comply with $\XX'$, by \Cref{lem_d1:voter_opinion} we have $\phi_v(\pi^*,\pi)=0$, and thus by \Cref{lem_d1:gate_margins} we have $\phi_{C_j}(\pi^*,\pi)=0$. 
If not all gates of $\CC_j$ $\pi$-comply with $\XX'$, then by \Cref{lem_d1:circuit_noncomply_loses} we have that $\phi_{C_j}(\pi^*,\pi)\ge 1$. 
Thus, either way we have $\phi_{C_j}(\pi^*,\pi)\geq 0$. This holds for all circuits. Hence, by \Cref{lem_d1:X_margin,lem_d1:assignment_margin,eq_d1:popularity_margins_sum}, if even one circuit has a gate which does not comply with $\XX'$, then $\phi(\pi^*,\pi)>0$, a contradiction.
Hence, all gates comply with $\XX'$, and we have that $\XX'=\XX^*$ by \Cref{lem_d1:x'=x*}, implying $\CC_j(\XX^*)=\CC_j(\XX')$ for all $j\in [m]$. Thus, by \Cref{lem_d1:voter_opinion} all $V$-agents are indifferent between $\pi^*$ and $\pi$.
Therefore, by \Cref{lem_d1:X_margin,lem_d1:assignment_margin,lem_d1:gate_margins} and the above observation, we conclude that each component—$X$-agents, assignment gadgets, gate gadgets, and $V$-agents—has a popularity margin of at least $0$ with respect to $(\pi^*, \pi)$. Since these are disjoint components whose union is $N$, this implies that each assignment gadget and each gate gadget must have a popularity margin of exactly $0$, as otherwise $\pi^*$ is more popular than $\pi$, a contradiction. 
\end{proof}

\begin{lemma}
\label{lem_d1:gates_valid}
All gate gadgets $\pi$-comply with $\XX'$ (and are therefore $\pi$-valid).
\end{lemma}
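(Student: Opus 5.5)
The lemma follows almost directly from the lemmas just established, and I would argue by contradiction. The intended route is the first step of the proof of \Cref{lem_d1:all_gadgets_0_margin}, which already contains the needed observation.

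Suppose some gate gadget $\GGD_i$, say in circuit $\CC_j$, does not $\pi$-comply with $\XX'$. Then \Cref{lem_d1:circuit_noncomply_loses} gives $\phi_{C_j}(\pi^*,\pi)\ge 1$. Every other circuit $\CC_{j'}$ also has $\phi_{C_{j'}}(\pi^*,\pi)\ge 0$, and the argument splits on whether its gates comply:
\begin{itemize}
\item If all gates of $\CC_{j'}$ comply, then \Cref{lem_d1:x'=x*} gives $\XX'=\XX^*$. By \Cref{lem_d1:voter_opinion} the voter of $\CC_{j'}$ is then indifferent, and \Cref{lem_d1:gate_margins} bounds the gadget margins below by $0$.
\item If some gate of $\CC_{j'}$ does not comply, \Cref{lem_d1:circuit_noncomply_loses} applies again and gives margin at least $1$.
\end{itemize}
Next I combine these bounds using \Cref{eq_d1:popularity_margins_sum}. \Cref{lem_d1:X_margin} gives $\phi_X(\pi^*,\pi)\ge 0$, and \Cref{lem_d1:assignment_margin} gives $\phi_{AG}(\pi^*,\pi)\ge 0$. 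Hence $\phi(\pi^*,\pi)\ge 1>0$, which contradicts the choice of $\pi$ with $\phi(\pi^*,\pi)\le 0$. So every gate gadget $\pi$-complies with $\XX'$.

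Validity is then immediate from \Cref{def:compliance}. Compliance means $\GGD_i$ $\pi$-corresponds to a specific value in $\{0,1\}$, and that is exactly $\pi$-validity in the sense of \Cref{def:valid_partition}.

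The only point needing care is that \Cref{lem_d1:gate_noncomply_loses} requires the inputs of the first non-complying gate to be $\pi$-valid. That is already handled inside \Cref{lem_d1:gate_noncomply_loses}, through the topological ordering and \Cref{lem_d1:all_assignment_valid}, so there is no real obstacle. An alternative, even shorter route is also available: \Cref{lem_d1:all_gadgets_0_margin} gives $\phi_{\GGD_i}(\pi^*,\pi)=0$ for every gadget. This contradicts the gate $\GGD_i$ with margin at least $2$ supplied by \Cref{lem_d1:gate_noncomply_loses} whenever some gate fails to comply.
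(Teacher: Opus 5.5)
Your proof is correct and is essentially the paper's. The paper's proof is exactly your ``alternative, even shorter route'': \Cref{lem_d1:gate_noncomply_loses} yields a gadget with margin at least $2$, contradicting \Cref{lem_d1:all_gadgets_0_margin}. Your main route just inlines the first step of the proof of \Cref{lem_d1:all_gadgets_0_margin}, though \Cref{lem_d1:x'=x*} holds unconditionally, so $\XX'=\XX^*$ need not be presented as following from the gates of $\CC_{j'}$ complying.
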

\begin{proof}
Assume towards contradiction that some gate does not comply with $\XX'$. Then by \Cref{lem_d1:gate_noncomply_loses} there exists a gate $\GG_i$ with $\phi_{\GGD_i}(\pi^*,\pi)\geq 2$, a contradiction to \Cref{lem_d1:all_gadgets_0_margin}.
\end{proof}

\begin{lemma}
\label{lem_d1:gadget_agents_0_margin}
Let $\GD_i$ be an assignment or gate gadget, and let $\ga\in\GD_i$. Then $\phi_{\ga}(\pi^*,\pi)=0$. 
\end{lemma}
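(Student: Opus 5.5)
Since every gadget has margin exactly $0$ by \Cref{lem_d1:all_gadget_0_margin}, it suffices to show that no agent of any gadget prefers $\pi^*$ over $\pi$. If none does, a zero margin forces that none prefers $\pi$ either, and so every agent is indifferent. The key facts are already established. By \Cref{lem_d1:all_assignment_valid,lem_d1:gates_valid,lem_d1:x'=x*}, every assignment gadget and every gate gadget is $\pi$-valid and corresponds to exactly the same value as in $\pi^*$. That is, the core representative of the true value of each bit or gate lies in $\MC$, and the other core representative lies in $\pi(\ell)$ for the corresponding alternative agent $\ell$. Moreover $X\cup V\subseteq\MC$ and $L\cap\MC=\emptyset$ by \Cref{lem_d1:X_coalition,lem_d1:L_notin_MC,lem_d1:V_coalition}.

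The main step is to show that each agent in $\AGD_i$ or $\GGD_i$ obtains utility in $\pi$ at least its utility in $\pi^*$, and I would check this type by type.

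\emph{Assignment gadget.} Suppose $\asst_i\in\MC$. Then $\asst_i$ is with $x_i$, so it gets value $1$. It cannot have $\ls_i$, since $\ls_i\notin\MC$. Its replicas must be in $\MC$ as well: any replica outside $\MC$ would lose the replica values of $10$, which no other positive term can compensate, and it would then strictly prefer $\pi^*$, contradicting the zero margin. The same argument gives that $\assf_i$ and its replicas are in $\pi(\ls_i)$ and obtain the same utility as in $\pi^*$.

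\emph{Copy- and Not-gadgets.} These are handled identically. The positive representative in $\MC$ meets its correct input agent there, and the negative one sits with $\lg_i$.

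\emph{And-gadget.} This case needs more care. I would argue that $\wand_i$ and $\zand_i$, whenever they are not forced into $\MC$ in $\pi^*$, already get their maximal utility from $\lw_i$ and $\lz_i$. Where $\pi^*$ puts $\{\zand_i,\zzand_i\}$ into $\MC^*$ (or the $\wand_i$ pair), the $-\infty$ constraints together with the valid inputs pin down what $\pi$ can do.

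I expect this And-gadget bookkeeping to be the main obstacle. There, some agents (e.g.\ $\gf_i$ when $\GG_j\ne\GG_k$) could in principle be strictly better off in $\pi$ by gaining $\wand_i$ or $\zand_i$, while others compensate. I would handle it by a counting argument. Any agent of the gadget preferring $\pi$ forces a configuration which, following the case analysis of \Cref{lem_d1:gates_dont_comply_margin,lem_d1:w_z_balance}, makes at least as many agents prefer $\pi^*$. Combined with margin exactly $0$, this rules out any strict preference, as follows. First show that no agent strictly prefers $\pi$, using that each potential improvement (e.g.\ $\ga\in\Fg_i$ joining both $\wand_i$ and $\zand_i$, or $\ell\in\Lg_i$ joining both $\gt_i,\gf_i$) contradicts validity or the $-\infty$ values. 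Zero margin then gives that no agent strictly prefers $\pi^*$, and hence $\phi_\ga(\pi^*,\pi)=0$ for all $\ga$.
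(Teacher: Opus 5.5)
Your proposal has a genuine gap, and it starts with the direction of the main step. The opening reduction is logically fine: if no gadget agent prefers $\pi^*$, zero margin means none prefers $\pi$. But it commits you to the hard direction. You cannot read off from validity, $X\cup V\subseteq\MC$ and $L\cap\MC=\emptyset$ that each gadget agent does at least as well in $\pi$ as in $\pi^*$. Validity only locates the origin agents $\asst_i,\assf_i,\gt_i,\gf_i$; it says nothing about their replicas or about $\Wand_i\cup\Zand_i$, and a replica separated from its origin really does prefer $\pi^*$. Your way of excluding this (``it would strictly prefer $\pi^*$, contradicting the zero margin'') is not a contradiction. $\phi_{\GGD_i}(\pi^*,\pi)=0$ is compatible with one agent preferring $\pi^*$ as long as another prefers $\pi$. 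It becomes a contradiction only once you know that no agent of the gadget prefers $\pi$, which is the converse of what this step is meant to deliver. In the paper, replica placement (\Cref{lem_d1:replicas_with_origin}) is derived \emph{from} this lemma. The paper argues in the opposite order: (i) show $\phi_{\ga}(\pi^*,\pi)\ge 0$ for every gadget agent, then (ii) use \Cref{lem_d1:all_gadgets_0_margin} to exclude any strict preference for $\pi^*$. Your last paragraph silently switches to this order, which contradicts your opening. Once (i) is available, the type-by-type utility comparison is superfluous.

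Step (i) is then asserted rather than proved, and your mechanism for it fails on And-gadgets. For assignment, Copy- and Not-gadgets, and for $\Tg_i\cup\Lg_i\cup\Wand_i\cup\Zand_i$ in an And-gadget, (i) does follow from validity together with \Cref{lem_d1:L_notin_MC,lem_d1:L_separated}. For example, an $\Lg_i$-agent would need both $\gt_i$ and $\gf_i$, one of which is in $\MC$; $\wand_i$ would need $\gf_i$ and $\lw_i$ together. But $\gf_i$ sharing $\MC$ with $\wand_i$ or $\zand_i$ contradicts neither validity nor any $-\infty$ value. It can only be excluded by counting, and ``at least as many agents prefer $\pi^*$'' combined with zero margin yields nothing; you need a strict inequality. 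The case analysis of \Cref{lem_d1:gates_dont_comply_margin} concerns non-compliant gadgets and does not transfer.

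The configuration you flag shows the count inside $\GGD_i$ can genuinely tie. Take $\XX^*(\GG_j)=1$, $\XX^*(\GG_k)=0$, and let $\pi$ agree with $\pi^*$ except that $\zand_i,\zzand_i$ join $\MC$ while $\wand_i,\wwand_i$ join $\lw_i,\lww_i$. Then $\gf_i,\ggf_i$ rise from $13$ to $14$, $\zand_i,\zzand_i$ prefer $\pi^*$, every other agent of $\GGD_i$ is indifferent, and $\phi_{\GGD_i}(\pi^*,\pi)=0$.

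The missing idea is to look outside the gadget. By validity, a $\wand_i$ (resp.\ $\zand_i$) newly placed in $\MC$ always meets one of $\gf_j,\gt_k$ (resp.\ $\gt_j,\gf_k$), who values it at $-\infty$. That core agent and its replica therefore prefer $\pi^*$. Under validity, the only gadget agents that can prefer $\pi$ at all are $\Fg$-agents of And-gadgets, and each is offset inside its own gadget by hurt $W$-, $Z$- or $L$-agents. So that neighbouring gadget has strictly positive margin, contradicting \Cref{lem_d1:all_gadgets_0_margin}. Note that the paper's own proof only lists a coalition with $\Lg_i$ or $\{\wand_i,\zand_i\}\subseteq\MC$ as ways for an $\Fg_i$-agent to gain. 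This single-intruder case therefore needs the same extra argument there.
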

\begin{proof}
First, assume for contradiction that $\phi_{\ga}(\pi^*,\pi)<0$.
If $\GD_i$ is not an And gadget, this implies that either an $X$-agent and an $L$-agent together in a coalition (contradicting \Cref{lem_d1:L_notin_MC}), or $\GD_i$ or one of its inputs is invalid, a contradiction to \Cref{lem_d1:gates_valid}.
If $\GD_i$ is an And gadget, the same reasoning as above shows that agents in $\Tg_i\cup\Lg_i$ cannot prefer $\pi$.
Thus, if an agent in $\Wand_i\cup\Zand_i$ prefers $\pi$, by \Cref{lem_d1:w_z_balance} we obtain a contradiction to \Cref{lem_d1:all_gadgets_0_margin}.
Otherwise, an agent in $\Fg_i$ must prefer $\pi$.
But since all gates are $\pi$-valid, the only possibilities for that to occur involves a coalition with $\Lg_i$-agents and $\Wand_i\cup\Zand_i$, or else $\{\wand_i,\zand_i\}\subseteq\MC$, which can both be seen to yield $\phi_{\GD_i}(\pi^*,\pi)>0$, a contradiction to \Cref{lem_d1:all_gadgets_0_margin}.

Therefore, we have $\phi_{\ga}(\pi^*,\pi)\ge 0$ for all agents $\ga\in\GGD_i$. Thus, we cannot have $\phi_{\ga}(\pi^*,\pi)>0$, as this would imply $\phi_{\GD_i}(\pi^*,\pi)>0$, a contradiction to \Cref{lem_d1:all_gadgets_0_margin}. Hence we have $\phi_{\ga}(\pi^*,\pi)=0$.
\end{proof}

\begin{lemma}
\label{lem_d1:replicas_with_origin}
In $\pi$, all one-way replicas are in the coalition of their respective origin agent.
\end{lemma}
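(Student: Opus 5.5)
The plan is to argue by contradiction using \Cref{lem_d1:gadget_agents_0_margin}: every agent of an assignment or gate gadget has the same utility in $\pi$ as in $\pi^*$. Since every one-way replica belongs to some gadget (replicas share the type of their origin, and the only replicas are in the $A$-gadgets and gate gadgets), this lemma applies to all origins and replicas at once. We use two structural facts. First, in $\pi^*$ every origin sits together with all its replicas. Second, by \Cref{def:replica-family} the value of $10$ between an origin and its replicas exceeds the sum of all other positive values any of these agents assigns.

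Fix an origin $\ga$ with replica set $R$, and let $S=R\cup\{\ga\}$. Suppose some member of $S$ is not in $\pi(\ga)$. We look at the coalition of $S$ that is split. Consider any $\ga'\in S$ whose coalition in $\pi$ misses at least one member of $S$. In $\pi^*$, $\ga'$ has utility $10|R|$ plus some non-negative contribution from its other partners; non-negativity holds because $\pi^*$ is built so that no agent has a $-\infty$ partner, which we check directly from the definition of $\pi^*$. In $\pi$, $\ga'$ gets at most $10(|R|-1)$ from $S$, plus strictly less than $10$ from everything else, since the other positive values sum to below $10$. So $u_{\ga'}(\pi)<10|R|\le u_{\ga'}(\pi^*)$. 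Hence $\phi_{\ga'}(\pi^*,\pi)>0$, contradicting \Cref{lem_d1:gadget_agents_0_margin}.

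The key step is the bound on the utility from agents outside $S$. Replicas of $\ga$ copy $\ga$'s valuations toward outsiders, by item 2(a) of \Cref{def:replica-family}. So it is enough to check, gadget by gadget, that each origin's positive values to non-replicas sum to less than $10$. We also need that no other replica family contributes positive value. This holds by item 2(b): outsiders assign replicas of a different family either $0$ or the origin's negative value, and $A$- and $G$-agents assign $0$ to $X$-agents beyond $x_i$.

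Running through the valuations: the $A$-agents' positive values sum to $2$ ($x_i$ and $\ls_i$). The $G$-agents' positive values sum to at most $3$ for Copy/Not gadgets and at most $1+2+1+2+3=9$ for $\gf_i$ in an And-gadget. The $L$-agents' positive values sum to at most $2$. The $\wand_i,\zand_i$ agents' positive values sum to $2$. The main obstacle is $\gf_i$ of an And-gate, where the total is $9$, just under $10$. It is still strictly less than $10$, so the argument goes through.
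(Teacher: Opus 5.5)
Your proposal is correct and follows the paper's route: the paper's proof is the one-line observation that a replica separated from its origin strictly prefers $\pi^*$, contradicting \Cref{lem_d1:gadget_agents_0_margin}, and you make explicit why this holds (every family member gets at least $10|R|$ in $\pi^*$, while the mutual value $10$ dominates the at most $9$ units of other positive value any family member can collect). Your case list omits $\gt_i$ of an And-gadget, but its positive values sum to $4$, so the bound is unaffected.
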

\begin{proof}
Any replica agent that is not with its origin agent would prefer $\pi^*$, a contradiction to \Cref{lem_d1:gadget_agents_0_margin}.
\end{proof}

Using the knowledge we have gained about the structure of $\pi$, we will prove that, in fact, $\pi=\pi^*$, a contradiction.
\begin{proof}
In the partition $\pi$, we have that:
\begin{itemize}
    \item All $X$ agents are in $\MC$ (by \Cref{lem_d1:X_coalition}).
    \item For every input bit $\XX^*_i$, if $\XX^*_i=1$ then $\Ta_i\subseteq \MC$, and $\Fa_i$ and $\La_i$ are in the same coalition, and if $\XX^*_i=0$ then $\Fa_i\subseteq \MC$ and $\Ta_i$ and $\La_i$ are in the same coalition (by \Cref{lem_d1:all_assignment_valid,lem_d1:x'=x*,lem_d1:replicas_with_origin}).
    \item For every gate $\GG_i$, if $\XX^*(\GG_i)=1$ then $\Tg_i\subseteq \MC$ and $\Fg_i\cup\Lg_i\in\pi$, and if $\XX^*(\GG_i)=0$ then $\Fg_i\subseteq\MC$ and $\Tg_i\cup\Lg_i\in\pi$ (by \Cref{lem_d1:gates_valid,lem_d1:x'=x*,lem_d1:replicas_with_origin}).
    \item All $V$-agents are in $\MC$ (by \Cref{lem_d1:V_coalition}).
\end{itemize}
Furthermore, by \Cref{lem_d1:L_notin_MC,lem_d1:L_separated} we have that all coalitions containing $L$-agents are separated from one another and from $\MC$.
Thus, it remains to analyze the coalitions of the $W$- and $Z$-agents. 
Let $\GG_i=\GG_j\land \GG_k$ be an And-gate.
We make a case distinction based on the values $\GG_j$ and $\GG_k$ obtain by $\XX'=\XX^*$.

Case 1: Suppose $\XX^*(\GG_j)=\XX^*(\GG_k)=1$. Then $\XX^*(\GG_i)=1$. Hence, by \Cref{lem_d1:replicas_with_origin,lem_d1:gates_valid} we have $\Tg_i\subseteq \MC$ and $\Fg_i\subseteq\pi(\lg_i)$.
Since $\Tg_i\subseteq \MC$, by \Cref{lem_d1:gadget_agents_0_margin} we have that no agent from $\Wand_i\cup\Zand_i$ is in $\MC$, as otherwise they would obtain negative utility, a contradiction.
Thus, by \Cref{lem_d1:gadget_agents_0_margin,lem_d1:replicas_with_origin} we have that $\Wand_i$ are all in the same coalition, and $\Zand_i$ are all in the same coalition (to ensure that $\wand_i$ and $\zand_i$ do not prefer $\pi^*$).
In particular, $\lw_i\in\pi(\wand_i)$ and $\lz_i\in\pi(\zand_i)$.
Assume towards a contradiction some agent $\ga\notin\Wand_i$ is in $\pi(\wand_i)$.
Since $\Fg_i\subseteq\pi(\lg_i)$, by \Cref{lem_d1:L_separated} we have that $\ga\notin\Fg_i$. 
Thus, all agents in $\pi(\wand_i)$ assign at most $0$ to $\ga$, while $\ga$ assigns $-\infty$ to $\lw_i$.
Thus, it is a Pareto improvement to extract $\ga$ from this coalition to a singleton coalition.
It follows that $\Wand_i\in\pi$.
A similar argument shows that $\Zand_i\in\pi$.

Case 2: Suppose $\XX^*(\GG_j)=\XX^*(\GG_k)=0$. Then $\Fg_i\cup\Fg_j\cup\Fg_k\subseteq \MC$.
Since $\Fg_j\cup\Fg_k\subseteq \MC$, by \Cref{lem_d1:gadget_agents_0_margin} we have that no agent from $\Wand_i\cup\Zand_i$ is in $\MC$, as otherwise they would obtain negative utility, a contradiction.
Thus, a similar argument to the previous case concludes that $\{\Wand_i,\Zand_i\}\subseteq \pi$.

Case 3: Suppose $\XX^*(\GG_j)=1\land \XX^*(\GG_k)=0$. Then $\Fg_i\cup\Tg_j\cup\Fg_k\subseteq \MC$.
Since $\Tg_j\cup\Fg_k\subseteq \MC$, by \Cref{lem_d1:gadget_agents_0_margin} we have that no agent from $\Zand_i$ is in $\MC$, as otherwise they would obtain negative utility, a contradiction.
Thus, we must have that $\wand_i\in\MC$, as otherwise $\gf_i$ prefers $\pi^*$, a contradiction to \Cref{lem_d1:gadget_agents_0_margin}.
By \Cref{lem_d1:replicas_with_origin}, this implies that $\wwand_i\in\MC$, and $\lww_i\in\pi(\lw_i)$.
Moreover, $\Zand_i$ must all be in the same coalition, by \Cref{lem_d1:gadget_agents_0_margin,lem_d1:replicas_with_origin} (to ensure that $\zand_i$ does not prefer $\pi^*$).
The coalitions $\Zand_i$ and $\{\lw_i,\lww_i\}$ cannot contain any additional agents, since $\Fg_i\subseteq\MC$ and therefore any other agent $\ga$ obtains negative utility in those coalitions, while no agent in the coalitions assigns positive value to $\ga$, implying that it is a Pareto improvement to remove $\ga$.
We conclude that $\{\Zand_i,\{\lw_i,\lww_i\}\}\subseteq\pi$ and $\{\wand_i,\wwand_i\}\subseteq\MC$.

Case 4: Suppose $\XX^*(\GG_j)=0\land \XX^*(\GG_k)=1$. By similar reasoning to the previous cases, we have that $\{\Wand_i,\{\lz_i,\lzz_i\}\}\subseteq\pi$ and $\{\zand_i,\zzand_i\}\subseteq\MC$.

Indeed, this shows that $\pi=\pi^*$, a contradiction to the choice of $\pi$.
This concludes the proof that the existence of a Condorcet string entails the existence of a strongly popular partition.
\end{proof}

\section{Strongly Popular Partition implies Condorcet String}
\label{sec:popular_implies_condorcet}
Suppose $\pi^*$ is strongly popular. Then it is also Pareto-optimal.
We want to establish some properties of $\pi^*$ to help us derive a Condorcet string $\XX^*$.
Denote $\MC^*=\pi^*(x_1)$.

\begin{lemma}
\label{lem_d2:balanced_coalitions}
In each coalition $S\in\pi^*$, we have that $|\{a\in S\colon u_a(\pi^*)>0\}|>|\{a\in S\colon u_a(\pi^*)<0\}|$.
\end{lemma}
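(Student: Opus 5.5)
We argue by contradiction using strong popularity of $\pi^*$ against a simple alternative partition. Suppose some coalition $S\in\pi^*$ has $|\{a\in S\colon u_a(\pi^*)>0\}|\le|\{a\in S\colon u_a(\pi^*)<0\}|$. We construct $\pi'$ by dissolving $S$ into singletons and leaving every other coalition of $\pi^*$ unchanged. We then compare $\pi^*$ with $\pi'$ agent by agent.

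First, $\pi'\neq\pi^*$ provided $|S|\ge 2$. If $|S|=1$, the unique agent has utility $0$, so both counts are $0$ and the inequality fails. We must therefore handle singletons separately. The natural reading is that the lemma concerns non-singleton coalitions, or that singleton coalitions satisfy it vacuously in a modified sense. I would state explicitly that $S$ is assumed to have at least two agents; for singletons either the claim is understood as trivial, or the proof should restrict to $|S|\ge 2$. For $|S|\ge2$, agents outside $S$ keep the same coalition and hence the same utility, so they are indifferent. Each agent $a\in S$ has utility $0$ in $\pi'$. Hence $a$ prefers $\pi'$ exactly when $u_a(\pi^*)<0$, prefers $\pi^*$ exactly when $u_a(\pi^*)>0$, and is indifferent otherwise.

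Therefore $\phi(\pi^*,\pi')=\phi_S(\pi^*,\pi')=|\{a\in S\colon u_a(\pi^*)>0\}|-|\{a\in S\colon u_a(\pi^*)<0\}|\le 0$. This contradicts the strong popularity of $\pi^*$, which requires $\phi(\pi^*,\pi')>0$ for every $\pi'\ne\pi^*$.

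The only delicate point is the singleton case noted above, where $\pi'$ coincides with $\pi^*$. I expect the paper either applies the lemma only to coalitions of size at least two or accepts it with that implicit restriction. Otherwise the argument is a direct one-line margin computation using additivity of $\phi$ over the disjoint sets $S$ and $N\setminus S$.
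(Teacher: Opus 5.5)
Your proof is correct and follows the paper's argument: dissolving $S$ into singletons gives $\phi(\pi^*,\pi')=|\{a\in S\colon u_a(\pi^*)>0\}|-|\{a\in S\colon u_a(\pi^*)<0\}|\le 0$, which contradicts strong popularity. Your singleton caveat is also valid, since for $|S|=1$ the claimed strict inequality reads $0>0$. The lemma should therefore be read for $|S|\ge 2$, and that is the only case the paper uses: coalitions containing $\ell^g_i$ together with another agent, or $\ell^w_i$ together with $g^0_i$.
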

\begin{proof}
If not, then it is at least as popular to dissolve $S$ into singletons, a contradiction to the strong popularity of $\pi^*$.
\end{proof}

\begin{lemma}
\label{lem_d2:X_coalition}
It holds that $X\subseteq \MC^*$.
\end{lemma}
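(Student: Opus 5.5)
We argue by contradiction: suppose some $X$-agent lies outside $\MC^*=\pi^*(x_1)$. We then exhibit a partition $\pi\neq\pi^*$ with $\phi(\pi,\pi^*)\ge 0$, contradicting strong popularity. The natural choice of $\pi$ is a \emph{merge-and-clean} operation. Take the set $Y$ of all coalitions of $\pi^*$ that contain at least one $X$-agent. Form $\pi$ by putting all $X$-agents together into a single new coalition consisting of $X$ alone. Every other agent that was in a coalition of $Y$ keeps its former coalition minus the $X$-agents (or becomes a singleton if that is simpler to analyze). All coalitions of $\pi^*$ that contain no $X$-agent stay untouched. Since $Y$ has at least two coalitions by assumption, $\pi\neq\pi^*$.

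\textbf{Step 1: the $X$-agents.} Every $X$-agent assigns $1$ to every other $X$-agent, $-\infty$ to $L$-agents, and $0$ to everyone else. So in $\pi$ each $x\in X$ obtains utility $|X|-1$, which is the maximum possible. In $\pi^*$, each $x$ shares a coalition with strictly fewer than $|X|-1$ other $X$-agents, because the $X$-agents are split across at least two coalitions. Hence every $X$-agent strictly prefers $\pi$, and $\phi_X(\pi,\pi^*)=|X|>|N|/2$, using the design bound on $|X|$.

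\textbf{Step 2: everyone else.} The non-$X$-agents number fewer than $|N|/2$. Even if all of them strictly prefer $\pi^*$, we get
\[
\phi(\pi,\pi^*)\ge |X|-(|N|-|X|)>0 .
\]
So $\pi$ is more popular than $\pi^*$, contradicting strong popularity (in fact we only need $\phi(\pi,\pi^*)\ge 0$). No fine analysis of the other agents is required; the only thing to check is that the construction of $\pi$ is a genuine partition.

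\textbf{Main obstacle.} The only delicate point is Step 1. We must make sure each $X$-agent is strictly better off, including $X$-agents that in $\pi^*$ sat with $L$-agents (those only improve) and $x_1$ itself. This reduces to the simple count: in $\pi^*$ each $x$ has at most $|X|-2$ fellow $X$-agents and no positive value for anyone else, while in $\pi$ it has exactly $|X|-1$ and no negatives. This mirrors the argument of \Cref{lem_d1:X_coalition}.
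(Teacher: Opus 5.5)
Your proof is correct and takes the same approach as the paper: extract all $X$-agents into the single coalition $X$, note that each of them strictly improves (from at most $|X|-2$ to $|X|-1$), and use $|X|>\frac{|N|}{2}$ to conclude $\phi(\pi,\pi^*)>0$. Your version additionally spells out why the improvement is strict and why $\pi\neq\pi^*$, both of which the paper leaves implicit.
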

\begin{proof}
Assume otherwise, then it is more popular to extract all $X$-agents from their coalitions and form $X$ as a coalition (because all $X$-agents prefer this, and $|X|>\frac{|N|}{2}$), a contradiction to the strong popularity of $\pi^*$.
\end{proof}

\begin{lemma}
\label{lem_d2:L_notin_MC}
It holds that $L\cap \MC^*=\emptyset$.
\end{lemma}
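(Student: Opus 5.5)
The plan is to mirror the first-direction argument (\Cref{lem_d1:L_notin_MC}), now using strong popularity of $\pi^*$ in place of the comparison against a fixed partition. Assume for contradiction that some $L$-agent $\ell$ lies in $\MC^*$. By \Cref{lem_d2:X_coalition} we have $X\subseteq\MC^*$. Every $X$-agent assigns $-\infty$ to every $L$-agent, so each $X$-agent has negative utility in $\pi^*$.

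I will then exhibit a competing partition $\pi$. Extract all $X$-agents from $\MC^*$ and place them together in a coalition consisting exactly of $X$, leaving every other coalition of $\pi^*$ as it is. In particular, $\MC^*\setminus X$ stays together as a single coalition.

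In $\pi$, each $X$-agent receives utility $|X|-1>0$, which beats its negative utility in $\pi^*$. Hence $\phi_X(\pi,\pi^*)=|X|$. Every other agent can prefer $\pi^*$ at most, so
\[
\phi(\pi,\pi^*)\ge |X|-(|N|-|X|)=2|X|-|N|>0,
\]
using $|X|>\frac{|N|}{2}$. Since $\pi\neq\pi^*$ (the $X$-agents are now separated from $\ell$), this contradicts the strong popularity of $\pi^*$.

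The only point needing care is that $\pi$ is a genuine partition distinct from $\pi^*$. This holds because $\MC^*\setminus X$ is nonempty (it contains $\ell$) and forms its own coalition in $\pi$. The counting bound holds regardless of how the non-$X$ agents' utilities change, so no finer case analysis is needed, and I expect no real obstacle.
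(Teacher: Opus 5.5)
Your proof is correct and follows the paper's argument: the paper also splits $\MC^*$ into the coalition $X$ and the remaining agents, observes that every $X$-agent strictly improves from a negative utility, and concludes via $|X|>\frac{|N|}{2}$ that this contradicts strong popularity. The extra checks you make (that $\pi\neq\pi^*$ and the explicit bound $\phi(\pi,\pi^*)\ge 2|X|-|N|>0$) are correct and make explicit what the paper leaves implicit.
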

\begin{proof}
Assume otherwise, then by \Cref{lem_d2:X_coalition} all $X$-agents obtain negative utility. Hence, it is more popular to remove all agent not in $X$ from $\MC^*$ (because all $X$-agents prefer this, and $|X|>\frac{|N|}{2}$), a contradiction.
\end{proof}

\begin{lemma}
\label{lem_d2:V_coalition}
It holds that $V\subseteq \MC^*$.
\end{lemma}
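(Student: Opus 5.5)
We mirror the argument of \Cref{lem_d1:V_coalition}, but now use strong popularity of $\pi^*$ (which in particular gives Pareto-optimality) instead of the Pareto-optimality assumption on $\pi$. Suppose towards contradiction that some voter agent $v\in V$ satisfies $v\notin\MC^*$. We will exhibit a partition $\pi\neq\pi^*$ that is a Pareto improvement over $\pi^*$; such a $\pi$ is more popular than $\pi^*$, contradicting strong popularity.

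The construction of $\pi$ is as follows. Remove every $V$-agent from its coalition in $\pi^*$ and add all of $V$ to $\MC^*$, leaving every other agent in place. Since $v\notin\MC^*$ by assumption, $\pi\neq\pi^*$.

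We first bound the voters' utilities in $\pi^*$. Every voter $v'$ assigns $2^{n+1}$ to the other $m-1$ voters and to $x_1$, assigns $2^1,\dots,2^n$ to its output agents $\ot_1,\dots,\ot_n$, and assigns non-positive values to everything else. In $\pi^*$, either $v'=v\notin\pi^*(x_1)$, or $v'\in\MC^*$ and then $v\notin\pi^*(v')$. In both cases $v'$ misses at least one of its $m$ values of $2^{n+1}$, so
\[u_{v'}(\pi^*)\le (m-1)2^{n+1}+\sum_{j=1}^n 2^j<m\cdot 2^{n+1}.\]

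Next we bound the voters' utilities in $\pi$. By \Cref{lem_d2:L_notin_MC}, the coalition $\MC^*$ contains no $L$-agents. A voter's negative valuations are directed only at $L$-agents (the $-\infty$ rule), so in $\pi$ every voter is in a coalition containing $x_1$, all other voters, and no agent it values negatively. Hence each voter obtains at least $m\cdot 2^{n+1}$ in $\pi$ and strictly prefers $\pi$.

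It remains to check that no non-voter agent is harmed. Agents outside $V$ assign $0$ to voters unless they are $L$-agents, which assign $-\infty$ to voters. Moreover, no agent outside $V\cup L$ assigns a positive value to a voter. Removing voters from other coalitions therefore costs nobody anything. Adding voters to $\MC^*$ hurts no one in $\MC^*$, because $\MC^*$ contains no $L$-agents.

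The only step needing care is confirming from the valuation list that voters receive no positive value from any agent and negative values only from $L$-agents; the list confirms both. With that, $\pi$ is a Pareto improvement over $\pi^*$, which contradicts the strong popularity of $\pi^*$ (or equivalently its Pareto-optimality). Therefore $V\subseteq\MC^*$.
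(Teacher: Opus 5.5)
Your proof is correct and follows essentially the same route as the paper: you move the voters outside $\MC^*$ into $\MC^*$ and show this is a Pareto improvement. This works because every voter misses at least one $2^{n+1}$ term in $\pi^*$, only $L$-agents value voters negatively (and $L\cap\MC^*=\emptyset$ by \Cref{lem_d2:L_notin_MC}), and no non-voter values voters positively. One small slip: your case split should read ``either $v'\notin\MC^*$ (so $x_1\notin\pi^*(v')$) or $v'\in\MC^*$ (so $v\notin\pi^*(v')$)'', since voters other than $v$ may also lie outside $\MC^*$; the conclusion is unaffected.
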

\begin{proof}
Let $V':=V\bs \MC^*$, and assume $V'\ne \emptyset$.
Since $X\subseteq \MC^*$ by \Cref{lem_d2:X_coalition}, and since are $m$ voter agents, for all $v'\in V'$ we have $u_{v'}(\pi^*)< m\cdot 2^{n+1}$ ($2^{n+1}$ for each of the $m-1$ other $V$-agents, and strictly less than $2^{n+1}$ from all of its preferred output gates together). 
Consider the partition $\pi$ obtained by extracting all $V$-agents not in $\MC^*$ from their coalitions, and adding them to $\MC^*$.
Since $|V|=m$ and $x_1\in\MC^*$, for all $v'\in V'$ we have $u_{v'}(\pi^*)\ge m\cdot 2^{n+1}$, which is an improvement for them.
Moreover, no one objects that $V$-agents leave their coalitions because no one in $N\bs V$ assigns a positive value to them, and no one objects that they join $\MC^*$ because only $L$-agents assign negative value to them, but $L\cap\MC^*=\emptyset$ by \cref{lem_d2:L_notin_MC}. Hence, it is a Pareto improvement, a contradiction.
\end{proof}

\begin{lemma}
\label{lem_d2:L_clean}
The following statements hold.
\begin{itemize}
    \item Let $\GG_i$ be a gate. We have that $\pi^*(\lg_i)\subseteq\Lg_i\cup\Tg_i$ or $\pi^*(\lg_i)\subseteq\Lg_i\cup\Fg_i$.
    \item Let $i\in[n]$. We have that $\pi^*(\ls_i)\subseteq\La_i\cup\Ta_i$ or $\pi^*(\lg_i)\subseteq\La_i\cup\Fa_i$.
\end{itemize}
\end{lemma}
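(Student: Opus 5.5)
The plan is to argue by contradiction against the strong popularity of $\pi^*$, in two steps. First I show that $S:=\pi^*(\lg_i)$ contains no agent outside $\Lg_i\cup\Tg_i\cup\Fg_i$. Then I show that $S$ cannot meet both $\Tg_i$ and $\Fg_i$. The assignment case, with $\ls_i,\La_i,\Ta_i,\Fa_i$ in the roles of $\lg_i,\Lg_i,\Tg_i,\Fg_i$, is identical: $\asst_i$ and $\assf_i$ also assign each other $-\infty$. (I read the second bullet as concerning $\pi^*(\ls_i)$.) The key facts from the construction are these:
\begin{itemize}
\item $\lg_i$ and its replica $\lgg_i$ assign $-\infty$ to every agent except $\gt_i,\gf_i$ and each other.
\item Every agent other than $\gt_i,\gf_i$ and their replicas assigns $-\infty$ to $\lg_i$. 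Replicas inherit $-\infty$ valuations by Definition~\ref{def:replica-family}.
\item Every agent of $\Tg_i$ assigns $-\infty$ to every agent of $\Fg_i$ and vice versa.
\end{itemize}

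\emph{Step 1.} Let $S_1:=S\cap(\Lg_i\cup\Tg_i\cup\Fg_i)$ and $S_2:=S\setminus S_1$, and suppose $S_2\neq\emptyset$. I compare $\pi^*$ with the partition $\pi$ obtained by splitting $S$ into $S_1$ and $S_2$.
\begin{itemize}
\item Every agent of $S_2$ assigns $-\infty$ to $\lg_i\in S_1$, and otherwise loses only finitely many positive values. So every agent of $S_2$ strictly prefers $\pi$.
\item $\lg_i$ loses a $-\infty$ term and strictly prefers $\pi$. The same holds for $\lgg_i$ if it lies in $S$.
\item The only agents that can prefer $\pi^*$ are agents of $S_1\cap(\Tg_i\cup\Fg_i)$, which may lose positive values such as those toward $\asst_i$ or input gate agents.
\end{itemize}
If $S_1$ meets both $\Tg_i$ and $\Fg_i$, those agents have utility $-\infty$ in $S$. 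Their utility in $S_1$ is still negative, but it is no worse than in $S$, so they do not prefer $\pi^*$. Otherwise at most two agents (one of $\Tg_i$ or $\Fg_i$ together with its replica) can prefer $\pi^*$. Meanwhile at least $1+|S_2|\ge 2$ agents prefer $\pi$. Hence $\phi(\pi^*,\pi)\le 0$ with $\pi\neq\pi^*$, which contradicts strong popularity. A tie is already enough here, which matters because the count can be exactly $2$ versus $2$.

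\emph{Step 2.} Now $S\subseteq\Lg_i\cup\Tg_i\cup\Fg_i$. If $S$ meets both $\Tg_i$ and $\Fg_i$, then every agent of $S\cap(\Tg_i\cup\Fg_i)$, at least two agents, has negative utility. At most $\lg_i$ and $\lgg_i$ have positive utility. This contradicts Lemma~\ref{lem_d2:balanced_coalitions}. Hence $S\subseteq\Lg_i\cup\Tg_i$ or $S\subseteq\Lg_i\cup\Fg_i$.

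The main obstacle is the counting in Step 1. The agents of $\Tg_i\cup\Fg_i$ in $S$ can genuinely lose positive value when $S_2$ is split off, so the split is not a Pareto improvement and popularity margins must be compared directly. The delicate point is the bound of at most two losers, which needs the case split on whether both $\Tg_i$ and $\Fg_i$ are present. In the assignment gadget there are $m+1$ agents in $\Ta_i$, so one $\Ta_i$-type set could contain more than two losers. There I would instead use that all of $\La_i\cap S$ gains, together with all of $S_2$. If that is still insufficient, I would first show via Lemma~\ref{lem_d2:balanced_coalitions} that $S_2$ agents outnumber nothing positive, and fall back on splitting off only the part of $S_2$ that has no positive ties to $S_1$.
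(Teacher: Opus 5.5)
\textbf{Gate gadget.} Your argument here is the paper's, run in the opposite order. The paper first separates $\Tg_i$ from $\Fg_i$ via Lemma~\ref{lem_d2:balanced_coalitions} for an arbitrary $S=\pi^*(\lg_i)$, and only then excludes outsiders, using that at most two agents are positive. Your Step~1 sub-case where $S_1$ meets both $\Tg_i$ and $\Fg_i$ is argued incorrectly. Valuations are real numbers and $-\infty$ is only a large finite negative number, so an agent with a $-\infty$ partner in both coalitions still compares them by the remaining finite terms. For example, take an And-gate $\GG_i=\GG_j\land\GG_k$ with $S_2=\{\gt_j\}$: every agent of $S\cap\Tg_i$ loses the value $1$ and strictly prefers $\pi^*$. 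So ``no worse than in $S$'' is false. The fix is to run your Step~2 first, without assuming $S_2=\emptyset$. If $S$ meets both $\Tg_i$ and $\Fg_i$, then all of $S\cap(\Tg_i\cup\Fg_i)$ and all of $S_2$ have negative utility, while only $\lg_i,\lgg_i$ can be positive. This contradicts Lemma~\ref{lem_d2:balanced_coalitions}. Step~1 then needs only your ``otherwise'' count, which is correct.

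\textbf{Assignment gadget.} This is the genuine gap: the case is not ``identical'', because $|\La_i|=|\Ta_i|=|\Fa_i|=m+1$. Your Step~2 count fails. The coalition $S=\La_i\cup\{\asst_i,\assf_i\}$ has $m+1$ agents with positive utility and only two with negative utility, so it is consistent with Lemma~\ref{lem_d2:balanced_coalitions} for $m\ge 2$. The paper's one-line ``analogous'' glosses over the same point. You need deviations that look beyond $S$:
\begin{itemize}
\item If $\asst_i,\assf_i\in S$, use the deviation of Lemma~\ref{lem_d2:A_complementary_separate}, whose proof does not rely on the present lemma. Re-forming $\Ta_i$ and $\Fa_i$ makes all $2m+2$ of their agents strictly better off, while at most the $m+1$ agents of $\La_i$ lose.
\item If, say, $\assf_i\notin S$, splitting $S\cap\Fa_i$ off is a Pareto improvement, since agents of $\La_i$ value replicas of $\assf_i$ at $0$.
\end{itemize}
For Step~1 you rightly flag that $\Ta_i$ could supply up to $m+1$ losers, but your fallback is not an argument. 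The missing observation is that $x_i$ is the only agent outside $\AGD_i$ to which agents of $\Ta_i\cup\Fa_i$ assign positive value. Moreover $x_i\notin\pi^*(\ls_i)$ by Lemmas~\ref{lem_d2:X_coalition} and~\ref{lem_d2:L_notin_MC}. Hence splitting $S_2$ off hurts nobody, strictly helps $\ls_i$, and is a Pareto improvement.
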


\begin{proof}
We only prove the first statement, as the second proof is analogous.
First, observe that $\pi(\lg_i)$ cannot contain an agent from $\Tg_i$ and an agent from $\Fg_i$ together, by \Cref{lem_d2:balanced_coalitions}; indeed, agents outside the gadget $\GGD_i$ have value $-\infty$ for $\lg_i$, and within the gadget only $\lg_i$ and $\lgg_i$ may gain positive utility in such a coalition and at least two must gain negative utility.
Second, assume some agent $\ga\notin\Lg_i\cup\Tg_i\cup\Fg_i$ is in $\pi^*(\lg_i)$. Then $\ga$ and $\lg_i$ obtain negative utility. Notice that there may be at most two agents who obtain positive utility in $\pi^*(\lg_i)$, namely, either $\gt_i$ and $\ggt_i$ or $\gf_i$ and $\ggf_i$. Hence, we have a contradiction to \Cref{lem_d2:balanced_coalitions}.

The first and second part of the proof combined imply that the statement is true.
\end{proof}

\begin{lemma}
\label{lem_d2:W_Z_clean}
Let $\GG_i=\GG_j\land\GG_k$ be an And-gate. Then:
\begin{itemize}
    \item It holds that $\pi^*(\wand_i)\cap\{\zand_i,\gf_j\}=\emptyset$.
    \item It holds that $\pi^*(\zand_i)\cap\{\wand_i,\gf_k\}=\emptyset$.
\end{itemize}
\end{lemma}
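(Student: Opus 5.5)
Both bullets are proved the same way; I argue the first and obtain the second by exchanging the roles of $\wand_i,\gf_j,\Wand_i$ with $\zand_i,\gf_k,\Zand_i$. The argument is by contradiction. I show that if $\pi^*(\wand_i)$ contained $\zand_i$ or $\gf_j$, the agents of $\Wand_i$ could be split off so that those who gain strictly outnumber those who lose. That would give a partition $\pi\neq\pi^*$ with $\phi(\pi^*,\pi)\le 0$, contradicting strong popularity. The key fact is that $\wand_i$ assigns $-\infty$ to both $\zand_i$ and $\gf_j$. So if either lies in $S:=\pi^*(\wand_i)$, then $u_{\wand_i}(\pi^*)<0$.

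First I pin down where the replicas of $\wand_i$ sit. Any replica $\wwand_i$ in $S$ values the members of $S$ exactly as $\wand_i$ does, apart from the mutual value $10$, which is not large enough to offset $-\infty$. Hence $\wwand_i$ also has negative utility there. If instead $\wwand_i\notin S$, it can only benefit from joining $\wand_i$ in a new coalition.

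The deviation I take is: remove $\wand_i,\wwand_i,\lw_i,\lww_i$ from their current coalitions and place them together in the coalition $\Wand_i$. In this coalition $\wand_i$ and $\wwand_i$ receive $10+1+0>0$, since each values $\lw_i$ at $1$, the replica of $\lw_i$ at $0$, and the other $W$-agent at $10$. By \Cref{lem_d2:L_clean}-style reasoning, each of $\lw_i,\lww_i$ receives at least $10$, because it values its partner at $10$ and values the $W$-agents at $0$. Both $W$-agents move from negative to positive utility, so they strictly gain. For the two $L$-agents I need to check that neither loses. By \Cref{lem_d2:balanced_coalitions} and the $-\infty$ valuations $L$-agents hold towards almost everyone, an $L$-agent in $\pi^*$ can have positive utility essentially only when it sits with its own replica. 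So each of $\lw_i,\lww_i$ either gains or is at worst indifferent.

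The main obstacle is showing that the remaining agents lose less than these agents gain. Consider the agents left behind in $S$ and in the coalitions of $\lw_i,\lww_i$. The only agent outside $\Wand_i$ assigning positive value to $\wand_i$ is $\gf_i$, which gives value $1$ to $\wand_i$. Agents giving value $0$ to $\wand_i$ or its replica are unaffected. Agents giving negative value can only gain from the removal. So at most $\gf_i$ and $\ggf_i$ can become strictly worse off, which is at most two losers. On the other side, $\wand_i$ and $\wwand_i$ strictly gain.

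To break the tie, I would add a third strict gainer. The candidate is $\zand_i$ when it lies in $S$, since $\zand_i$ assigns $-\infty$ to $\wand_i$. In the $\gf_j$ case the candidate is $\gf_j$ itself, since $\gf_j$ assigns $-\infty$ to $\wand_i$. Once $\wand_i$ and $\wwand_i$ leave, these agents lose that $-\infty$ term and strictly improve. Alternatively, if no third gainer is available, I would refine the deviation so that $\gf_i,\ggf_i$ are not harmed, for instance by letting $\lw_i$'s coalition be empty of other agents. In either case the margin satisfies $\phi(\pi,\pi^*)\ge 1$, which contradicts strong popularity.
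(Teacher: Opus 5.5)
Your proof is correct and is essentially the paper's argument: split $\wand_i$ (together with $\wwand_i$) off, note that $\wand_i$ and whichever of $\zand_i,\gf_j$ lies in $\pi^*(\wand_i)$ strictly gain while only $\gf_i,\ggf_i$ can lose, and conclude a contradiction to strong popularity. Some of your additions are unnecessary. The paper stops at the tie $\phi(\pi^*,\pi)\le 0$, which already suffices, as you yourself say at the outset, so the tie-breaking and the vacuous ``alternatively'' clause can go; so can dragging $\lw_i,\lww_i$ along, although if you keep that move you should also state that no agent outside $\Wand_i$ values $\lw_i$ or $\lww_i$ positively, so their departure harms no one.
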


\begin{proof}
We prove the first statement, as the second proof is analogous.
If $\zand_i\in\pi^*(\wand_i)$, then it would be at least as popular to remove $\wand_i$ and, if $\wwand_i\in\pi^*(\wand_i)$, remove $\wwand_i$ as well, from this coalition (since at least two agents, namely $\wand_i$ and $\zand_i$, prefer this deviation, while at most two agents, namely $\gf_i$ and $\ggf_i$, may prefer $\pi^*$), a contradiction to the strong popularity of $\pi^*$.

Similarly, if $\gf_j\in\pi^*(\wand_i)$, then it would be at least as popular to remove $\wand_i$ and, if $\wwand_i\in\pi^*(\wand_i)$, remove $\wwand_i$ as well (since at least two agents, namely $\wand_i$ and $\gf_j$, prefer this deviation, while at most two agents, namely $\gf_i$ and $\ggf_i$, prefer $\pi^*$), a contradiction to the strong popularity of $\pi^*$.
\end{proof}

\begin{lemma}
\label{lem_d2:A_complementary_separate}
Let $i\in[n]$. Then $\assf_i\notin\pi^*(\asst_i)$.
\end{lemma}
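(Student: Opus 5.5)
We argue by contradiction: suppose $\assf_i\in\pi^*(\asst_i)$ and let $S:=\pi^*(\asst_i)$. Since $\asst_i$ and $\assf_i$ assign $-\infty$ to each other, both obtain negative utility in $S$. By the replica definition, each replica in $\Ta_i$ also values $\assf_i$ at $-\infty$, and each replica in $\Fa_i$ values $\asst_i$ at $-\infty$. Hence every agent of $(\Ta_i\cup\Fa_i)\cap S$ obtains negative utility. The natural deviation is to split $S$: let $\pi$ be obtained from $\pi^*$ by moving $\Fa_i\cap S$ out of $S$ into a new coalition of their own. We then compare $\pi$ and $\pi^*$ agent by agent, aiming to show $\phi(\pi,\pi^*)\ge 0$. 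This contradicts strong popularity, since $\pi\neq\pi^*$.

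\textbf{Gainers.} All agents of $(\Ta_i\cup\Fa_i)\cap S$ strictly gain: the $-\infty$ terms vanish. The $F$-side movers keep each other, and the $T$-side agents lose only agents they valued at $-\infty$. This gives at least the two gainers $\asst_i$ and $\assf_i$, plus any replicas present.

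\textbf{Losers.} A loser in $S\setminus\Fa_i$ must assign positive value to some agent in $\Fa_i$. Only the following agents value members of $\Fa_i$ positively:
\begin{itemize}
\item replicas in $\Fa_i$ (these moved as well);
\item $\ls_i$;
\item $\gf_i$ of the Copy-gate of $\XX_i$ in each circuit, together with its replica $\ggf_i$.
\end{itemize}
Since replicas receive value $0$ from outsiders, these agents value only $\assf_i$ itself positively.

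We cut down the losers using the earlier lemmas.
\begin{itemize}
\item \Cref{lem_d2:L_clean} gives $\pi^*(\ls_i)\subseteq\La_i\cup\Ta_i$ or $\pi^*(\ls_i)\subseteq\La_i\cup\Fa_i$. Hence $\ls_i\notin S$, because $S$ meets both $\Ta_i$ and $\Fa_i$.
\item Each $\gf_i$ (and $\ggf_i$) assigns $-\infty$ to $\asst_i$, so it already has negative utility in $S$.
\end{itemize}
Each such agent's utility rises by only $1$ (or $10$ for a replica) when $\assf_i$ leaves, and it loses $1$ when $\assf_i$ is removed. Its utility therefore stays hugely negative and changes by a bounded amount, so the $-\infty$ convention decides its direction.

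\textbf{Main obstacle.} Copy-gadget $F$-agents in $S$ from up to $m$ circuits may strictly lose, giving up to $2m$ losers. The $m$ replicas in $\Fa_i$ exist to absorb exactly this: gainers number at least $|\Fa_i\cap S|+|\Ta_i\cap S|$. If the replicas are not all in $S$, I would use a different deviation. The cleanest choice is to move $\asst_i$ (together with $\Ta_i\cap S$) out of $S$ instead. The only agents that value $\asst_i$ positively are $\ls_i$ (not in $S$), the Copy-gate $\gt$-agents (which value $\assf_i$ at $-\infty$, so they too gain if they leave with it), and $\Ta_i$ replicas. So we move $\asst_i$ together with all $\Ta_i$ and Copy-$\Tg$ agents in $S$ into a fresh coalition. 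In this coalition $\asst_i$ and its replicas gain, as do the $\gt$ agents, which escape $\assf_i$; $\assf_i$ also gains, having lost $-\infty$.

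Remaining agents of $S$ lose only if they positively value a moved agent. Gate agents of other gates that value a $\gt$ of a Copy-gate positively are eliminated similarly, by enlarging the moved set to everyone with $-\infty$ towards $\assf_i$ or $\Fa_i$. I expect the careful bookkeeping here to be the delicate step: showing that the moved set can be chosen so that every mover gains and no stayer strictly loses. At worst the stayers tie, and a tie still suffices to contradict strong popularity.
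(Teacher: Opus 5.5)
\textbf{There is a genuine gap: a deviation that only rearranges agents already in $S$ cannot work in general.}

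\emph{First, a missing loser.} Your list of agents that value $\assf_i$ (or $\asst_i$) positively omits the $m$ replicas of $\ls_i$. By the replica definition they copy the valuations of $\ls_i$, so each assigns $1$ to both $\asst_i$ and $\assf_i$. Moreover, \Cref{lem_d2:L_clean} constrains only $\pi^*(\ls_i)$, not the coalitions of these replicas.

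Take $S=\{\asst_i,\assf_i\}\cup(\La_i\setminus\{\ls_i\})$, with all $A$-replicas outside $S$. Nothing established so far excludes this: for $m\ge 3$, $S$ has $m$ agents with positive utility and only $2$ with negative utility, so \Cref{lem_d2:balanced_coalitions} holds. This puts you in your second case. Moving $\asst_i$ out of $S$ (or $\assf_i$, as in your first deviation) gives exactly two gainers, $\asst_i$ and $\assf_i$, against $m$ strict losers. So no contradiction follows.

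\emph{Second, the second deviation is unfinished.} Moving the Copy-gate agent $\gt_i$ out of $S$ hurts agents that value it positively without hating $\assf_i$. Examples are $\gf_{i'}$ for a Not-gate $\GG_{i'}=\neg\GG_i$, $\gt_{i'}$ for an And-gate with input $\GG_i$, or a voter if $\GG_i$ is an output gate. Your proposed enlargement (everyone with $-\infty$ towards $\assf_i$ or $\Fa_i$) does not capture them.

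\emph{Third, a sign error.} You say the $-\infty$ convention decides the direction for the $\gf$-agents left in $S$. It does not: their utility changes by exactly $-1$, so they strictly lose.

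\textbf{The missing idea, which is what the paper does,} is to pull \emph{all} $2m+2$ agents of $\Ta_i\cup\Fa_i$ out of their coalitions, including replicas lying outside $S$. Then form the two coalitions $\Ta_i$ and $\Fa_i$, in which every agent gets $10m$. The count then works as follows.
\begin{enumerate}
\item Every one of the $2m+2$ agents strictly gains. Those in $S$ escape a $-\infty$ term. Every other member of $\Ta_i\cup\Fa_i$ is a replica separated from its origin, so its utility is at most $10(m-1)+2<10m$.
\item No Copy-gate agent can lose, because $\asst_i$ and $\assf_i$ leave together. Any such agent that was with one of them was also with the other, which it values at $-\infty$.
\item No outsider values a replica positively.
\item Every agent outside $\Ta_i\cup\Fa_i\cup\La_i$ and the Copy-gadgets values $\asst_i$ and $\assf_i$ at $0$ or negatively.
\end{enumerate}
Hence only the $m+1$ agents of $\La_i$ can lose, and the deviation wins by at least $m+1$. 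The replicas are there to outnumber $\La_i$. The Copy-gate losers you tried to absorb with them are an artifact of splitting $S$.
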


\begin{proof}
Assume $\assf_i\in\pi^*(\asst_i)$. Notice that any agent, apart from $\ls_i$ and $\lss_i$, who assigns a positive value to one of $\{\asst_i,\assf_i\}$ assigns a value of $-\infty$ to the other one. 
Furthermore, $\asst_i$ and $\assf_i$ assign a value of $-\infty$ to each other.
Moreover, replica agents are only assigned a positive value by their origin agent.
Now, consider the partition $\pi$ obtained from $\pi^*$ by extracting all agents of $\Ta_i$ and $\Fa_i$ from their coalitions, and 
and forming the coalitions $\Ta_i$ and $\Fa_i$.
Clearly, all agents in $\Ta_i\cup\Fa_i$ prefer $\pi$ (the replica agents have a utility of $10m$ in $\pi$, and strictly less than that in $\pi^*$). 
By the same reasoning as above, the only agents that may prefer $\pi^*$ over $\pi$ are $\lg_i$ and $\lgg_i$. 
Hence, we have $\phi(\pi^*,\pi)<0$, contradicting the strong popularity of $\pi^*$.
\end{proof}

\begin{lemma}
\label{lem_d2:lw_lz_gf}
Let $\GG_i=\GG_j\land\GG_k$ be an And-gate. Then $\gf_i\notin\pi^*(\lw_i)$ and $\gf_i\notin\pi^*(\lz_i)$.
\end{lemma}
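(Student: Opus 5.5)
We argue by contradiction, in the same style as \Cref{lem_d2:L_clean,lem_d2:W_Z_clean}: suppose $\gf_i\in\pi^*(\lw_i)$; the case $\gf_i\in\pi^*(\lz_i)$ is symmetric, with $\zand_i$, $\Zand_i$ in place of $\wand_i$, $\Wand_i$. We then exhibit a partition that is at least as popular as $\pi^*$, which contradicts strong popularity.

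The first step records the valuations involved. By construction $\lw_i$ assigns $-\infty$ to $\gf_i$; the only exceptions to the rule ``$L$-agents assign $-\infty$ to everyone'' are those listed explicitly, and $\lw_i$ has only $\mathbf{v}_{\lw_i}(\wand_i)=0$ and the value $10$ to its replica $\lww_i$. Conversely, $\gf_i$ assigns $-\infty$ to $\lw_i$, since none of its listed positive values involves $\lw_i$. Hence both $\gf_i$ and $\lw_i$ have negative utility in $\pi^*$. The same holds for $\ggf_i$ if present, and for $\lww_i$ if present, since $\lww_i$ inherits $-\infty$ towards $\gf_i$ from $\lw_i$.

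The second step is the deviation. Let $\pi$ be obtained from $\pi^*$ by extracting $\Fg_i\cap\pi^*(\lw_i)$ from the coalition and forming it as a separate coalition. Each extracted agent has non-negative utility in $\pi$ (at least $0$, or $10$ if both $\gf_i$ and $\ggf_i$ are extracted), so it strictly prefers $\pi$. The agents left behind are affected only through their values for $\gf_i$ and $\ggf_i$. By \Cref{lem_d2:L_separated}, which should follow just as \Cref{lem_d1:L_separated} does via \Cref{lem_d2:balanced_coalitions}, the coalition contains no other $L$-agent outside $\{\lw_i,\lww_i\}$. So the only agents who could lose are those assigning positive value to $\gf_i$: namely $\lg_i$, $\lgg_i$, and $\zand_i$, $\zzand_i$ (which assign $1$ to $\gf_i$ by the intended reading of the construction).

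The third step is the counting, and it is the main obstacle. By \Cref{lem_d2:W_Z_clean}, $\zand_i\notin\pi^*(\wand_i)$. Also $\lg_i$ cannot be present, since it would be a second non-replica $L$-agent. What remains to rule out is that $\zand_i$ together with $\zzand_i$ lose, while the gainers are only $\gf_i$ and possibly $\ggf_i$. To handle this, I would observe that $\zand_i$ assigns $-\infty$ to $\lw_i$, directly or via the default for $L$-agents, so any $Z$-agent in this coalition already has negative utility in $\pi^*$. I would therefore also extract those $Z$-agents into their own coalition, where they also gain. After this, every changing agent prefers $\pi$ and nobody else is worse off, so $\phi(\pi^*,\pi)<0$, a contradiction. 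If the counting is tight in some subcase, I would instead apply \Cref{lem_d2:balanced_coalitions} directly: with $\gf_i$, $\lw_i$ and their replicas negative, at most $\wand_i$, $\wwand_i$ and a few others can be positive, and checking their positive-value sources yields the violation.
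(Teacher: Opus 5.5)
Your main argument fails at the claim that, after extracting $\Fg_i\cap\pi^*(\lw_i)$ (and any $Z$-agents), ``nobody else is worse off.''

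Your list of agents valuing $\gf_i$ positively is incomplete. It omits $\wand_i$ and $\wwand_i$: the construction has $\gf_i$ and $\wand_i$ valuing each other at $1$, and the ``$\gf_0$'' in $\wand_i$'s list is the same typo you read as $\gf_i$ for $\zand_i$. It also omits the agents of any gate that takes $\GG_i$ as input, e.g.\ $\gt_l,\ggt_l$ when $\GG_l=\neg\GG_i$. Nothing in your argument excludes these agents from $\pi^*(\lw_i)$.

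Concretely, suppose $\pi^*(\lw_i)=\{\lw_i,\gf_i,\wand_i,\wwand_i,\gt_l\}$ with $\GG_l=\neg\GG_i$. Extracting $\gf_i$ makes $\gf_i$ better off. It also makes $\lw_i$ better off, since $\lw_i$ values $\gf_i$ at $-\infty$; you did not count this gainer. But it makes $\wand_i$ and $\wwand_i$ worse off (utility $12\to 11$), and $\gt_l$ worse off too. So $\phi(\pi^*,\pi)=1$, and your deviation yields no contradiction.

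Your fallback is the right idea, and it is exactly the paper's proof. However, you leave its decisive count unverified: ``at most $\wand_i$, $\wwand_i$ and a few others can be positive'' gives no violation if such others exist.

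The missing observation is the one you already used for $\zand_i$, applied to every agent. By the default rule for $L$-agents, everyone outside $\{\wand_i,\wwand_i,\lww_i\}$ values $\lw_i$ at $-\infty$, and $\lww_i$ inherits $-\infty$ towards $\gf_i$. So in $\pi^*(\lw_i)$ at most $\wand_i$ and $\wwand_i$ have positive utility, while $\lw_i$ and $\gf_i$ have negative utility. This contradicts \Cref{lem_d2:balanced_coalitions}, and the case of $\lz_i$ is symmetric.

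This single count makes the extraction deviation, the $Z$-agent bookkeeping, and the unproved analogue of \Cref{lem_d1:L_separated} for $\pi^*$ unnecessary.
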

\begin{proof}
Assume for contradiction that $\gf_i\in\pi^*(\lw_i)$.
Then in $\pi^*(\lw_i)$, at least two agents (namely $\lw_i$ and $\gf_i$) obtain negative utility, while at most two agents (namely $\wand_i$ and $\wwand_i$) may obtain positive utility, a contradiction to \Cref{lem_d2:balanced_coalitions}. 

The proof that $\gf_i\notin\pi^*(\lz_i)$ is analogous.
\end{proof}

\begin{lemma}
\label{lem_d2:max_utils}
The following statements hold.
\begin{enumerate}
    \item Let $x\in X$. Then $u_x(\pi^*)\leq |X|-1$.\label{item_X:lem_d2:max_utils}
    \item Let $\ell\in \Lg$. Then $u_{\ell}(\pi^*)\leq 11$.\label{item_Lg:lem_d2:max_utils}
    \item Let $i\in[n]$. Then for agent $\ga\in\Ta_i\cup\Fa_i\cup \La_i$ we have $u_{\ga}(\pi^*)\leq 10m+1$.\label{item_A:lem_d2:max_utils}
    \item Let $\GG_i$ be a Not-or Copy-gate. Then for agent $\ga\in\Tg_i\cup\Fg_i$ we have $u_{\ga}(\pi^*)\leq 11$.\label{item_G:lem_d2:max_utils}
    \item Let $\GG_i=\GG_j\land\GG_k$ be an And-gate. Then
    \begin{enumerate}
        \item for agent $\ga\in\Tg_i$ we have $u_{\ga}(\pi^*)\leq 12$;\label{item_and_t:lem_d2:max_utils}
        \item for agent $\ga\in\Fg_i$ we have $u_{\ga}(\pi^*)\leq 13$;\label{item_and_f:lem_d2:max_utils}
        \item for agent $\ga\in\{\wand_i,\wwand_i,\zand_i,\zzand_i\}$ we have $u_{\ga}(\pi^*)\leq 11$.\label{item_w_z:lem_d2:max_utils}
        \item for agent $\ga\in\{\lw_i,\lww_i,\lz_i,\lzz_i\}$ we have $u_{\ga}(\pi^*)\leq 10$.\label{item_lw_lz:lem_d2:max_utils}
    \end{enumerate}
\end{enumerate}
\end{lemma}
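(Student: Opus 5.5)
The statement is a list of upper bounds, each obtained by summing the positive valuations an agent could possibly collect in one coalition, and then discarding the combinations that the earlier lemmas of this section rule out for $\pi^*$. I will proceed agent type by agent type. For each agent $\ga$ I list every $b$ with $\mathbf{v}_\ga(b)>0$, read off from the valuation list and from \Cref{def:replica-family}. I then use that $\pi^*$ is Pareto-optimal, so no agent's utility is negative without compensation; more precisely, any term of $-\infty$ makes $u_\ga(\pi^*)$ negative, which only strengthens an upper bound. Hence I only need to bound the sum of positive terms over a feasible coalition.

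\emph{Item~\ref{item_X:lem_d2:max_utils}} is immediate: an $X$-agent values only the other $|X|-1$ $X$-agents positively, each at $1$.

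\emph{Item~\ref{item_Lg:lem_d2:max_utils}} follows from \Cref{lem_d2:L_clean}. The coalition of $\ell\in\Lg_i$ lies inside $\Lg_i\cup\Tg_i$ or inside $\Lg_i\cup\Fg_i$. So $\ell$ collects $10$ from its partner in $\Lg_i$ and $1$ from the one origin in $\Tg_i$ or $\Fg_i$; replicas give $0$ to others. This gives at most $11$.

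\emph{Item~\ref{item_A:lem_d2:max_utils}.} An agent in $\Ta_i$ gets $10$ from each of the other $m$ agents of $\Ta_i$. Beyond that it gets at most $1$ from $x_i$ or from $\ls_i$, but not from both, by \Cref{lem_d2:X_coalition,lem_d2:L_notin_MC}. Its only other positive value is towards $\assf_i$, which it values at $-\infty$. Hence the bound is $10m+1$, and $\Fa_i$ is symmetric. For $\ga\in\La_i$, the coalition contains at most one of $\asst_i,\assf_i$ by \Cref{lem_d2:A_complementary_separate} (or by \Cref{lem_d2:L_clean}). So $\ga$ gets $10m+1$ at most.

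\emph{Item~\ref{item_G:lem_d2:max_utils}.} For Copy- and Not-gadgets, $\gt_i$ gets $10$ from its replica and $1$ each from $\lg_i$ and the input agent ($\asst_i$ or $\gf_j$). These two cannot be together: the input agent is a non-$L$ agent outside the gadget, so by \Cref{lem_d2:L_clean} it cannot share $\lg_i$'s coalition. The total is therefore at most $11$, and $\gf_i$ is symmetric.

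\emph{And-gadgets (item 5), the main obstacle.} This is where the case analysis is tightest. For $\ga\in\Tg_i$ the positive values are $10$ (replica), $1$ each from $\gt_j$ and $\gt_k$, and $2$ from $\lg_i$. By \Cref{lem_d2:L_clean}, $\lg_i$ excludes $\gt_j,\gt_k$. This gives $\max(10+2,\,10+2)=12$.

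For $\ga\in\Fg_i$ the positive values are $10$ (replica), $1$ from $\gf_j$, $1$ from $\wand_i$, $2$ from $\gf_k$, $2$ from $\zand_i$, and $3$ from $\lg_i$. If $\lg_i$ is present, \Cref{lem_d2:L_clean} excludes all the others and the utility is $13$. Otherwise I need the combinations of $\{\gf_j,\wand_i,\gf_k,\zand_i\}$ allowed by \Cref{lem_d2:W_Z_clean}: $\wand_i$ is not with $\zand_i$ or $\gf_j$, and $\zand_i$ is not with $\gf_k$. The feasible maxima are $\{\gf_j,\gf_k\}$ giving $3$, $\{\wand_i,\gf_k\}$ giving $3$, $\{\gf_j,\zand_i\}$ giving $3$, and $\{\zand_i\}$ giving $2$, so at most $13$. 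I expect the care to be in confirming that no third combination sneaks past these exclusions, and that replicas $\wwand_i,\zzand_i$ contribute $0$ to $\gf_i$.

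For $\wand_i$ (and $\wwand_i$) the values are $10$ (replica) plus $1$ each from $\lw_i$ and $\gf_i$. These two are not together by \Cref{lem_d2:lw_lz_gf}, giving $11$; note that the $\lw_i$ in $\zand_i$'s list should be read as $\lz_i$. Finally, $\lw_i$ gets $10$ from $\lww_i$ and $0$ from $\wand_i$, and is $-\infty$ towards everyone else, so its utility is at most $10$. The agents $\lz_i,\lzz_i$ are analogous.
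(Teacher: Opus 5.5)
Your proposal is correct and is essentially the paper's own argument: you bound each utility by the sum of the agent's positive valuations and use \Cref{lem_d2:L_clean,lem_d2:W_Z_clean,lem_d2:lw_lz_gf} (plus \Cref{lem_d2:X_coalition,lem_d2:L_notin_MC,lem_d2:A_complementary_separate} for item~3) to exclude the combinations that would exceed each bound. Your enumeration for $\Fg_i$ and your reading of the typos ($\gf_0$ as $\gf_i$, and $\lw_i$ as $\lz_i$ in $\zand_i$'s list) are right; the one step left implicit is that \Cref{lem_d2:L_clean} constrains only $\pi^*(\lg_i)$, so for $\lgg_i$ outside that coalition you should add the trivial bound of $2$, and the appeal to Pareto optimality in your preamble is unnecessary, since dropping non-positive terms already gives an upper bound.
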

\begin{proof}
One may verify this by summing the positive utilities available for each of those agents, while taking into account which agents may and may not form a coalition together, as established in previous lemmas.
Specifically, 
\Cref{item_Lg:lem_d2:max_utils} follows from \Cref{lem_d2:L_clean},
\Cref{item_A:lem_d2:max_utils} follows from \Cref{lem_d2:L_notin_MC,lem_d2:L_clean},
\Cref{item_G:lem_d2:max_utils,item_and_t:lem_d2:max_utils} follow from \Cref{lem_d2:L_clean}, \Cref{item_and_f:lem_d2:max_utils} follows from \Cref{lem_d2:W_Z_clean,lem_d2:L_clean}, and \Cref{item_w_z:lem_d2:max_utils} follows from \Cref{lem_d2:lw_lz_gf} (\Cref{item_lw_lz:lem_d2:max_utils,item_X:lem_d2:max_utils} are immediate
from the sum of positive utilities those agents assign in the game).
\end{proof}

\begin{lemma}
\label{lem:_d2:A_attached}
Let $i\in[n]$. Then $|\{\asst_i,\assf_i\}\cap\MC^*|=1$.
\end{lemma}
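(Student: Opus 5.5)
The upper bound is immediate. By \Cref{lem_d2:A_complementary_separate}, $\asst_i$ and $\assf_i$ are never in the same coalition, so at most one of them lies in $\MC^*$. The substantive part is to rule out $\{\asst_i,\assf_i\}\cap\MC^*=\emptyset$. I will do this by building a partition $\pi$ that is at least as popular as $\pi^*$, which contradicts strong popularity.

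Suppose both $\asst_i$ and $\assf_i$ are outside $\MC^*$. By \Cref{lem_d2:L_clean}, $\ls_i$ shares a coalition with agents from at most one of $\Ta_i,\Fa_i$. So we may choose one of them, say $\Ta_i$ (swapping roles otherwise), such that $\pi^*(\ls_i)\cap\Ta_i=\emptyset$. Let $\pi$ be obtained from $\pi^*$ by removing all agents of $\Ta_i$ from their current coalitions and adding them to $\MC^*$.

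Each agent in $\Ta_i$ gains in $\pi$. In $\pi$ it has the other $m$ members of $\Ta_i$ (value $10m$) and $x_i$ (value $1$), and no $-\infty$ partners. Indeed, $\MC^*$ contains no $L$-agents by \Cref{lem_d2:L_notin_MC} and no $\assf_i$ by assumption. The one remaining $-\infty$ threat is $\gf_i$ of the Copy-gate $\GG_i=\XX_i$, which assigns $-\infty$ to $\asst_i$ and to whom $\asst_i$ assigns $-\infty$.

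Therefore the main step is a case split on whether $\gf_i\in\MC^*$. If it is, I instead move $\Fa_i$ into $\MC^*$ when that is safe, since then $\gt_i\notin\MC^*$, as $\gt_i$ and $\gf_i$ are never together. If both copy-gate agents cause conflicts, I will also extract $\Fg_i$ (two agents) from $\MC^*$ into a singleton or into $\pi^*(\lg_i)$. The cost is at most $2$ objectors, against $m+1\ge 2$ gainers. I expect this step to be the main obstacle, because a tie (at most $2$ against $m+1$ when $m=1$) must still be excluded. I would handle this by noting that only a non-strict margin is needed: strong popularity requires $\phi(\pi^*,\pi)>0$, so $\phi(\pi^*,\pi)\le 0$ already gives the contradiction.

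Next I will check that the gain beats $\pi^*$. In $\pi^*$, by \Cref{lem_d2:L_notin_MC} and the choice of side, no $\Ta_i$ agent can be with both $x_i\in\MC^*$ and $\ls_i$. An agent of $\Ta_i$ outside $\MC^*$ lacks $x_i$, and by our choice it also lacks $\ls_i$. So its utility is at most $10m$ plus contributions from gate agents, and gate agents give $\asst_i$ value $0$ or $-\infty$ only. Hence its utility is strictly below $10m+1$, so all $m+1$ agents of $\Ta_i$ strictly prefer $\pi$. Replicas that were separated from their origin also improve.

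Finally I will bound the objectors. Agents left behind in the old coalitions of $\Ta_i$ can only lose value from $\Ta_i$ if they assigned it positive value, which means $\ls_i$, $\lss_i$ and the other $\La_i$ replicas. These are excluded by the choice of side. Agents already in $\MC^*$ lose only if they assign negative value to $\Ta_i$, namely $L$-agents (absent) and $\gf_i,\ggf_i$ (treated above). $\gt_i$ and $\ggt_i$ weakly gain. So $\phi(\pi^*,\pi)\le 2-(m+1)\le 0$, contradicting strong popularity of $\pi^*$. Hence exactly one of $\asst_i,\assf_i$ lies in $\MC^*$.
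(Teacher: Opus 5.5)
Your upper bound via \Cref{lem_d2:A_complementary_separate} is fine, but the local deviation for the lower bound does not go through, because it treats $\XX_i$ as having a single Copy-gadget. In the construction every circuit $\CC_j$ begins with its own Copy-gate for $\XX_i$. Write $\gt_{i,j},\gf_{i,j}$ for its gate agents and $\ggt_{i,j},\ggf_{i,j}$ for their replicas, $j\in[m]$. Your accounting misses these agents in three ways.

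\emph{Objectors left behind.} When $\Ta_i$ moves, the agents who lose are not only those in $\La_i$: each $\gt_{i,j}$ and $\ggt_{i,j}$ assigns $1$ to $\asst_i$. For example, suppose $\pi^*(\asst_i)=\Ta_i\cup\bigcup_j\Tg_{i,j}$ lies outside $\MC^*$. This is consistent with everything established so far, and each $\gt_{i,j}$ then already has its maximal utility $11$. Moving $\Ta_i$ into $\MC^*$, even when $\MC^*$ contains no conflicting agent, gives $m+1$ supporters against $2m$ objectors. Hence $\phi(\pi^*,\pi)=m-1>0$ for $m\ge 2$, and there is no contradiction.

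\emph{Conflicts inside $\MC^*$.} Your case split fails: $\gf_{i,1}\in\MC^*$ and $\gt_{i,2}\in\MC^*$ can hold simultaneously, so neither side can be moved in. The fallback of extracting the false copy agents then concerns up to $2m$ agents, not $2$. Moreover, removing them from $\MC^*$ can hurt every gate agent in $\MC^*$ that values them positively: the $\gt$-agent of a Not-gate fed by that Copy-gate, the $\gf$-agent of an And-gate fed by it, and their replicas. Their number is bounded only by the fan-out.

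\emph{Replicas of $\assf_i$.} These may lie in $\MC^*$ even though $\assf_i$ does not, and they are $-\infty$ partners of $\Ta_i$. This breaks your ``no $-\infty$ partners'' claim.

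So the bound $\phi(\pi^*,\pi)\le 2-(m+1)$ is unjustified. Patching it by moving more agents triggers a cascade through all $m$ circuits.

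The missing idea is to compare $\pi^*$ with a global canonical partition rather than performing local surgery. Fix any $\YY$, and let $\pi$ put $X\cup V$ into one coalition. Make all assignment and gate gadgets comply with $\YY$, keep replicas with their origins, and arrange the $\Wand$/$\Zand$ agents as in the proof of \Cref{lem:_d2:A_comply}. By \Cref{lem_d2:max_utils}, every non-voter attains its maximum possible utility in $\pi$, so none of them prefers $\pi^*$. Meanwhile, if $\{\asst_i,\assf_i\}\cap\MC^*=\emptyset$, reaching $10m+1$ requires $\ls_i$, which can serve only one side. Hence all of $\Ta_i$ or all of $\Fa_i$ is strictly below $10m+1$ in $\pi^*$. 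This gives $m+1$ strict supporters of $\pi$ against at most $m$ voters, so $\phi(\pi^*,\pi)<0$. This is the paper's argument, and it avoids every side effect on the copy-gadgets and downstream gates.
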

\begin{proof}
By \Cref{lem_d2:A_complementary_separate} we have that $|\{\asst_i,\assf_i\}\cap\MC^*|\le 1$.
Assume for contradiction that $\{\asst_i,\assf_i\}\cap\MC^*=\emptyset$. 
Then by \Cref{lem_d2:X_coalition} we have $x_i\notin \pi^*(\asst_i)\cup\pi^*(\assf_i)$.
Take some arbitrary string $\YY$, say $\YY=0^n$, and consider the following partition $\pi$. All $X$- and $V$-agents are in the same coalition, which we call
$\MC_{\YY}$;
all $A$- and $G$-gadgets $\pi$-comply with $\YY$ (namely, one of the gadget's core agent is in $\MC_{\YY}$ while the other is with its alternative agent, such that all gadgets comply with their inputs); and all one-way replicas are in the coalitions of their origin agent. 
By \Cref{lem_d2:max_utils}, one may verify that all but the $V$-agents cannot prefer $\pi^*$ over $\pi$. 
Furthermore, since $\{\asst_i,\assf_i\}\cap\MC^*=\emptyset$, we must have $\phi_{\AGD_i}(\pi^*,\pi)\le -(m+1)$.
An intuitive way to see this is that we have 
$u_{\ga}(\pi)=10m+1$ for all $\ga\in\AGD_i$, but in $\pi^*$ we have a trade off between $\Ta_i$ and $\Fa_i$, since they both need $\ls_i$ to match the utility they obtain in $\pi$).
Thus, even if all $V$-agents prefer $\pi^*$ we have $\phi(\pi^*,\pi)<0$, contradicting the strong popularity of $\pi^*$.
Thus, $|\{\asst_i,\assf_i\}\cap\MC^*|= 1$.
\end{proof}

By \Cref{lem:_d2:A_attached,lem_d2:A_complementary_separate}, we have that for each $A$-gadget $\AGD_i$, exactly one of $\asst_i$ and $\assf_i$ is in $\MC^*$. 
For the rest of the proof, let $\XX^*=\XX^*_1,...,\XX^*_n$ denote the string where $\XX^*_i=1$ if $\asst_i\in\MC^*$, and $\XX^*_i=0$ if $\assf_i\in\MC^*$.

\begin{lemma}
\label{lem:_d2:A_comply}
Let $i\in[n]$. Then the assignment gadget $\AGD_i$ $\pi^*$-complies with $\XX^*$.
\end{lemma}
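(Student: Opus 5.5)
The plan is to argue by contradiction with an explicit more popular deviation, localized to the assignment gadget. Here ``$\pi^*$-complies'' is read as: $\AGD_i$ $\pi^*$-corresponds to the assignment $\XX^*_i$. By the definition of $\XX^*$, one core agent is already in $\MC^*$. By symmetry assume $\XX^*_i=1$, i.e.\ $\asst_i\in\MC^*$. It then remains to show $\assf_i\in\pi^*(\ls_i)$. So suppose $\assf_i\notin\pi^*(\ls_i)$. Let $\pi$ be obtained from $\pi^*$ by extracting all agents of $\Fa_i\cup\La_i$ from their coalitions and forming the single coalition $\Fa_i\cup\La_i$, leaving everything else unchanged. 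I will show $\phi(\pi^*,\pi)<0$, contradicting strong popularity.

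\emph{Gainers.} Every agent of $\Fa_i\cup\La_i$ gets utility exactly $10m+1$ in $\pi$. This is $10m$ from its own origin/replica group, plus $1$ from $\assf_i$ or $\ls_i$; all mutual values inside the new coalition are non-negative, and replicas of $\ls_i$ value $\assf_i$ at $1$ since replicas copy the origin's valuations. In $\pi^*$, $\assf_i$ is not with $x_i$: indeed $x_i\in\MC^*$ by \Cref{lem_d2:X_coalition}, and $\asst_i\in\MC^*$ is separated from $\assf_i$ by \Cref{lem_d2:A_complementary_separate}. By assumption $\assf_i$ is also not with $\ls_i$. The only other positive values of $\assf_i$ are $10$ towards its $m$ replicas, and it values the replicas of $\ls_i$ at $0$. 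Hence $u_{\assf_i}(\pi^*)\le 10m<10m+1$, and the same bound holds for its replicas.

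For agents of $\La_i$, the only positive values are towards $\asst_i$, $\assf_i$ and the group $\La_i$ itself. In $\pi^*$, $\asst_i\in\MC^*$ while $\La_i\cap\MC^*=\emptyset$ by \Cref{lem_d2:L_notin_MC}. An $L^a_i$-agent can sit with $\assf_i$ only if $\ls_i$ itself is absent, since $\ls_i$ is not with $\assf_i$; that agent is then missing at least the value $10$ for $\ls_i$. (\Cref{lem_d2:L_clean} can be used to tidy this case analysis.) In every case $u_\ell(\pi^*)<10m+1$ for all $\ell\in\La_i$. Thus all $2m+2$ agents of $\Fa_i\cup\La_i$ strictly prefer $\pi$.

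\emph{Losers.} Only agents outside $\Fa_i\cup\La_i$ whose coalition lost an agent they value positively can prefer $\pi^*$. Agents joining a coalition are not an issue, because no outside agent joins any coalition. By the valuations and \Cref{def:replica-family}, the outside agents valuing some member of $\Fa_i\cup\La_i$ positively are $\asst_i$ and its replicas, and the $\Fg$-agents of the copy gates $\GG_i=\XX_i$. Replicas of $\assf_i$ and $\ls_i$ receive value $0$ from outsiders. Now $\asst_i$ and its replicas cannot lose: the only positive value they place on $\Fa_i\cup\La_i$ is towards $\ls_i$, which is not in $\MC^*$, so any of them sitting with $\ls_i$ is separated from $\asst_i$ and loses $10$, and its preference is bounded as above; the main case is $\Ta_i\subseteq\MC^*$. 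Therefore the losers are at most $\gf_i,\ggf_i$ of the $m$ copy gadgets of bit $i$, i.e.\ at most $2m$ agents. Everyone else is indifferent or gains, since removing agents they value negatively only helps.

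Hence $\phi(\pi^*,\pi)\le 2m-(2m+2)<0$, the desired contradiction. The main obstacle is the careful accounting of the replicas of $\asst_i$ and $\ls_i$ in awkward placements in $\pi^*$. These must be shown never to prefer $\pi^*$, or, if some do, to be outweighed. I would handle this with the utility bounds in \Cref{lem_d2:max_utils} and the fact that each gainer strictly improves. The case $\XX^*_i=0$ is symmetric, with $\Ta_i$ in place of $\Fa_i$ and the $\Tg$-agents of the copy gates as the only possible losers.
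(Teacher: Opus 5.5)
Your gain analysis is correct: every agent of $\Fa_i\cup\La_i$ goes from strictly below $10m+1$ to exactly $10m+1$. The loser count, however, has a genuine hole. You assert that $\asst_i$ and its replicas cannot lose, but nothing available at this point places the replicas of $\asst_i$ with their origin. \Cref{lem_d2:L_clean} explicitly permits $\pi^*(\ls_i)\subseteq\La_i\cup\Ta_i$. Moreover, $\Ta_i\subseteq\pi^*(\asst_i)$ is \Cref{lem:_d2:A_coalition}, whose proof uses the present lemma, so your ``main case $\Ta_i\subseteq\MC^*$'' cannot be assumed.

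A replica $r$ of $\asst_i$ lying in $\pi^*(\ls_i)$ values $\ls_i$ at $1$ and the replicas of $\ls_i$ at $0$. In your $\pi$ it loses $\ls_i$ and gains nothing, so it strictly prefers $\pi^*$. Your remark that its ``preference is bounded as above'' confuses having low utility in $\pi^*$ with not preferring $\pi^*$. Concretely, take $\pi^*(\ls_i)=\La_i\cup(\Ta_i\setminus\{\asst_i\})$, and let $\pi^*(\assf_i)$ consist of $\Fa_i$ together with all $2m$ copy-gadget agents $\gf,\ggf$ of bit $i$; nothing proved so far excludes this. Your deviation then has $2m+2$ gainers and $3m$ losers, so $\phi(\pi^*,\pi)=m-2$. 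This is positive for $m\ge 3$, and no contradiction arises.

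Invoking \Cref{lem_d2:max_utils} cannot rescue this particular deviation, since those replicas really are worse off in it. The natural patch of also moving the stray replicas of $\asst_i$ into $\MC^*$ does not immediately work either. No lemma yet excludes copy-gadget $\gf$-agents from $\MC^*$, and these replicas value them at $-\infty$.

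The paper avoids this configuration-dependent bookkeeping by comparing $\pi^*$ with a global canonical partition $\pi$ built from $\XX^*$. In it, all $X$- and $V$-agents are together, every gadget complies with $\XX^*$, replicas sit with their origins, and the $W$-, $Z$-, $\Lw$-, $\Lz$-agents are placed as in \Cref{sec:condorcet_implies_popular}. Every agent outside $V$ then attains the upper bound of \Cref{lem_d2:max_utils} and hence weakly prefers $\pi$. The $m+1$ agents of $\La_i$ strictly prefer it (exactly your computation), which outweighs the at most $m$ voters. Replacing your local deviation with this global one, while keeping your strict-gain argument for $\La_i$, closes the gap.
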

\begin{proof}
By \Cref{lem:_d2:A_attached} we have that one of the agents $\asst_i$ or $\assf_i$ is in $\MC^*$ while the other is not. 
Without loss of generality, assume $\asst_i\in\MC^*$. 
By \Cref{lem_d2:L_notin_MC} we have that $\ls_i\notin \MC^*$. 
Assume for contradiction that $\assf_i\notin\pi^*(\ls_i)$.
Then: 
\begin{equation}
\label{lem:_d2:A_comply:eq}
    \forall\ga\in\La_i\; u_{p}(\pi^*)<10m+1. 
\end{equation}

Denote by $\pi$ the following partition:
\begin{itemize}
    \item All $X$- and $V$-agents are in a coalition denoted $\MC_{\XX^*}$ (to which more agents will be added).
    \item all assignment and gate gadgets $\pi$-comply with $\XX^*$.
    \item For an And-gadget $\GG_i=\GG_j\land\GG_k$:
    \begin{itemize}
        \item If $\XX^*(\GG_j)=\XX^*(\GG_k)=1$ or $\XX^*(\GG_j)=\XX^*(\GG_k)=0$, then $\Wand_i\in\pi$, and $\Zand_i\in\pi$.
        \item If $\XX^*(\GG_j)=0$ and $\XX^*(\GG_k)=1$ then $\{\Wand_i,\{\lz_i,\lzz_i\}\}\subseteq\pi$, and $\{\zand_i,\zzand_i\}\subseteq \MC_{\XX^*}$.
        \item If $\XX^*(\GG_j)=1$ and $\XX^*(\GG_k)=0$ then $\{\Zand_i,\{\lw_i,\lww_i\}\}\subseteq\pi$, and $\{\wand_i,\wwand_i\}\subseteq\MC_{\XX^*}$.
    \end{itemize}
    \item All one-way replicas are with their respective origin agent.
\end{itemize}
Notice that in $\pi$ all agents $\ga'\in N\bs V$ obtain the maximal utility specified for them in \Cref{lem_d2:max_utils}, and therefore $u_{\ga'}(\pi^*,\pi)\le 0$.
In particular, by \Cref{lem:_d2:A_comply:eq} we even have that $\phi_{\AGD_i}(\pi^*,\pi)< -(m+1)$.
Thus, since $|V|=m$, even if all $V$-agents prefer $\pi^*$ over $\pi$, we have that $\pi$ is more popular than $\pi^*$, a contradiction to the strong popularity of $\pi^*$.
\end{proof}

\begin{lemma}
\label{lem:_d2:A_coalition}
Let $i\in[n]$. Then $\Ta_i\subseteq \pi^*(\asst_i)$, and $\Fa_i\subseteq \pi^*(\assf_i)$.
\end{lemma}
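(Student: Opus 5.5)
We argue by contradiction, reusing the comparison partition from the proof of \Cref{lem:_d2:A_comply}. Fix $i\in[n]$ and suppose some replica $\ga\in\Ta_i$ satisfies $\ga\notin\pi^*(\asst_i)$; the case of $\Fa_i$ is symmetric. By \Cref{lem:_d2:A_comply}, $\AGD_i$ $\pi^*$-complies with $\XX^*$. Hence one of $\asst_i,\assf_i$ lies in $\MC^*$ and the other lies in $\pi^*(\ls_i)$. In either case $\asst_i$ sits in a coalition worth to it at most $10m+1$ when all of $\Ta_i$ is present.

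The key step bounds the utility of the replicas of $\asst_i$. Each agent of $\Ta_i$ assigns $10$ to every other agent of $\Ta_i$, and, by \Cref{def:replica-family}, has the same external valuations as $\asst_i$. Those external values are $1$ to $x_i$ and $1$ to $\ls_i$, with everything else nonpositive. By \Cref{lem_d2:L_notin_MC} and \Cref{lem_d2:X_coalition}, $x_i$ and $\ls_i$ are never together, so external positive value is at most $1$. If $\Ta_i$ is split across coalitions, then any agent of $\Ta_i$ is missing at least one other member of $\Ta_i$. Its utility is then at most $10(m-1)+1<10m+1$. So every agent of $\Ta_i$ is strictly below the bound of \Cref{lem_d2:max_utils}, item~\ref{item_A:lem_d2:max_utils}, and in particular so is $\asst_i$ itself. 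That is at least $m+1$ agents strictly below their maximum.

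Next, take the partition $\pi$ built in the proof of \Cref{lem:_d2:A_comply}. In it all gadgets comply with $\XX^*$, all replicas are with their origins, and $X\cup V$ together with the appropriate core and $W/Z$ agents form the main coalition. As observed there, every agent of $N\bs V$ attains in $\pi$ the maximal utility of \Cref{lem_d2:max_utils}. So no agent outside $V$ prefers $\pi^*$. In $\pi$, each agent of $\Ta_i$ gets exactly $10m+1$, either via $x_i$ in $\MC_{\XX^*}$ or via $\ls_i$. The previous paragraph gives that all $m+1$ agents of $\Ta_i$ prefer $\pi$, so $\phi_{\Ta_i}(\pi^*,\pi)=-(m+1)$. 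Since $|V|=m$, even if every voter prefers $\pi^*$ we get $\phi(\pi^*,\pi)\le -1<0$, contradicting strong popularity.

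The main obstacle is checking that every agent of $\Ta_i$, including $\asst_i$ and not just the stray replica, is strictly worse off in $\pi^*$. This relies on the mutual value $10$ dominating the external gain of at most $1$. We also need that $x_i$ and $\ls_i$ cannot both contribute, and we handle this via \Cref{lem_d2:L_notin_MC} together with \Cref{lem_d2:X_coalition}. Strictly fewer than all $m$ partners always costs at least $10-1>0$, so the count of $m+1$ deviators holds. We must also confirm that $\pi\neq\pi^*$, which follows from the split of $\Ta_i$.
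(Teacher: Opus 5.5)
Your proof is correct, but it takes a different route from the paper.

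\textbf{The paper's route.} The paper argues locally. It moves the stray members of $\Ta_i$ into $\pi^*(\asst_i)$ and claims this is a Pareto improvement. The justification is that nobody outside $\Ta_i$ values replicas of $\asst_i$ positively, that (via \Cref{lem_d2:L_clean,lem_d2:L_notin_MC}) nobody in $\pi^*(\asst_i)$ objects to them, and that $\asst_i$ strictly gains.

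\textbf{Your route.} You compare $\pi^*$ with the globally ideal partition from the proof of \Cref{lem:_d2:A_comply}, in which every agent outside $V$ attains its bound from \Cref{lem_d2:max_utils}. Your key observation is that a split $\Ta_i$ leaves each of its $m+1$ members with utility at most $10(m-1)+1<10m+1$. This gives $m+1$ strict supporters of the deviation against at most $m$ voters. It is the same replica-counting device the paper uses in \Cref{lem:_d2:A_attached,lem:_d2:A_comply}, so the treatment is uniform.

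\textbf{What your route buys.} Only voters can object to your deviation, so you never need to know who else sits in $\pi^*(\asst_i)$. This matters. The paper's local move needs that no agent already in $\pi^*(\asst_i)$ has a negative value for $\asst_i$. That is clear when $\pi^*(\asst_i)=\pi^*(\ls_i)$. When $\pi^*(\asst_i)=\MC^*$, however, the lemmas proved up to that point do not yet rule out two kinds of objectors in $\MC^*$:
\begin{itemize}
\item stray replicas of $\assf_i$;
\item a Copy-gate agent $\gf_i$ of some circuit, whose compliance is only established later in \Cref{lem:_d2:G_comply}.
\end{itemize}
Both assign $-\infty$ to $\asst_i$ and hence to its replicas.

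\textbf{What the paper's route buys.} Where it applies, it is lighter. It gives an outright Pareto improvement and does not depend on checking that every non-voter attains its maximum in the ideal partition.

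\textbf{Minor point.} Your appeal to \Cref{lem:_d2:A_comply} in the first paragraph is never used. Any string would serve to build the comparison partition, exactly as in the proof of \Cref{lem:_d2:A_attached}.
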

\begin{proof}
Without loss of generality, assume $\ga \in \Ta_i\bs \pi^*(\asst_i)$.
Consider the partition $\pi$ obtained from $\pi^*$ by extracting all agents in $\Ta_i\bs \pi^*(\asst_i)$ from their coalitions, and adding them to $\pi^*(\asst_i)$.
By \Cref{lem:_d2:A_comply} we have that $\AGD_i$ $\pi^*$-complies with $\XX^*$, and therefore either $\pi^*(\asst_i)=\MC^*$ or $\pi^*(\asst_i)=\pi^*(\ls_i)$.
Either way, by \Cref{lem_d2:L_clean} and \Cref{lem_d2:L_notin_MC} we have that no agent will object to adding the replicas of $\asst_i$ to $\pi^*(\asst_i)$. Furthermore, no agent would object to those replica agents leaving their coalitions, as any agent outside $\Ta_i$ assigns at most zero to them.
However, $\asst_i$ clearly prefers $\pi$ over $\pi^*$.
Thus, $\pi$ is a Pareto improvement from $\pi^*$, a contradiction.
\end{proof}

\begin{lemma}
\label{lem:_d2:G_comply}
Let $\CC_l$ be a circuit. Then all gates in $\CC_l$ $\pi^*$-comply with $\XX^*$.
\end{lemma}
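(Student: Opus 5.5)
We argue by induction over the topological order of the gates of $\CC_l$, mirroring \Cref{lem:_d2:A_comply}. Suppose some gate of $\CC_l$ does not $\pi^*$-comply with $\XX^*$, and let $\GG_i$ be the one with the smallest index. By minimality (and, for Copy-gates, by \Cref{lem:_d2:A_comply}), its input gadgets are $\pi^*$-valid and $\pi^*$-comply with $\XX^*$. So the membership of the input core agents in $\MC^*$ or in the coalition of their alternative agent is fixed. We then compare $\pi^*$ against the canonical partition $\pi$ from the proof of \Cref{lem:_d2:A_comply}: all $X$- and $V$-agents together, all gadgets complying with $\XX^*$, the $W$/$Z$-agents placed as in the case distinction there, and replicas with their origins.

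The key fact, already noted in the proof of \Cref{lem:_d2:A_comply}, is that every agent in $N\bs V$ attains in $\pi$ the maximal utility permitted by \Cref{lem_d2:max_utils}. Hence $\phi_{p}(\pi^*,\pi)\le 0$ for every $p\notin V$. It therefore suffices to show that non-compliance of $\GGD_i$ forces at least $m+1$ agents (or at least enough to outweigh the $m$ voters) to be strictly worse off in $\pi^*$ than in $\pi$. The natural count uses replicas: for a Copy-gadget the upstream $A$-agents carry $m$ replicas, but within a gate gadget we only have pairs. So I would instead aim for a bound of the form $\phi_{\GGD_i}(\pi^*,\pi)\le -2$, and combine it with an argument that the voters cannot all prefer $\pi^*$.

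The main obstacle is that $m$ voters could outweigh a deficit of $2$. I would handle this as follows.

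\textbf{First attempt.} Observe that if $\GGD_i$ is non-compliant, then the strongly popular $\pi^*$ must still beat a locally repaired partition $\pi'$, obtained from $\pi^*$ by only rearranging $\GGD_i$ into its compliant configuration, leaving $\MC^*$ otherwise intact. The only voter affected is possibly $v_l$, and only if $\GG_i$ is an output gate. Gadget agents of $\GGD_i$ reach their maxima by \Cref{lem_d2:max_utils} and the already-compliant inputs; two agents (an origin and its replica) strictly improve; and agents outside $\GGD_i$ are unaffected, except through $-\infty$ edges that only get removed. So $\phi(\pi^*,\pi')\le -2+1<0$.

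\textbf{Case analysis.} Following \Cref{lem_d1:gates_dont_comply_margin}, split by gate type, and for And-gates by the four input combinations. In each case identify which of $\Tg_i$, $\Fg_i$, $\Lg_i$, $\Wand_i$, $\Zand_i$ is strictly below its maximum in $\pi^*$. Use \Cref{lem_d2:L_clean,lem_d2:W_Z_clean,lem_d2:lw_lz_gf} to exclude configurations in which a core agent of $\GGD_i$ sits with the wrong $L$-agent or with a $-\infty$ neighbour.

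The And-gadget cases, where $\wand_i$ or $\zand_i$ in $\MC^*$ interacts with $\gf_i$, are where the local repair must be checked carefully. In particular, we must ensure that moving $\wand_i$ or $\zand_i$ does not harm $\gf_i$ more than it helps, since $\gf_i$ values them at $1$ and $2$.
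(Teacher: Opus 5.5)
\textbf{Verdict: genuine gap.} Your final argument rests on the locally repaired partition $\pi'$, in which only $\GGD_i$ is rearranged, and this step fails.

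Minimality of $i$ controls the \emph{inputs} of $\GG_i$. It says nothing about the gadgets \emph{downstream} of $\GG_i$ in $\CC_l$, which have both positive and $-\infty$ valuations towards $\gt_i,\gf_i$ and their replicas. These downstream gadgets will typically comply with the wrong value of $\GG_i$.

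For example, take $\GG_{i'}=\neg\GG_i$ with $\XX^*(\GG_i)=1$. Suppose that in $\pi^*$ the gadget $\GGD_i$ corresponds to $0$ and $\GGD_{i'}$ complies with it, so $\gf_i,\ggf_i,\gt_{i'},\ggt_{i'}\in\MC^*$. Your repair does two things:
\begin{itemize}
\item it moves $\gt_i,\ggt_i$ into $\MC^*$, next to $\gt_{i'},\ggt_{i'}$, with which they have mutual valuation $-\infty$;
\item it removes $\gf_i$, which $\gt_{i'}$ values at $1$.
\end{itemize}
As a result, $\gt_i,\ggt_i$ do not attain their bounds from \Cref{lem_d2:max_utils}, and the agents $\gt_{i'},\ggt_{i'}$ outside $\GGD_i$ are strictly hurt. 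So $\phi(\pi^*,\pi')\le -2+1$ does not follow. Downstream And-gadgets cause the same problem, through the $-\infty$ edges to $\wand_{i'}$ or $\zand_{i'}$ and the positive values that $\gt_{i'},\gf_{i'}$ place on $\gt_i,\gf_i$.

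The missing idea, which is how the paper argues, is to repair the whole circuit rather than a single gadget. Obtain $\pi$ from $\pi^*$ by rearranging only the agents of $C_l$ so that \emph{every} gate of $\CC_l$ complies with $\XX^*$: core agents go in $\MC^*$ or with their alternative agent, $W$/$Z$-agents are placed by the usual case distinction on input values, and replicas go with their origins. This has three consequences.
\begin{itemize}
\item No agent outside $C_l$ is made worse off. Apart from the other voters' value for $v_l$, who stays in $\MC^*$, no outside agent values an agent of $C_l$ positively. The only nonzero valuations from $\MC^*\setminus C_l$ towards agents added to $\MC^*$ are the $-\infty$ edges from $A$-agents to Copy-gate agents, and compliance with the already-compliant $A$-gadgets (\Cref{lem:_d2:A_comply}) avoids them.
\item Every non-voter agent of $C_l$ attains its bound from \Cref{lem_d2:max_utils}.
\item The only voter whose utility can drop is $v_l$.
\end{itemize}
Your case analysis shows that some agent of the non-compliant $\GGD_i$ is strictly below its bound in $\pi^*$. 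Combined with the above, this gives $\phi(\pi^*,\pi)\le 0$ with $\pi\ne\pi^*$; a tie already contradicts strong popularity. Your origin--replica observation even makes the margin strictly negative.

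This resolves the ``$m$ voters'' obstacle exactly as your local attempt intended, but without creating downstream conflicts.
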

\begin{proof}
Assume otherwise. Let $\GG_t$ be the gate with the smallest index in $\CC_l$ which does not $\pi^*$-comply with $\XX^*$. Then, since the gates are topologically ordered, the inputs of $\GG_t$ $\pi^*$-comply with $\XX^*$ (if $\GG_t$ is a Copy-gate, its input is an assignment gadget, which all $\pi^*$-comply with $\XX^*$ by \Cref{lem:_d2:A_comply}).
Since $\GG_t$ does not comply with $\XX^*$, but its inputs do, one may verify that at least one agent in $\GGD_t$ will obtain a utility strictly smaller than the maximal utility specified in \Cref{lem_d2:max_utils}.

Let $\pi$ be the partition obtained from $\pi^*$ by rearranging only the agents of $C_l$ such that:
\begin{itemize}
    \item All gates comply with $\XX^*$.
    \item For any core agent $\ga$ that is not in $\MC^*$, $\ga$ is with her alternative agent.
    \item For an And-gadget $\GG_i=\GG_j\land\GG_k$:
    \begin{itemize}
        \item If $\XX^*(\GG_j)=\XX^*(\GG_k)=1$ or $\XX^*(\GG_j)=\XX^*(\GG_k)=0$, then we form the coalitions $\Wand_i$ and $\Zand_i$.
        \item If $\XX^*(\GG_j)=0$ and $\XX^*(\GG_k)=1$ then we form the coalitions $\Wand_i$ and $\{\lz_i,\lzz_i\}$, and add $\{\lz_i,\lzz_i\}$ to $\MC^*$.
        \item If $\XX^*(\GG_j)=1$ and $\XX^*(\GG_k)=0$ then we form the coalitions $\Zand_i$ and $\{\lw_i,\lww_i\}$, and add $\{\wand_i,\wwand_i\}$ to $\MC^*$.
    \end{itemize}
    \item All one-way replicas are with their respective origin agent.
\end{itemize}
Denote $\MC_{\pi}=\pi(x_1)$.
Let $v$ denote the $V$-agent of $\CC_l$.
It is clear that agents outside $C_l$ are not affected by this deviation (agents in $\MC^*\bs C_l$ assign a value of zero to any agent that was added to $\MC^*$). 
Within $C_l$, according to \Cref{lem_d2:max_utils} all agents apart from $v$ obtain utility at least as much as they obtain in $\pi^*$: 
All $\Lg$-agents obtain a utility of $11$, and $G$-agents who belong to a Copy- or Not-gate obtain a utility of $11$.
The And-gates require a slightly more careful analysis. 
Let $\GG_i=\GG_j\land\GG_k$ be an And-gate in $\CC_l$. 
It may be verified that, regardless of the values of $\XX^*(\GG_j)$ and $\XX^*(\GG_k)$, all agents in $\{\wand_i,\wwand_i,\zand_i,\zzand_i\}$ obtain a utility of $11$ and all agents in $\{\lw_i,\lww_i,\lz_i,\lzz_i\}$ obtain a utility of $10$. 
Further, if $\XX^*(\GG_j)=\XX^*(\GG_k)=1$, then $\Tg_i\cup\{\gt_j,\gt_k\}\subseteq\MC_{\pi}$, and thus $\forall \ga\in \Tg_i \;u_{\ga}(\pi)=12\ge u_{\ga}(\pi^*)$; otherwise, we have $\Tg_i\cup\Lg_i\in\pi$, and thus $\forall \ga\in \Tg_i$ we have $u_{\ga}(\pi)=12\ge u_{\ga}(\pi^*)$.
It remains to consider the agents of $\Fg_i$, for which we make the following case distinction.
\begin{itemize}
        \item If $\XX^*(\GG_j)=\XX^*(\GG_k)=1$, then $\Fg_i\cup\Lg_i\in\pi$, and thus $\forall \ga\in \Fg_i$ we have $u_{\ga}(\pi)=13\ge u_{\ga}(\pi^*)$.

        \item If $\XX^*(\GG_j)=\XX^*(\GG_k)=0$, then $\Fg_i\cup\{\gf_j,\gf_k\}\subseteq\MC_{\pi}$, and thus $\forall \ga\in \Fg_i$ we have $u_{\ga}(\pi)=13\ge u_{\ga}(\pi^*)$.
        
        \item If $\XX^*(\GG_j)=0$ and $\XX^*(\GG_k)=1$ then $\{\gf_j,\zand_i\}\subseteq\MC_{\pi}$, and thus $\forall \ga\in \Fg_i$ we have $u_{\ga}(\pi)=13\ge u_{\ga}(\pi^*)$.
        
        \item If $\XX^*(\GG_j)=1$ and $\XX^*(\GG_k)=0$ then $\{\gf_k,\wand_i\}\subseteq\MC_{\pi}$, and thus $\forall \ga\in \Fg_i$ we have $u_{\ga}(\pi)=13\ge u_{\ga}(\pi^*)$.
    \end{itemize}

Thus, since we know that in gadget $\GGD_t$ at least one agent obtains a higher utility in $\pi$ than in $\pi^*$, we have that $\pi$ is at least as popular as $\pi^*$ (even if $v$ prefers $\pi^*$ over $\pi$), a contradiction to the strong popularity of $\pi^*$.
\end{proof}

\begin{lemma}
\label{lem_d2:origin_agents_coalitions}
All replica agents are in the coalition of their respective origin agent.
\end{lemma}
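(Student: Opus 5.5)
The plan is to mirror the argument of \Cref{lem:_d2:A_coalition}, which already handles the $A$-gadget replicas, and extend it to every origin--replica family. Fix an origin $\ga$ with replica set $R$, and suppose some replica $\ga'\in R$ satisfies $\ga'\notin\pi^*(\ga)$. Let $\pi$ be obtained from $\pi^*$ by moving all of $R\setminus\pi^*(\ga)$ into $\pi^*(\ga)$. I would show that $\pi$ is a Pareto improvement over $\pi^*$. By strong popularity, $\pi^*$ is Pareto-optimal, so this gives the contradiction.

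\textbf{Step 1: the old coalitions of the moved replicas.} By \Cref{def:replica-family}, no agent outside $R\cup\{\ga\}$ assigns a positive value to a replica. So agents left behind in the replicas' former coalitions are weakly better off. The exceptions are other replicas of $\ga$ that were also stranded there, but those are moved as well.

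\textbf{Step 2: the new coalition $\pi^*(\ga)$.} I first check that the moved replicas gain. Each moved replica now receives $10$ from $\ga$ and from every other member of $R\cup\{\ga\}$. It also gets the same valuations that $\ga$ has towards the rest of $\pi^*(\ga)$. By the choice of $10$ as exceeding all positive values, and because $\ga$ itself presumably has non-negative utility by \Cref{lem_d2:balanced_coalitions} and Pareto-optimality, the moved replicas' utilities weakly increase. In fact they strictly increase, since a replica separated from $\ga$ misses at least the value $10$ from $\ga$, which cannot be compensated elsewhere. The origin $\ga$ strictly gains $10$ per added replica.

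Next I check the other members $b\in\pi^*(\ga)\setminus(R\cup\{\ga\})$. By clause 2(b) of the definition, $b$ values a replica at $0$ if $\mathbf{v}_b(\ga)\ge 0$, in which case $b$ is unaffected. If $\mathbf{v}_b(\ga)<0$, then $\mathbf{v}_b(\ga)$ is $-\infty$ in our construction, so $b$ already had negative utility in $\pi^*$.

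\textbf{Step 3: the obstacle.} The main obstacle is ruling out this last case. Here I would invoke the established structure of $\pi^*$:
\begin{itemize}
\item By \Cref{lem:_d2:G_comply,lem:_d2:A_comply}, every gadget complies with $\XX^*$, so each core agent lies either in $\MC^*$ or with its alternative agent.
\item By \Cref{lem_d2:L_notin_MC,lem_d2:L_clean,lem_d2:W_Z_clean,lem_d2:lw_lz_gf}, no coalition containing an origin holds an agent assigning it $-\infty$.
\end{itemize}
If some $b$ with $\mathbf{v}_b(\ga)=-\infty$ did share a coalition with $\ga$, the counting argument of \Cref{lem_d2:balanced_coalitions} (as used in \Cref{lem_d2:L_clean}) would fail. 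That is, I would check, gadget by gadget and for the $W$/$Z$/$L^w$/$L^z$ agents as well, that $\pi^*(\ga)$ contains no agent disliking $\ga$, using the maximal-utility bounds of \Cref{lem_d2:max_utils} to bound how many agents in the coalition can be positive.

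With Step 3 done, every agent is weakly better off and $\ga$ is strictly better off, so $\pi$ is a Pareto improvement, contradicting the Pareto-optimality of $\pi^*$.
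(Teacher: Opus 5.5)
Your overall strategy is the same as the paper's: once all gadgets comply with $\XX^*$, move misplaced replicas to their origin and exhibit a Pareto improvement. The paper moves all replicas at once and says the rest is easily seen from compliance. The trouble is Step 3, which you rightly flag as the crux but do not close. The lemmas you cite do not show that no coalition containing an origin has a member valuing it at $-\infty$. \Cref{lem_d2:W_Z_clean} only removes $\zand_i$ and $\gf_j$ from $\pi^*(\wand_i)$. So when $\XX^*(\GG_j)=\XX^*(\GG_k)=1$, nothing established so far prevents $\wand_i\in\MC^*$, where it conflicts with $\gt_i$ and $\gt_k$. Likewise, nothing prevents a stray replica such as $\ggf_i$ from sitting in $\MC^*$ next to $\gt_i$. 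In either case, moving $\ggt_i$ into $\MC^*$ can give it utility $-\infty$, so your move is not a Pareto improvement.

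Your fallback cannot rule these configurations out. A \Cref{lem_d2:balanced_coalitions}-style count, bounded via \Cref{lem_d2:max_utils}, is vacuous in $\MC^*$: since $X\subseteq\MC^*$ and $|X|>|N|/2$, $\MC^*$ has a positive majority regardless of how many $-\infty$ conflicts it contains. Separately, $u_{\ga}(\pi^*)\ge 0$ does not follow from \Cref{lem_d2:balanced_coalitions} and Pareto-optimality. It would follow from Step 3, because every $-\infty$ valuation in the construction is mutual.

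The missing idea is to reuse the comparison partition from the proof of \Cref{lem:_d2:G_comply} instead of counting. Fix a circuit $\CC_l$ with voter $v$.
\begin{itemize}
\item All gates of $\CC_l$ already comply with $\XX^*$ in $\pi^*$, so the same output agents are in the main coalition in $\pi^*$ and in that partition, with no $L$-agent there. Hence $v$ is indifferent.
\item Agents outside $C_l$ are weakly better off.
\item Every other agent of $C_l$ attains its bound from \Cref{lem_d2:max_utils}.
\end{itemize}
Therefore each agent of $C_l\setminus\{v\}$ must already attain its bound in $\pi^*$; otherwise that partition would be strictly more popular.

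Now take a replica in a gate gadget (including those in $\Wand_i\cup\Zand_i$) that is not with its origin. It has no family member in its coalition, so its utility is below $10$, because its positive values for other agents sum to less than $10$. All the relevant bounds are at least $10$, so this proves the lemma for these replicas directly, and the $-\infty$ conflicts above are excluded as a by-product.

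For the $A$-gadgets, \Cref{lem:_d2:A_coalition} covers $\Ta_i$ and $\Fa_i$. For $\La_i$, which that lemma does not cover, your Pareto move does work. By \Cref{lem_d2:L_clean,lem:_d2:A_comply,lem:_d2:A_coalition}, $\pi^*(\ls_i)$ consists only of $\La_i$-agents together with $\Fa_i$ or $\Ta_i$, and none of them dislikes $\ls_i$.
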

\begin{proof}
For replicas of $A$-agents, the statement holds by \Cref{lem:_d2:A_coalition}.
For replicas of $G$-agents, since we know that all gate and assignment gadget $\pi^*$-comply with $\XX^*$, it can be easily seen that if, in $\pi^*$, some replicas are not in the coalition of their origin agent, then it would be a Pareto improvement to extract all replica agents from their coalitions and place them with their origin agent.
Similar reasoning applies for replicas within $\Wand_i$ and $\Zand_i$ of And-gates.
\end{proof}

We now have a complete characterization of the structure of $\pi^*$. We will show that $\XX^*$ must be a Condorcet string for the instance $\fCC$.
\begin{proof}
Assume for contradiction that there exists another string $\XX'\neq\XX^*$ with 
\begin{equation}
\label{eq_d2:more_popular}
\sum_{i=1}^m\big(\sgn(\CC_i(\XX^*)-\CC_i(\XX'))\big)\leq 0    
\end{equation}
Consider the partition $\pi$ obtained from $\pi^*$ by rearranging the $A$-gadgets and $G$-gadgets such that they comply with $\XX'$, while maintaining that replicas stay with their origin agents; for $W$-, $Z$-, $\Lw$-, and $\Lz$-agents, we set their coalitions as defined for the partition $\pi$ in the proof of \Cref{lem:_d2:G_comply} for And-gates of the circuit $\CC_j$.
It is clear that all but the $V$-agents are indifferent between the partitions (the analysis is similar to that in the proof of \Cref{lem:_d2:G_comply}). Furthermore, by \Cref{eq_d2:more_popular}, and by definition of the utility function of the $V$-agents, we have that the number of $V$-agents who prefer $\pi$ over $\pi^*$ is at least as much as the number of $V$-agents who prefer $\pi^*$ over $\pi$, a contradiction to the strong popularity of $\pi^*$.
This concludes the proof that the existence of a strongly popular partition entails the existence of a Condorcet string.
\end{proof}

\section{Discussion}
We have shown that deciding whether an additively separable hedonic game admits a strongly popular partition is \PCW{}-complete.
This is the first natural complete problem for \PCW{}, and also the first completeness result for strong popularity in hedonic games that we are aware of.
Our result highlights an intriguing difference between strong popularity and other solution concepts like Nash, individual, and contractual stability, as well as weak popularity and the core, for which the corresponding existence question is typically complete for \NP{} or \Sig{2} in various cardinal hedonic games.
Here, we identify that the unambiguity property requires a less standard approach.
We suspect that this may extend to strong popularity in adjacent models, such as fractional and modified-fractional hedonic games.

\section*{Acknowledgements}
A preliminary version of the results developed in the present paper appeared as part of earlier versions of \cite{GGKN2025unambiguity}, when the two works formed a single manuscript.
We thank Paul W. Goldberg, Elias Koutsoupias, and Martin Bullinger for many insightful discussions and comments.
We thank Gencer Mert for spotting a technical error in a previous version of this paper.
The author was supported by the Engineering and Physical Sciences Research Council (EPSRC, grant EP/W524311/1).

\bibliographystyle{alpha}
\bibliography{ASHG_bibliography}

\end{document}